\theoremstyle{plain}
\newtheorem{theorem}{Theorem}
\newtheorem{thm}{Theorem}
\newtheorem{proposition}{Proposition}
\newtheorem{lemma}{Lemma}
\newtheorem{corollary}[thm]{Corollary}
\theoremstyle{definition}
\newtheorem{definition}{Definition}
\newcommand{\ip}[2]{\langle #1,#2 \rangle}
\newcommand{\bydef}{\stackrel{\bigtriangleup}{=}}
\newcommand{\di}{{_\diamond}}
\newcommand{\prob}[1]{\mathbb{P}\left[{#1}\right]}
\begin{document}

\title{Wireless Scheduling with Partial Channel State Information:
  Large Deviations and Optimality \footnote{This work was partially
    supported by NSF grants CNS-0721380, CNS-1017549, CNS-1161868,
    CNS-0721532, and EFRI-0735905. A preliminary version of this work
    appears in the proceedings of the 31st IEEE International
    Conference on Computer Communications (IEEE INFOCOM), Orlando,
    March 2012 \cite{GopCarSha12:largedev}.}}

\author{Aditya Gopalan\thanks{Department of Electrical Engineering,
    Technion - Israel Institute of Technology, Haifa 32000,
    Israel. Email: {{\tt aditya@ee.technion.ac.il}}} \and Constantine
  Caramanis \thanks{Department of Electrical and Computer Engineering,
    The University of Texas at Austin, Austin, Texas 78712,
    USA. Email: {{\tt constantine@utexas.edu}}} \and Sanjay Shakkottai
  \thanks{Department of Electrical and Computer Engineering,The
    University of Texas at Austin, Austin, Texas 78712,
    USA. Email: {{\tt shakkott@austin.utexas.edu}}} }


\maketitle

\begin{abstract}
  We consider a server serving a time-slotted queued system of
  multiple packet-based flows, where not more than one flow can be
  serviced in a single time slot. The flows have exogenous packet
  arrivals and time-varying service rates. At each time, the server
  can observe instantaneous service rates for only a \emph{subset} of
  flows (selected from a fixed collection of \emph{observable}
  subsets) before scheduling a flow in the subset for service. We are
  interested in queue-length aware scheduling to keep the queues
  short. The limited availability of instantaneous service rate
  information requires the scheduler to make a careful choice of which
  subset of service rates to sample. We develop scheduling algorithms
  that use only partial service rate information from subsets of
  channels, and that minimize the likelihood of queue overflow in the
  system. Specifically, we present a new joint subset-sampling and
  scheduling algorithm called \emph{Max-Exp} that uses only the
  current queue lengths to pick a subset of flows, and subsequently
  schedules a flow using the Exponential rule. When the collection of
  observable subsets is disjoint, we show that Max-Exp achieves the
  best exponential decay rate, among all scheduling algorithms that
  base their decision on the current (or any finite past history of)
  system state, of the tail of the longest queue. To accomplish this,
  we employ novel analytical techniques for studying the performance
  of scheduling algorithms using partial state, which may be of
  independent interest. These include new sample-path large deviations
  results for processes obtained by non-random, predictable sampling
  of sequences of independent and identically distributed random
  variables. A consequence of these results is that scheduling with
  partial state information yields a rate function significantly
  different from scheduling with full channel information. In the
  special case when the observable subsets are singleton flows, i.e.,
  when there is effectively no \emph{a priori} channel-state
  information, Max-Exp reduces to simply serving the flow with the
  longest queue; thus, our results show that to always serve the
  longest queue in the absence of any channel-state information is
  large-deviations optimal.  
\end{abstract}

\section{Introduction}
\label{sec:intro}
Next-generation wireless cellular systems such as LTE-Advanced
\cite{parast09:lteimt} and Wi MAX \cite{andrews} promise high-speed
packet-switched data services for a variety of applications, including
file transfer, peer-to-peer sharing and real-time audio/video
streaming. This demands effective scheduling in typical wireless
environments with time-varying channels and limited resources, to
guarantee high data rates to the users. Together with maximizing data
rates or throughput, the scheduling algorithm at the cellular base
station must keep packet delays in the system low, in order to support
highly delay-sensitive applications like real-time video streaming.

There has been much recent work to develop wireless scheduling
algorithms with optimal throughput and/or delay performance
\cite{taseph92,stol:async,stol:ldexp,sadiqgdv:pseudolog,vjvlin:ldlyapunov}. Such
opportunistic scheduling algorithms utilize instantaneous wireless
Channel State Information (CSI) from all users to make good scheduling
decisions.  However, in a practical situation with a large number of
users in the network, channel state feedback resources could
potentially be limited, i.e., it might be infeasible to acquire
complete instantaneous CSI from all channels due to bandwidth and
latency limitations. Instead, it might be possible to request CSI
feedback from only a \emph{subset of users} each time. Thus, it is
important to develop algorithms that can schedule using only partial
CSI rather than complete CSI, and at the same time afford the best
possible delay performance.

Using partial CSI -- from subsets of channels -- entails a new
dimension of opportunism in wireless scheduling. The scheduling
algorithm needs to make a careful choice of which subsets to sample,
together with how to use the sampled CSI for scheduling. Recently,
natural extensions of complete-CSI scheduling algorithms to the
partial-CSI setting have shown to have throughput-optimal properties
\cite{gopcarsha12:subsets}, yet it is not clear how they perform in
the sense of packet delays. The general structure of low-delay,
partial-CSI scheduling algorithms remains unknown, i.e., how an
algorithm should choose ``good'' subsets of channels, whether any
additional backlog or statistical information is needed for picking
subsets, and if so, how much, how users should be scheduled in the
observed subset etc.

In this work, we develop algorithms for wireless scheduling that use
only partial CSI, {\color{black} i.e.,} from subsets of channels, and
that also enjoy high performance guarantees. We consider a wireless
downlink where a base station schedules users using partial CSI from
subsets of channels.  Viewing the system queue lengths as a surrogate
for packet delays, we seek scheduling strategies that can keep the
longest queue in the system as short as possible, i.e., minimize the
likelihood of overflow of the longest queue. We {\color{black} design}
a new scheduling algorithm, that we term \emph{Max-Exp}, that obtains
partial CSI relying on just current queue lengths and no other
auxiliary information.  Employing sample-path large deviations
techniques, we show that when the observable channel subsets are
disjoint, Max-Exp yields the best decay rate for the longest-queue
overflow probability, across all scheduling strategies which use
subset-based CSI to schedule users. To the best of our knowledge, this
is the first work that analyzes queue-overflow performance for
scheduling with the information structure of partial CSI, and that
provides a simple scheduling algorithm needing no extra statistical
information which is actually rate-function optimal for buffer
overflow.

From a technical standpoint, sample-path large deviations techniques
have successfully been used to analyze wireless scheduling algorithms
\cite{berpastsi98:optcontrol,stol:ldexp,sadiqgdv:pseudolog,vjvlin:ldlyapunov};
yet, significant new analytical challenges emerge when studying the
large deviations behavior of scheduling strategies that cannot access
the full state of the system. A chief difference in this regard arises
from the fact that when scheduling is carried out by observing the
\emph{complete} state/randomness of the system, large deviations occur
depending on how the scheduler responds to atypical channel state
behavior. In other words, a natural cause-effect relationship between
the channel state process and scheduling actions is the basis for the
analysis of large deviations performance. On the other hand, when {\em
  partial} channel state is \emph{acquired} selectively by a
scheduling algorithm, this cause-effect sequence is reversed -- it is
the algorithm that first decides what part of the channel state to
sample; subsequently, this dynamic portion of the channel state can
respond by behaving atypically. Viewed differently, the scheduling
information structure no longer falls into an ``experts'' setting (all
channel rates known in advance) but rather into a ``bandit'' setting
(only chosen channel rates known) \cite{ManSha11:banditexpert},
implying a fundamental change in the large deviations
dynamics. Indeed, we are able to show that this difference results in
a significantly different rate function than that encountered in the
former complete-CSI case.

Also, the standard approach of analyzing queue overflow probability
exponents using continuity of queue-length/delays as functions of the
arrivals and channel processes
\cite{shakkottai:effective,yingsridull:ldwless} becomes cumbersome due
to the complex two-stage sampling and scheduling structure of
scheduling with partial CSI. Thus, we are led to develop new
sample-path large deviations results for processes with dynamically
(and predictably) sampled randomness, which help to bound the
resulting rate functions via connections to appropriate variational
problems. We believe that these techniques and results are of
independent interest as tools to analyze the behavior of scheduling
policies that can only sample parts of the system state.


\subsection{Related Work}
For scheduling with complete CSI, there is a rich body of work on
throughput-optimal scheduling algorithms, starting from the pioneering
approach of Tassiulas et al. \cite{taseph92} to develop the
Backpressure algorithm. A host of scheduling algorithms such as
Max-Weight/Backpressure \cite{taseph92,stol:async}, the Exponential
rule \cite{shasto01,shasristo04,stol:ldexp} and the Log rule
\cite{sadiqgdv:pseudolog} have been developed for scheduling using
full CSI. Many optimality results are now known for the
delay/queue-length performance of the above full-CSI algorithms. These
include expected queue length/delay bounds via Lyapunov function
techniques \cite{neemodroh05,Eryilmaz05stablescheduling}, tail
probability decay rates for queue lengths
\cite{stolram:lwdfld,vjvlin:structuralldp,sadiqgdv:pseudolog,shakkottai:effective,vjvlin:ldlyapunov,stol:ldexp,yingsridull:ldwless},
heavy-traffic optimality \cite{stolyar04} etc.

Throughput-maximizing scheduling has been studied with different forms
of partial CSI, including infrequent channel state measurements
\cite{sarkaretal:infrequent}, group/ random-access based quantized
channel state feedback \cite{gdv06,tangheath05}, optimal channel state
probing with costs \cite{chaproasnkar09,chli07}, delayed CSI
\cite{yinsha11:delayed} and subset-based CSI
\cite{gopcarsha12:subsets}. However, to date, neither the structure
nor performance results for queue overflow tails under scheduling with
partial CSI are known.

\subsection{Contributions}
We describe a new scheduling algorithm -- Max-Exp -- for scheduling
over a wireless downlink when Channel State Information (CSI) is
restricted to a collection of {\em observable channel
  subsets}. Max-Exp picks a subset of channels to observe their
states, depending on an appropriate exponentiated sum of the subset
queue lengths. Having done that, it uses the well-known Exponential
rule \cite{shasto01} to schedule a user from the subset using the
obtained CSI. Thus, Max-Exp \emph{does not} need any additional
information (e.g. traffic/channel statistics) other than queue lengths
to dynamically pick subsets, and only the instantaneous subset channel
states to schedule users.

Our main contributions can be summarized as follows:
\begin{enumerate}
\item We derive a lower bound on the rate function for overflow of the
  longest queue under the Max-Exp scheduling algorithm, using
  sample-path large deviations tools and their connection to
  variational optimal-control problems. A key technical contribution
  here is developing large deviations properties for processes
  obtained by predictably sampling independent and identically
  distributed (iid) sequences. These results help to show that the
  sample-path large deviations rate function, for algorithms that
  sample portions of the channel state, not only depends on the
  standard Cram\'{e}r empirical rate functions of the sampled
  portions, but also relies crucially on the \emph{sampling
    frequencies} of the portions.

  Conversely, we also show \emph{universal} (i.e., over \emph{all}
  scheduling algorithms that use partial, subset-based CSI) upper
  bounds on the rate function of queue overflow. Here again, a
  technical challenge arises due to the fact that for an
  arbitrary\footnote{In the context of this work, an arbitrary
    scheduling algorithm is to be understood as any map that is based
    on the current (or any finite past history) of system state.}
  scheduling algorithm, the large-deviations ``cost'' of buffer
  overflow depends crucially on its subset \emph{sampling behavior} --
  different scheduling algorithms could sample subsets with vastly
  differing frequencies resulting in potentially different costs to
  twist channel state distributions of subsets, and hence different
  rate functions. We develop a novel martingale-based technique to
  quantify this effect and derive a universal upper bound on the
  buffer overflow exponent.

\item In the case where the collection of observable subsets available
  to the scheduler is disjoint, we prove that the lower bound on the
  large deviations buffer overflow rate function for Max-Exp matches
  the uniform upper bound on the rate function over all algorithms.
  This not only characterizes the exact buffer overflow exponent of
  the Max-Exp algorithm, but also shows rather surprisingly that
  \emph{the simple Max-Exp strategy yields the optimal overflow
    exponent across all scheduling rules using partial
    CSI}\footnote{By optimal, we mean optimal among all scheduling
    algorithms that base their decision on the current (or any finite
    past history of) system state.}. As a side consequence, this shows
  that for scheduling with singleton subsets of users, merely
  scheduling the user with the longest queue at each time slot -- a
  greedy strategy when no CSI is available beforehand -- is
  large-deviations rate function-optimal.

  Technically, showing that the lower and upper bounds for the queue
  overflow rate function match involves solving a complex and
  non-convex variational problem arising from the rate function for
  predictably sampled random processes, and is another contribution of
  this work.
\end{enumerate}

  
\section{System Model}
\label{sec:notation}
This section describes the wireless system model we use along with its
associated statistical assumptions. We consider a standard model of a
wireless downlink system \cite{stol:async}: a time-slotted system of
$N$ users serviced by a single base station or server across $N$
communication channels. In each time slot $k \in \{0,1,2,\ldots\}$,
the dynamics of the system are governed by three primary components:
\begin{enumerate}
\item {\bf Arrivals:} An integer number of data packets $A_i(k)$
  arrives to user $i$, $i = 1,\ldots,N$. Packets get queued at their
  respective users if they are not immediately transmitted.
\item {\bf Channel states:} The set of $N$ channels assumes a random
  {\em channel state} $R(k)$, i.e. an $N$-tuple of integer {\em
    instantaneous service rates}. At time slot $k$, we denote the
  instantaneous service rates by $(R_1(k),\ldots,R_N(k))$.
\item {\bf Scheduling:} One user $U(k) \in \{1,\ldots,N\}$ is picked
  for service, and a number of packets not exceeding its instantaneous
  service rate is removed from its queue. Let $D_i(k)$ denote whether
  user $i$ is scheduled in time slot $k$ ($D_i(k) = 1$), or not
  ($D_i(k) = 0$). Then, user $i$'s queue length (denoted by
  $Q_i(\cdot)$) evolves as $Q_i(k+1) = [Q_i(k) + A_i(k) -
  D_i(k)R_i(k)]^+$, where $x^+ \equiv \max(x,0)$.
\end{enumerate}
We assume the following about the stochastics of the arrival
and channel state processes: \\
{\bf Assumption 1 (Arrivals):} Each user $i$'s arrival process
$(A_i(k))_{k=0}^\infty$ is deterministic and equal to $\lambda_i$ at
all time slots. This is done merely for notational simplicity --
{\color{black} any bounded, iid arrival process $(A_i(k))_{k=0}^\infty$
  works, with the only modification being the large-deviations rate
  function of $A_i(k)$ added to all the rate function expressions in
  the
  paper.} \\
{\bf Assumption 2 (Channel States):} The joint channel states $R(k)$,
$k = 0,1,2,\ldots$ are independent and identically distributed across
time, and take values from a finite set $\mathcal{R}$ of integer
$N$-tuples.  Note that the channel states can have \emph{any} joint
distribution and can thus be correlated \emph{across channels/users}. 

\noindent {\bf Scheduling Model:} Under scheduling with {\em partial
  channel state information}, a scheduling {\em algorithm} is defined
to be a rule that, at each time slot $k$, makes \emph{two} sequential
choices to schedule a user:
  \begin{itemize}
  \item {\em Step 1:} Pick a {\em subset} $S(k)$ of the $N$ channels,
    from a given collection $\mathcal{O}$ of {\em observable
      subsets}\footnote{The collection of observable subsets models
      the collection of subsets of channels for which the wireless
      scheduler can obtain instantaneous channel state information, as
      described in the Introduction (\ref{sec:intro}).}. This choice
    can depend on all random variables in time slots up to and
    including $k$ except the channel state $R(k)$.
    \item {\em Step 2:} Once the subset $S(k)$ of channels is chosen,
      the instantaneous service rates $(R_i(k))_{i \in S(k)}$ are
      revealed/available to the scheduling algorithm, and it chooses a
      user $U(k) \in S(k)$ for service, possibly depending on these
      service rates.
  \end{itemize}
  Note that at each time, the channel state information available to
  the scheduling algorithm is restricted to the chosen subset $S(k)$
  of channels, as opposed to the full CSI case where all the service
  rates $(R_i(k))_{i=1}^N$ are available. For further detailed
  discussion about this scheduling model and how it abstracts limited
  channel state information at the wireless physical layer, etc., we
  refer the reader to \cite{gopcarsha12:subsets}.

\section{Objective, Algorithms and Main Results}
\label{sec:mainresults}
Our focus is to design scheduling algorithms that reduce the
likelihood of large queues in the system. Specifically, we seek to
minimize the stationary probability (when it exists) that the longest
queue in the system $||Q(k)||_\infty \bydef \max_{i} Q_i(k)$ exceeds a
threshold $n$. Alternatively, our goal is to maximize the
\emph{exponent} or \emph{decay rate} of the exceedance probability
\[ I \bydef - \lim_{n \to \infty} \frac{1}{n} \log
\mathbb{P}\left[||Q(k)||_\infty \geq n \right]\] (when the limit
exists), for scheduling algorithms that {\em observe only partial
  channel state} while scheduling. Note that for large $n$,
$\mathbb{P}[||Q(k)||_\infty \geq n] \approx e^{-n I}$, so maximizing
the exponent $I$ gives smaller overflow probabilities. Also, it is
well-known that packet delays are closely related to queue lengths
\cite{vjvlin:ldlyapunov}, which justifies using $I$ as our performance
objective.


With this objective in mind, we introduce a new scheduling algorithm
\emph{Max-Exp} (Algorithm \ref{alg:maxexp}). The algorithm may be
interpreted as locally (in Step 2) using the Exponential scheduling
rule \cite{shasristo02,shakkottai:exprule,shasristo04,shasto01}, and
globally (in Step 1) using the Exponential rule metric without the
(observed) instantaneous rate to pick a subset of channels. 

It is well-known that in case the entire set of channels is observable
(i.e., the full-information setting), the Exponential rule maximizes
the exponent of the queue overflow probability \cite{stol:ldexp},
hence it is a natural candidate for the in-subset scheduling rule in
Step 2 of the Max-Exp algorithm. The rule used to choose subsets in
Step 1 is chosen so as to match the in-subset Exponential rule, and
guarantees properties that are required in the fluid limit scaled
description of the dynamics in order to show our main result.

\begin{algorithm}[htbp]
  \caption{Max-Exp}
  \label{alg:maxexp}
  At each time slot $k$, breaking ties arbitrarily,
  \begin{enumerate}
  \item Choose a subset $S(k)$, from the collection $\mathcal{O}$ of
    observable subsets, such that
    $$\sum_{i \in S(k)} \exp \left(\frac{Q_i(k)}{1 +
      \sqrt{\overline{Q}(k)}}\right)$$ is maximized (here
    $\overline{Q}(k) \bydef \frac{1}{N} \sum_{i=1}^N Q_i(k)$ is the
    length of the average queue at time slot $k$).

  \item Schedule a user $i \in S(k)$ such that $R_i(k) \exp
    \left(\frac{Q_i(k)}{1 + \sqrt{\overline{Q}(k)}}\right)$ is
    maximized (the Exponential rule \cite{shasto01}).
  \end{enumerate}
\end{algorithm}

By our probabilistic assumptions on the channel state process, Max-Exp
makes the vector process of queue lengths at each time a discrete-time
Markov chain. Following standard convention
\cite{stol:async,neemodroh05,Eryilmaz05stablescheduling}, we term the
set of arrival rates $\lambda \equiv (\lambda_i)_{i=1}^N$ for which
this Markov chain is positive-recurrent as the \emph{throughput
  region} of Max-Exp. To not deviate from the main focus of this work,
we state that {\color{black} when the observable subsets are
  disjoint,} the throughput region of Max-Exp contains that of any
other scheduling algorithm, i.e., Max-Exp is
\emph{throughput-optimal}\footnote{We mean throughput-optimal among
  all scheduling algorithms that base their decision on the current
  (or any finite past history of) system state.}. {\color{black} The
  proof of throughput-optimality is analogous to that of the
  Max-Sum-Queue scheduling algorithm \cite{gopcarsha12:subsets}.} Our
main result states that Max-Exp yields the best (exponential) rate of
decay of the tail of the longest queue over all strategies that use
partial CSI from disjoint subsets:

\begin{theorem}[{\color{black} Large Deviations Optimality of Max-Exp}]
\label{thm:overallmaxexp}
Let the system's arrival rates $\lambda$ lie in the interior of the
throughput region of the Max-Exp scheduling algorithm. There
exists $J_* > 0$ such that the following holds.
\begin{enumerate}
\item Let $\mathbb{P}$ denote the stationary probability distribution
  that the Max-Exp algorithm induces on the vector of queue lengths.
  Then,
  \[ -\limsup_{n \to \infty} \frac{1}{n} \log \mathbb{P}\left[||Q(0)||_\infty \geq n \right] \geq J_*.\]
\item Let $\pi$ be an arbitrary scheduling rule\footnote{An arbitrary
    scheduling rule is any map that is based on the current (or any
    finite past history) of system state.} that induces a stationary
  distribution $\mathbb{P}^\pi$ on the vector of queue lengths. If the
  system of observable subsets $\mathcal{O}$ is disjoint, then
  \[ -\liminf_{n \to \infty} \frac{1}{n} \log
  \mathbb{P}^\pi\left[||Q(0)||_\infty \geq n\right] \leq J_*.\] 
\end{enumerate}
\end{theorem}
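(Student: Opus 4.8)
The plan is to establish the two halves of the theorem separately, with the common quantity $J_*$ emerging as the optimal value of a single variational problem. The key observation is that both the achievability (Max-Exp) bound and the universal (all-algorithms) bound reduce, via fluid-limit and sample-path large deviations arguments, to an optimal-control problem over trajectories of the queue-length fluid model, and that the cost functional in that problem is governed by the large-deviations rate function for \emph{predictably sampled} i.i.d.\ channel-state sequences announced in the Contributions section. Since the subsets in $\mathcal{O}$ are disjoint, the queues belonging to different observable subsets decouple at the fluid scale: within any fixed subset $S\in\mathcal{O}$, exactly one channel's rate is observed per slot, the scheduler must split its (normalized) sampling effort among the subsets, and the Cram\'er rate function of the empirical channel-rate process of $S$ is inflated by the reciprocal of the long-run fraction of time $S$ is sampled. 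So $J_*$ should come out as the minimum over subsets $S$, over sampling frequencies, and over fluid depletion trajectories of the longest queue in $S$, of the corresponding action integral.

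For part (1), the lower bound on the Max-Exp decay rate, I would follow the now-standard route (as in \cite{stol:ldexp,vjvlin:ldlyapunov}): first show that along the fluid scaling, Max-Exp's Step-1 subset choice and Step-2 Exponential rule force the fluid queue-length trajectory to obey a specific differential inclusion — in particular, the $Q_i/(1+\sqrt{\overline{Q}})$ metric makes the longest queues within the sampled subset get served, which at the fluid level equalizes queues inside each subset and drains them at the full aggregate observed rate. Then, invoking the predictable-sampling large deviations result, bound below the cost of any fluid path leading to overflow by the value of the variational problem, obtaining $-\limsup \frac1n \log \mathbb{P}[\|Q(0)\|_\infty \ge n] \ge J_*$. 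The Freidlin--Wentzell-type machinery converting sample-path rate-function lower bounds into a lower bound on the stationary overflow exponent (via the return-time / regeneration structure of the positive-recurrent chain, using that $\lambda$ is interior to the throughput region) is routine and I would cite it.

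For part (2), the universal upper bound, I would fix an arbitrary scheduling rule $\pi$ and exhibit \emph{one} cheap overflow event whose probability is at least $e^{-n(J_*+o(1))}$. Here the disjointness of $\mathcal{O}$ is essential: I would identify the subset $S^*$ and sampling-frequency/trajectory triple achieving $J_*$, and then construct an atypical channel-state behavior on $S^*$ — under which, no matter how $\pi$ samples and schedules, the queues in $S^*$ cannot all be kept below $n$. The martingale-based technique alluded to in the Contributions is what quantifies the cost $\pi$ must pay to twist the channel distribution of a subset it samples with a given frequency; I would use a change-of-measure (tilting the i.i.d.\ channel law of $S^*$) together with an optional-stopping/uniform-integrability argument to lower bound $\mathbb{P}^\pi[\|Q(0)\|_\infty\ge n]$ by the twisted probability times the likelihood-ratio penalty, which is exactly $e^{-nJ_*}$ up to subexponential factors, yielding $-\liminf \frac1n\log\mathbb{P}^\pi[\cdots]\le J_*$.

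The main obstacle I anticipate is in part (2): showing that the universal upper bound coincides with the Max-Exp lower bound requires solving the non-convex variational problem that defines $J_*$ and showing its optimizer can be \emph{forced} against an adversarial scheduler. Concretely, one must prove that an arbitrary $\pi$ cannot do better by cleverly time-sharing among several subsets or by adapting its sampling to the realized (partial) channel history — the martingale argument must be robust to this adaptivity, which is precisely where the ``bandit'' (rather than ``expert'') information structure bites. I expect the disjointness hypothesis to be used exactly to rule out cross-subset hedging, reducing the adversary's problem to a single subset and thereby matching the bounds; carrying this reduction through rigorously, and checking that the predictably-sampled rate function is lower semicontinuous enough for the infimum defining $J_*$ to be attained, will be the technical heart of the argument.
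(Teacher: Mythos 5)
Your high-level architecture matches the paper's: a sample-path large-deviations lower bound for Max-Exp built on a rate function for predictably sampled iid sequences, a Freidlin--Wentzell extension to the stationary measure, a change-of-measure-plus-martingale universal upper bound, and a final matching step via a non-convex variational problem. However, two genuine gaps remain. First, in part (1) you propose to work at the standard $O(n)$ fluid scale and read off Max-Exp's dynamics as a differential inclusion there. This fails: Max-Exp is not scaling-invariant, and its subset-selection and Exponential-rule decisions are driven by $O(\sqrt{n})$ fluctuations of the queues, so the $O(n)$ fluid limit washes out exactly the information needed to make the lower bound tight. The paper has to introduce a refined Mogulskii-type theorem over the $u(n)=\lceil\sqrt{n}\rceil$ timescale (with a variable discretization tied to the subset-sampling clock), Generalized Fluid Sample Paths carrying a refined cost $\bar{J}$, and then Local Fluid Sample Paths obtained by recentering $\tilde{Q}_i = Q_i - b_i\sqrt{\bar{Q}}$ and magnifying around a time $\tau$; only at that local scale do the key structural properties of the Exponential rule (properties (10)--(12) of Lemma 4) hold and allow the cost per unit increase of the potential $\Phi(\di q)=\log\max_\alpha\sum_{i\in\alpha}e^{\di q_i+b_i}$ to be controlled. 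Without this, your lower bound would be strictly weaker than the upper bound and the two would not match.

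Second, your part (2) construction --- tilting the channel law of a single optimal subset $S^*$ and arguing overflow is forced --- is not how the paper's universal bound works, and the reduction ``to a single subset'' is incorrect in general: the paper tilts \emph{every} observable subset to a twisted distribution $\phi_\alpha$ with $\lambda\notin\mathcal{CH}((V_{\phi_\alpha})_\alpha)$ (setting the non-active subsets to their natural distributions, hence zero cost, only at the matching stage), and the set of ``active'' subsets $\mathcal{O}^*$ in the optimizer can have several elements. The likelihood-ratio penalty then depends on the \emph{realized} sampling frequencies $c_\alpha(t_n)/t_n$ of the adaptive policy, which is why the bound is a supremum over frequency vectors; the Azuma--Hoeffding/Doob--Meyer step is what pins $m_\alpha(t)\approx \phi_\alpha c_\alpha(t)$ uniformly over adaptive policies. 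Most importantly, the step you defer as ``the technical heart'' --- showing that this supremum over frequencies, evaluated at the optimal twist, does not exceed $J_*$ --- is the actual content of Proposition 6 and Lemma 8 (and their subset analogues in Appendix F): one must prove that the frequency vector equalizing the growth rates of all active queues is the worst case, via a first-order condition on the minimizing twist and a monotonicity argument for a quotient of affine functions. As written, your proposal asserts that the bounds match but contains no argument for it.
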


Thus, Max-Exp has the optimal large-deviations exponent (equal to
$J_*$) over all stabilizing scheduling policies with subset-based
partial channel state information.

Theorem \ref{thm:overallmaxexp} highlights the striking property that
Max-Exp, using only current queue length information to sample channel
subsets and the Exponential rule to schedule a sampled channel, yields
the \emph{fastest decay} of the buffer overflow probability across the
whole spectrum of partial-CSI scheduling algorithms -- including those
that potentially use additional statistical information, traffic
characteristics etc. The crucial scheduling step in Max-Exp is Step 1,
which essentially samples the ``right'' channel subset depending on
queue lengths. The result shows that queue length feedback is
sufficient to guarantee good delay performance, provided suitable
subsets of channel states are sampled as with the Max-Exp scheduling
algorithm. We remark that the optimality of Max-Exp continues to hold
even when all queue lengths are delayed by any bounded
amount. {\color{black} We also remark that the restriction to disjoint
  observable subsets is necessary since otherwise, even
  throughput-optimality of Max-Exp-style scheduling rules does not
  hold \cite{gopcarsha12:subsets}. This is briefly because the
  geometry of the throughput region is fundamentally different when
  subsets are disjoint, and its properties play a key role in the
  optimality proof for Max-Exp here.}

En route to proving Theorem \ref{thm:overallmaxexp}, we develop lower
bounds for the large deviations exponents of partially and
deterministically sampled iid processes, that are of independent
interest. This results in a new rate function formulation in terms of
variational optimization, that differs significantly from existing
rate functions
\cite{vjvlin:structuralldp,sadiqgdv:pseudolog,shakkottai:effective,vjvlin:ldlyapunov,stol:ldexp,yingsridull:ldwless}
by explicitly incorporating partial channel state sampling
behavior. Standard optimal control approaches for the full-CSI case
cannot be applied to analyze partial-CSI scheduling algorithms --
since only a portion of the channel state is revealed to the
scheduler, the channel state process can cause large deviations by
behaving atypically just in the revealed portion, and not jointly as a
whole.

A related challenge arises in the process of finding universal upper
bounds on the decay rate for arbitrary partial-CSI scheduling
policies\footnote{In the context of this work, an arbitrary scheduling
  policy is to be understood as any map that is based on the current
  (or any finite past history) of system state.}. Recent
large-deviations work in full-CSI scheduling
\cite{vjvlin:ldlyapunov,stol:ldexp} accomplishes this by calculating
the ``cost'' of universal channel-state sample paths that cause buffer
overflow under any scheduling algorithm; however, this procedure fails
for algorithms actively sampling the channel state, since the cost of
such sample paths intimately depends on the subset sampling
behavior. To overcome this, we use a martingale-based argument in a
novel way with the standard exponential tilting method to prove
universal upper bounds on the exponent.

Observe that Max-Exp reduces to the following Max-Queue scheduling
algorithm when the observable subsets are all the singleton users:
\begin{algorithm}[htbp]
  \caption{Max-Queue}
  \label{alg:maxq}
  At each time slot $k$, breaking ties arbitrarily,
  \begin{enumerate}
  \item Schedule a user $i$ such that $Q_i(k)$ is maximized.
  \end{enumerate}
\end{algorithm}

Thus, an immediate corollary of Theorem \ref{thm:overallmaxexp} is the
following optimality result for Max-Queue when the observable user
subsets are restricted to singletons, i.e., when there is effectively
no CSI to use in scheduling:

\begin{corollary}[Large Deviations Optimality of Max-Queue for
  singleton observable subsets]
  \label{cor:overallmaxqueue}
  If the system's arrival rates $\lambda$ lie in the interior of the
  throughput region of the Max-Queue scheduling algorithm, then
  Max-Queue has the optimal\footnote{By optimal, we mean optimal among
    all scheduling algorithms that base their decision on the current
    (or any finite past history of) system state.} large-deviations
  exponent of the queue overflow probability over all stabilizing
  scheduling policies that can sample only individual channel states.
\end{corollary}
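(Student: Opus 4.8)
The plan is to obtain Corollary~\ref{cor:overallmaxqueue} as a direct specialization of Theorem~\ref{thm:overallmaxexp}, taking the collection of observable subsets $\mathcal{O}$ to consist of all $N$ singletons $\{1\},\ldots,\{N\}$. The first step is to verify that, for this choice of $\mathcal{O}$, the Max-Exp algorithm reduces to Max-Queue. In Step~1 of Max-Exp the only admissible subsets are the singletons $S(k)=\{i\}$, for which the subset metric collapses to the single term $\exp\!\big(Q_i(k)/(1+\sqrt{\overline{Q}(k)})\big)$; since $t \mapsto \exp\!\big(t/(1+\sqrt{\overline{Q}(k)})\big)$ is strictly increasing, maximizing it over singletons amounts to choosing a user with the longest queue at time $k$. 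Step~2 is then vacuous (the chosen singleton contains exactly one user), so Max-Exp schedules one of the longest queues at every slot --- precisely the Max-Queue rule. A consequence is that Max-Queue and Max-Exp (with this $\mathcal{O}$) induce the same queue-length Markov chain, hence the same throughput region and, when it exists, the same stationary distribution $\mathbb{P}$.

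The second step is to check the hypotheses of Theorem~\ref{thm:overallmaxexp}. Disjointness is automatic: any family of singletons is pairwise disjoint, so part~2 of the theorem is available. The requirement that $\lambda$ lie in the interior of the throughput region of Max-Exp is exactly the stated hypothesis of the corollary, since by the first step that region coincides with the throughput region of Max-Queue. I would also observe that an ``arbitrary scheduling rule'' in the sense of Theorem~\ref{thm:overallmaxexp}, when $\mathcal{O}$ is the singletons, is exactly a policy that can sample only individual channel states --- and, since each $R_i(k)$ is revealed only after the scheduler has committed to $\{i\}$, such a policy effectively has no a priori channel-state information --- so the universality in part~2 of the theorem matches the policy class in the corollary.

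The final step is to combine the two halves of Theorem~\ref{thm:overallmaxexp}. Part~1 gives, for the Max-Queue stationary distribution $\mathbb{P}$,
\[ -\limsup_{n\to\infty}\frac{1}{n}\log\mathbb{P}\big[\,||Q(0)||_\infty\ge n\,\big]\ \ge\ J_*, \]
and part~2 gives, for every stabilizing policy $\pi$ that samples only individual channel states,
\[ -\liminf_{n\to\infty}\frac{1}{n}\log\mathbb{P}^\pi\big[\,||Q(0)||_\infty\ge n\,\big]\ \le\ J_*, \]
so Max-Queue attains the largest achievable queue-overflow decay rate $J_*$ and is therefore large-deviations rate-function optimal in this regime. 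I do not expect a genuine obstacle here: the entire content of the corollary is already carried by Theorem~\ref{thm:overallmaxexp}, and the only care required is the bookkeeping above --- the reduction of Max-Exp to Max-Queue, and confirming that the policy class and the throughput-region hypothesis transfer correctly under the specialization to singleton subsets.
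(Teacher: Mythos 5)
Your proposal is correct as a formal derivation, and the reduction you carry out (Max-Exp with $\mathcal{O}$ the singletons collapses to Max-Queue, since the subset metric in Step~1 of Algorithm~\ref{alg:maxexp} becomes a strictly increasing function of the single queue length and Step~2 is vacuous) is exactly the observation the paper makes just before stating Corollary~\ref{cor:overallmaxqueue}. However, the paper's actual proof of this corollary runs in the opposite direction from yours: as announced in the ``road map,'' the Max-Queue case is established \emph{first}, directly and self-containedly, in Section~\ref{sec:singleton} via Propositions~\ref{prop:lbT}, \ref{prop:stationarylb}, \ref{prop:lbmaxq}, \ref{prop:ubany} and \ref{prop:mqopt}, all at the ordinary $O(n)$ fluid scale; Theorem~\ref{thm:overallmaxexp} for general disjoint subsets is then obtained by \emph{extending} that argument with the heavier $O(\sqrt{n})$ local-fluid machinery (GFSPs, LFSPs, the refined Mogulskii bound). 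Your specialization route is not circular, because the general-subset proof in Section~\ref{sec:general} and Appendix~\ref{app:mexpopt} does not presuppose the corollary as a statement (it only reuses the optimization argument of Lemma~\ref{lem:fsmin}), and your bookkeeping on disjointness, the coincidence of throughput regions, and the matching of the policy classes is all sound. What your route buys is brevity once the full theorem is in hand; what the paper's route buys is that the singleton case is proved with strictly simpler tools (Max-Queue is scaling-invariant, so no local-fluid analysis is needed), it isolates the essential new phenomenon --- the dependence of the rate function on sampling frequencies --- in the cleanest setting, and it supplies the key lemma on which the general case is subsequently built.
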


{\em Road map to prove Theorem \ref{thm:overallmaxexp}:} Though
Theorem \ref{thm:overallmaxexp} for Max-Exp is our chief result, we
prove it by first establishing the optimality result for Max-Queue
(Corollary \ref{cor:overallmaxqueue}), and then extending the argument
to the setting of general disjoint subsets. This is mainly because the
essence of the optimality lies in the key subset selection step, and
restricting attention to the case of singleton observable subsets
allows us to concentrate on how subset selection influences the large
deviations rate function of buffer overflow. Technically, another
reason for this order of working is that Max-Queue can naturally be
analyzed with the standard $O(n)$ fluid scaling, whereas showing the
optimality property for Max-Exp requires using a more delicate fluid
limit framework at the $O(\sqrt{n})$ ``local'' fluid time-scale
\cite{shasto01,stol:ldexp}.

\section{Preliminaries and Sample Path Large Deviations Framework}
\label{sec:ldframework}
This section lays down preliminaries for the sample-path large
deviations techniques we use to study overflow probabilities of
wireless scheduling algorithms. Much of this framework is standard in
large deviations analyses of wireless systems
\cite{vjvlin:ldlyapunov,stol:ldexp,sadiqgdv:pseudolog}, but we include
it for completeness.

Throughout this work, we denote by $(\Omega,\mathcal{F},\mathbb{P})$ a
common probability space that supports all defined random variables
and processes. Fix an integer $T > 0$, and consider a sequence of
(independent) queueing systems indexed by $n = 1,2,\ldots$, each with
its own arrival and channel state processes, and evolving as described
in Section \ref{sec:notation}. Henceforth, we explicitly reference by
the superscript $(n)$ any quantity associated with the $n$th
system. For any (possibly vector-valued) random process $X^{(n)}(k)$,
$k = 0,1,2,\ldots$ in the $n$th system, let us define its scaled (by
$1/n$), shifted and piecewise linear version $x^{(n)}(\cdot)$ on the
interval $[0,T]$ as follows: 
\[ x^{(n)}(t) = \left\{ \begin{array}{l}
    \frac{X^{(n)}(nt)}{n}, \quad \mbox{$nt$ an integer};\\
    \frac{X^{(n)}(\lfloor nt \rfloor)}{n} + \frac{X^{(n)}(\lceil nt \rceil) - X^{(n)}(\lfloor nt \rfloor)}{n(nt - \lfloor nt \rfloor)}, \\
    \mbox{otherwise}.
  \end{array}  \right. \]
In other words, we transform the discrete-time process
$X^{(n)}(\cdot)$ on $0,1,2,\ldots,nT$ to the piecewise linear and
continuous process $x^{(n)}(\cdot)$ on $[0,T]$ by (a) compressing time by
a factor of $n$, (b) scaling space by $\frac{1}{n}$
and (c) finally linearly interpolating between the discrete points. 

For the $n$th queueing system, with $k$ a nonnegative integer, we
define the following random processes central to our study of the
evolution of the system: 

\begin{itemize}
\item $F_i^{(n)}(k)$: The total number of packets to queue $i$ that
  arrived by time slot $k$,

\item $\hat{F}_i^{(n)}(k)$: The total number of packets that were
  served from queue $i$ by time slot $k$,

\item $C_\alpha^{(n)}(k)$: The total number of time slots before $k$
  when the observable subset $\alpha$ was chosen by the scheduling
  algorithm,

\item {\bf (Sub-state)} $R_\alpha^{(n)}(k)$: The vector of
  instantaneous service rates $R^{(n)}(k)$ restricted to the
  coordinates of $\alpha$, i.e., $R_\alpha^{(n)}(k) =
  (R_i^{(n)}(k))_{i \in \alpha}$,

\item $G_r^{\alpha,(n)}(k)$: The total number of time slots before
  time slot $k$ when the subset $\alpha$ was picked and its sub-state
  was $r$,

\item $\hat{G}_{ri}^{\alpha,(n)}(k)$: The number of time slots before
  time $k$ when subset $\alpha$ was picked, its observed sub-state was
  $r$ and queue $i \in \alpha$ was ultimately scheduled for service,

\item $Q_i^{(n)}(k)$: The length of queue $i$ at time slot $k$, whose
  evolution is specified in Section \ref{sec:notation},

\item $M^{(n)}(k)$: The (vector-valued) partial sums process
  corresponding to the sampled rates $R^{(n)}(k) \delta_{S(k)}$, i.e.,
  $M^{(n)}(k) \bydef \sum_{j=0}^k R^{(n)}(j) \delta_{S(j)}$. (Here,
  $\delta_S$ denotes the indicator vector of the subset $S$.)
\end{itemize} 

For right-continuous, non-decreasing functions $u: \mathbb{R} \to
\mathbb{R}$ and $v: \mathbb{R} \to \mathbb{R}$, we overload notation
and denote by $u$ and $v$ their respective induced Stieltjes measures
on $\mathbb{R}$, whenever the context is understood. Furthermore, when
$v \ll u$ (i.e., when $dv$ is absolutely continuous wrt $du$), we
denote by $\frac{dv}{du}$ the Radon-Nikodym derivative\footnote{The
  Radon-Nikodym derivative $\frac{dv}{du}$ is uniquely defined
  $du$-a.e.} of $v$ wrt $u$.

Suppose a sequence of scaled processes $f_i^{(n)}(\cdot)$,
$\hat{f}_i^{(n)}(\cdot)$, $c_\alpha^{(n)}(\cdot)$,
$g_r^{\alpha,(n)}(\cdot)$, $\hat{g}_{ri}^{\alpha,(n)}(\cdot)$,
$q_i^{(n)}(\cdot)$ and $m^{(n)}(\cdot)$ converges uniformly (over
$[0,T]$) to the corresponding ``limit functions'' $f_i(\cdot)$,
$\hat{f}_i(\cdot)$, $c_\alpha(\cdot)$, $g^\alpha_r(\cdot)$,
$\hat{g}^\alpha_{ri}(\cdot)$, $q_i(\cdot)$ and $m(\cdot)$ on $[0,T]$.
We call any such collection of joint limit functions, obtained via
appropriately scaled pre-limit sample paths, a \emph{Fluid Sample Path
  (FSP)} (we use the superscript $T$ to emphasize the finite horizon
$[0,T]$ if desired). We note that fluid sample paths inherit
Lipschitz continuity (with the same Lipschitz constant) from their
corresponding pre-limit processes indexed by $n$ (when the pre-limits
are Lipschitz-continuous), and are thus differentiable almost
everywhere. 

\noindent {\bf Note 1.} Wherever a scheduling algorithm is being
explicitly considered, we will use the term {\em valid FSP} to denote
an FSP that occurs with positive probability under the scheduling
algorithm.

\noindent {\bf Note 2.} We use $\dot{f}$ and $f'$ interchangeably, in
the paper, to denote the derivative of a (differentiable) function
$f$.

\section{Analysis: Singleton Subsets and Max-Queue}
\label{sec:singleton}
We first treat the simpler setting where the disjoint observable
subsets are all the singleton users in the system, i.e., $\mathcal{O}
= \{\{i\}: 1 \leq i \leq N\}$. We use the subscript $i$ to refer to
subsets $\alpha$. Thus, scheduling algorithms essentially become
sampling algorithms -- Step 2 of the algorithm is to schedule the lone
user whose channel state is observed. In what follows, we describe the
three key steps involved in showing that Max-Queue yields the optimal
decay rate of buffer overflow probability.

\subsection{Lower-bound for Max-Queue's Decay Rate}
Consider the queueing system operating under an arbitrary
\emph{nonrandom} scheduling algorithm, i.e., the algorithm's choice of
a singleton user in the current time slot is a deterministic function
of the entire history of observed users' indices and channel states,
and does not depend on the unobserved channel states in the
past\footnote{In formal terms, an arbitrary scheduling algorithm is to
  be understood as any map that is based on the current (or any finite
  past history) of system state.}.  Max-Queue with deterministic
tie-breaking (e.g., pick the lowest-indexed queue when there are two
or more longest queues) is an example of a nonrandom scheduling
algorithm, since the current user chosen depends on accumulated queue
lengths, which in turn depend directly on the channel rates obtained
as a result of past scheduling choices.

The sequence of observed users and their channel states under a
nonrandom scheduling algorithm is an outcome of sampling an iid
vector-valued process (i.e., the full channel state) in a nonrandom
and \emph{predictable} (i.e., with sampling indices depending only on
past observed history) manner. Our first key result (Proposition
\ref{prop:lbT}) essentially furnishes an upper bound for the deviation
probability of the queue-length process (equivalently the cumulative
process of observed channel states) in time slots $0, \ldots, nT$, in
terms of a novel sample-path large deviations rate function of the
user selection and channel state paths.

Let us fix $T > 0$. For $q_0 \in \mathbb{R}^N$, let
$\mathbb{P}_{q_0}^{n,T}$ be the probability measure of the $n$-th queueing
system conditioned on starting the system at $Q^{(n)}(0) = nq_0$ (i.e.
$q^{(n)}(0) = q_0$). If we denote by
$\mathcal{C}^+_\mathcal{L}([0,T])$ the space of nonnegative
$\mathbb{R}^N$-valued Lipschitz functions on $[0,T]$ equipped with
the supremum norm, then we have:

\begin{proposition}[Large Deviation Bound for a Finite Horizon]
\label{prop:lbT}
Let $\Gamma$ be a closed set of trajectories in
$\mathcal{C}^+_\mathcal{L}([0,T])$. Then, under any nonrandom
scheduling policy,
\begin{align}
-&\limsup_{n \to \infty} \frac{1}{n} \log \mathbb{P}_0^{n,T}\left[q^{(n)} \in \Gamma\right] \nonumber\\
\geq &\inf_{(m^T,c^T,q^T)} \quad \quad \int_0^T \left[\sum_{i=1}^N \dot{c_i}(t) \Lambda_i^* \left(\frac{dm_i}{dc_i}(t)\right)\right] dt \label{eqn:infT} \\
&\emph{subject to} \quad (m^T,c^T,q^T) \; \emph{a valid FSP}, \nonumber \\
&\hspace{2.2cm} q^T(0) = 0, q^T \in \Gamma, \nonumber 
\end{align}
with $\Lambda_i^*(\cdot)$ being the Legendre-Fenchel dual of
$\Lambda_i(\lambda) = \log \mathbb{E}[e^{\lambda R_i(0)}]$, i.e., the
Cram\'{e}r rate function for the empirical mean of the marginal rate
$(R_i(k))_{k}$.
\end{proposition}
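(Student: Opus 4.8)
The plan is to establish the finite-horizon large deviation lower bound by reducing it to a large deviation principle (LDP) for the vector of empirical processes $(m^{(n)}, c^{(n)}, \hat g^{(n)}, \dots)$, and then invoke the contraction principle to push this LDP through the (Lipschitz-continuous) map that reconstructs the scaled queue-length trajectory $q^{(n)}$. The key observation that makes this tractable is the one flagged in the text: under any \emph{nonrandom, predictable} scheduling policy, the sequence of observed sub-states is obtained by sampling the underlying i.i.d.\ channel-state process $R^{(n)}(\cdot)$ at indices that are predictable with respect to the observed history. First I would make precise the probabilistic structure: condition on the \emph{sampling schedule}, i.e.\ on the sequence $(C_i^{(n)}(k))_k$ (equivalently the realized $c^{(n)}$), and observe that, conditionally, the rates $R_i^{(n)}$ that queue $i$ sees are simply the first $C_i^{(n)}(nT)$ terms of an i.i.d.\ sequence with the marginal law of $R_i(0)$. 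This is where I would want a \emph{predictable-sampling} lemma in the spirit of the sample-path large deviations results the authors advertise developing — a statement that for a predictably sampled i.i.d.\ sequence, the scaled partial-sums process $m_i^{(n)}$, \emph{given} that the cumulative sampling-count process converges to $c_i$, satisfies an LDP on $[0,T]$ with good rate function $\int_0^T \dot c_i(t)\,\Lambda_i^*\!\big(\tfrac{dm_i}{dc_i}(t)\big)\,dt$ (with the convention that the rate is $+\infty$ unless $m_i \ll c_i$ in the Stieltjes sense). Morally this is Mogulskii's theorem run on the "sampling clock" $c_i$ rather than on real time $t$; the $\dot c_i(t)$ prefactor and the Radon–Nikodym derivative are exactly the change-of-variables one gets from reparametrizing Mogulskii by the (possibly degenerate, flat-in-places) time change $c_i$.

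Granting that per-channel conditional LDP, the next step is to assemble the joint rate function. Because the channels can be correlated, I would not want to treat the $m_i$ independently; instead I would run the argument at the level of the full observed sub-state process, i.e.\ track $g_r^{\alpha,(n)}$ (occupation counts of each sub-state value $r$) and recover $m_i^{(n)}$ as a linear functional thereof. The Legendre–Fenchel dual of the \emph{joint} log-moment generating function dominates the sum $\sum_i \dot c_i \Lambda_i^*(\cdot)$ (by subadditivity/Jensen, the marginal duals give the weakest—hence for a lower bound on the exponent, the \emph{smallest}—cost), so using $\sum_i \dot c_i(t)\Lambda_i^*(dm_i/dc_i)$ yields a \emph{valid} (possibly non-tight here, but that is fine — tightness is handled in the matching upper bound elsewhere) lower bound on $-\limsup \tfrac1n\log \mathbb P_0^{n,T}[\cdot]$. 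Then: the scaled queue process $q^{(n)}$ is a Lipschitz, in fact contractive, functional of $(f^{(n)}, m^{(n)})$ via the one-sided reflection/Skorokhod map in the recursion $Q_i(k+1) = [Q_i(k)+A_i(k)-D_i(k)R_i(k)]^+$ (here $f_i = \lambda_i t$ is deterministic, and $\hat f_i$ is governed by $m_i$ together with the scheduling allocations $\hat g_{ri}^\alpha$, which are themselves FSP components constrained by validity). The standard recipe — Varadhan/contraction principle for the lower bound, using that $\Gamma$ is closed and the reconstruction map continuous — then gives
\[
-\limsup_{n\to\infty}\frac1n\log\mathbb P_0^{n,T}[q^{(n)}\in\Gamma]\ \ge\ \inf\Big\{ \textstyle\int_0^T \sum_i \dot c_i(t)\Lambda_i^*\!\big(\tfrac{dm_i}{dc_i}(t)\big)\,dt : (m^T,c^T,q^T)\text{ a valid FSP},\ q^T(0)=0,\ q^T\in\Gamma\Big\},
\]
which is exactly \eqref{eqn:infT}. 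I would also need the routine bookkeeping that the constraint $q^T(0)=0$ is inherited from conditioning on $Q^{(n)}(0)=0$, and that exponential tightness of the family $\{q^{(n)}\}$ holds (immediate from boundedness of arrivals and rates, which makes all the pre-limit processes uniformly Lipschitz), so that the LDP lower bound over closed sets is legitimate and the infimum is attained / can be taken over FSPs.

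The main obstacle is the predictable-sampling LDP itself — getting Mogulskii-type sample-path large deviations for $m_i^{(n)}$ when the sampling times are \emph{adapted} to the data (so the classical i.i.d.\ / frozen-schedule argument does not directly apply) and when the limiting clock $c_i$ may be \emph{non-strictly-increasing} (queue $i$ may be starved for stretches of time, so $\dot c_i = 0$ on a positive-measure set, and the change of variables degenerates). The fix I anticipate: a change-of-measure / exponential-martingale argument — tilt each channel's increment by a predictable rate $\theta_i(k)$, form the Wald-type martingale $\exp(\sum \theta_i \Delta m_i - \sum \Lambda_i(\theta_i) \Delta c_i)$, optional-stop / take expectations to get the upper bound on the deviation probability, and then optimize the tilt pathwise; the predictability of the sampling indices is exactly what keeps this object a martingale. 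The degenerate-clock issue is handled by the Radon–Nikodym convention ($dm_i \ll dc_i$ forced, rate $=+\infty$ otherwise), which is consistent because during a starvation interval no new rate samples of channel $i$ arrive, hence $m_i$ is flat there too. Making this rigorous — in particular justifying the reparametrization by $c_i$ and the lower-semicontinuity of $(m_i,c_i)\mapsto\int \dot c_i \Lambda_i^*(dm_i/dc_i)$ in the product sup-norm topology — is where the real work lies; everything downstream is a standard contraction-principle assembly.
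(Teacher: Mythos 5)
Your outline reaches the right statement and correctly identifies the crux --- a Mogulskii-type sample-path bound for \emph{predictably} sampled iid sums, reparametrized by the sampling clock $c_i$ so that the cost becomes $\int \dot c_i\,\Lambda_i^*(dm_i/dc_i)$ --- but your route to that key lemma is genuinely different from the paper's. You propose proving it by an exponential-martingale / Wald-type tilting argument: tilt each sampled increment by a predictable $\theta_i(k)$, use predictability to keep $\exp(\sum\theta_i\Delta m_i-\sum\Lambda_i(\theta_i)\Delta c_i)$ a martingale, Chernoff-bound, and optimize pathwise over a discretization. That is viable (it is essentially how the paper handles the \emph{universal upper bound} in Proposition \ref{prop:ubany}), but it is not what the paper does here. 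The paper instead proves Lemma \ref{lem:equivprob}: for a nonrandom policy the sampled trace $W^{(n)}$ uniquely determines the sampling times, and by independence across time plus exchangeability its law under $\mathbb{P}^{n,T}_{q_0}$ coincides exactly with the law of prefix-sets $E(w)$ under a product measure $\hat{\mathbb{P}}^{(nT)}$ of genuinely iid copies. This reduces the problem to a vanilla application of Mogulskii's theorem on the iid measure, followed by an Arzel\`a--Ascoli/lower-semicontinuity extraction of a limiting FSP and a separate change-of-variables lemma (Lemma \ref{lem:wtomc}) converting $\int_0^{z_i}\Lambda_i^*(\dot w_i)\,dz$ into $\int_0^T\dot c_i\,\Lambda_i^*(dm_i/dc_i)\,dt$. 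The paper's route buys you the adaptivity issue for free and avoids having to re-derive a sample-path LDP from scratch; your route is more robust (it would survive without exchangeability) but leaves you the nontrivial work of discretizing, union-bounding over a net of $\Gamma$, and proving lower semicontinuity of the rate functional, which you correctly flag as ``where the real work lies.''

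Two smaller points. First, your intermediate claim that ``conditionally on the sampling schedule, the rates queue $i$ sees are the first $C_i^{(n)}(nT)$ terms of an iid sequence'' is not literally true --- the schedule is a function of the observed rates, so conditioning on it biases their law; your martingale argument sidesteps this (it never conditions), and the paper sidesteps it differently (the equivalence is at the level of the joint trace, not a conditional statement), but the sentence as written should not survive into a final proof. Second, your concern about cross-channel correlation is moot for this proposition: with singleton observable subsets only one channel is sampled per slot, so the sampled trace involves only marginals and the sum $\sum_i\dot c_i\Lambda_i^*$ is exact, not merely a valid lower bound.
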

Proposition \ref{prop:lbT} states that the ``correct'' sample-path
large deviations rate function, for algorithms that can sample only
singleton subsets of channels, is a combination of the standard rate
functions $\Lambda^*_i$ for the empirical means of individual channel
rates {\em weighted by the corresponding channel selection
  frequencies} $\dot{c}_i$. Note the crucial dependence of the rate
function on the subset selection process, captured by weighting
$\Lambda_i^*$ by $\dot{c}_i$ in (\ref{eqn:infT}) -- a significant
departure from the rate function studied for the standard case of full
channel state information where there is no pre-weighting by the
algorithm-dependent factor $\dot{c}$
\cite{vjvlin:ldlyapunov,stol:ldexp}. 

The proof of the proposition, presented in Appendix \ref{app:lbT},
relies on the key fact that the sample-path trajectory of any
nonrandom scheduling/sampling algorithm is completely determined by
only the \emph{sampled} user's index and the observed channel state at
all times, instead of the entire joint channel state process with
unobserved channel states. Also, since only one component of the joint
channel state is used at each instant, there is no loss of generality
in assuming that all the channel state processes are independent with
the original marginals. These two properties, together with
exchangeability of the channel state process, allow us to derive a
large deviations rate function for the random process of sampled
channel states, which is further transformed to the rate function
(\ref{eqn:infT}) as a function of empirical channel means and sampling
frequencies.

Having established a lower bound for the large deviations rate
function for the probability of queue overflow for a finite horizon
$T$ conditioned on a fixed starting state (Proposition
\ref{prop:lbT}), we now proceed to extend this result to the queue
overflow rate function for the \emph{stationary distribution} under
Max-Queue. Recall that a unique stationary distribution exists since
Max-Queue makes the irreducible and aperiodic system state Markov
chain positive recurrent \cite{gopcarsha12:subsets}. Intuitively, we
expect that the finite horizon probability distribution
$\mathbb{P}_{q_0}^{n,T}$ somehow ``tends'' to the stationary
distribution $\mathbb{P}$. Thus, we show that minimizing the right
hand side of (\ref{eqn:infT}) over all finite horizons $T > 0$ yields
a lower bound on this stationary overflow probability.

Such a procedure to extend finite horizon bounds to bounds on the
stationary probabilities has been developed earlier, using techniques
from Friedlin-Wentzell large-deviations theory
\cite{vjvlin:ldlyapunov,stol:ldexp}. A similar approach works in our
case, and for the sake of clarity we show only the crucial properties
for our model that are needed to obtain the result.

\begin{proposition}[Large Deviation Bound for the Stationary Distribution]
\label{prop:stationarylb}
Let $\mathbb{P}$ denote the stationary probability distribution of the
system state under the Max-Queue scheduling algorithm. Then,
\begin{align}
-&\limsup_{n \to \infty} \frac{1}{n} \log \mathbb{P}\left[||q^{(n)}(0)||_\infty \geq 1 \right] \nonumber \\ 
\geq &\inf_{T, (m^T,c^T,q^T)} \quad \quad \int_0^T \left[\sum_{i=1}^N \dot{c_i}(t) \Lambda_i^* \left(\frac{d{m_i}}{d{c_i}}(t)\right)\right] dt \nonumber \\
&\emph{subject to} \quad (m^T,c^T,q^T) \; \emph{a valid FSP in } [0,T], \nonumber \\
&\hspace{2.2cm} q^T(0) = 0, ||q^T(T)||_\infty \geq 1, \nonumber \\
&\hspace{2.2cm} T \geq 0. \label{eqn:stationarylb}
\end{align}
\end{proposition}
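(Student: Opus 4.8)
\emph{Proof plan.} The plan is to derive the stationary bound from the finite-horizon bound of Proposition~\ref{prop:lbT} through the now-standard Friedlin--Wentzell reduction of stationary overflow probabilities to finite-horizon excursion probabilities, carried out for closely related full-CSI models in \cite{stol:ldexp,vjvlin:ldlyapunov}. Since that reduction is routine, the work lies in supplying the ingredients it consumes for the Max-Queue/partial-CSI system: (a) compactness of the scaled sample paths together with lower semicontinuity of the functional in~(\ref{eqn:infT}); (b) a finite-horizon bound on the probability of \emph{ever} reaching a large state during a window started from a bounded state, obtained from Proposition~\ref{prop:lbT} with the closed trajectory set $\Gamma=\{\phi:\sup_t\|\phi(t)\|_\infty\ge 1\}$ and a bounded initial condition in place of $0$; and (c) a quantitative Foster--Lyapunov drift for Max-Queue. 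Existence and uniqueness of the stationary distribution, and validity of the regeneration argument below, follow from positive recurrence under the hypothesis that $\lambda$ lies in the interior of the throughput region \cite{gopcarsha12:subsets}. Ingredient (a) is immediate: arrivals are bounded (equal to $\lambda_i$) and rates lie in the finite set $\mathcal{R}$, so every pre-limit process is Lipschitz after the $O(n)$ scaling with a common constant, hence any sequence of scaled sample paths is precompact in $\mathcal{C}([0,T])$ by Arzela--Ascoli; the constraint sets in~(\ref{eqn:infT}) and~(\ref{eqn:stationarylb}) are closed, and $(m^T,c^T)\mapsto\int_0^T\sum_i\dot c_i\,\Lambda_i^*(dm_i/dc_i)\,dt$ is lower semicontinuous for the uniform topology by convexity of each $\Lambda_i^*$ and the Stieltjes/Radon--Nikodym structure.

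\emph{Stability of Max-Queue.} The substantive step is to show that Max-Queue admits a geometric Foster--Lyapunov drift, i.e., a Lyapunov function of the queue-length vector whose one-step conditional drift is bounded above by a strictly negative constant outside a fixed compact set of states. This sharpens the estimate underlying the throughput-optimality of Max-Queue \cite{gopcarsha12:subsets} and uses that $\lambda$ lies strictly inside the throughput region. Two consequences matter. First, geometric ergodicity follows, and with it an exponential tail for the return time $\tau_B$ to a fixed compact set $B$ of states; note that under the $O(n)$ scaling such a $B$ collapses to the origin. Second, the induced fluid model drains from every initial condition, so no valid FSP can reach $\{\|\cdot\|_\infty\ge 1\}$ from $0$ without incurring a strictly positive cost; hence the right-hand side of~(\ref{eqn:stationarylb}), which we write $J_*$, is strictly positive (and it is finite provided some overflow-causing channel configuration has positive probability).

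\emph{Assembling the stationary bound.} Fix a compact $B$ as above. A standard regeneration/Kac computation (as in \cite{stol:ldexp,vjvlin:ldlyapunov}) bounds $\mathbb{P}[\|Q(0)\|_\infty\ge n]$ by a constant times the expected value, started from $B$, of $\tau_B$ times the indicator that the excursion reaches some state with $\|Q\|_\infty\ge n$ before returning to $B$. Decomposing that excursion according to which scaled-time window $[jS,(j+1)S)$, $j=0,1,2,\ldots$, its length falls into, and applying Proposition~\ref{prop:lbT} on $[0,(j+1)S]$ with $\Gamma=\{\phi:\sup_t\|\phi(t)\|_\infty\ge 1\}$ and initial state in $B$ (whose scaled version tends to $0$; the uniform-over-bounded-initial-states version of Proposition~\ref{prop:lbT} follows from the same proof), the $j$-th term is at most $\exp(-n(\tilde J((j+1)S)-o(1)))$, where $\tilde J(\sigma)$ is the infimum in~(\ref{eqn:infT}) over valid FSPs on $[0,\sigma]$ with $q(0)=0$ and $\sup_t\|q(t)\|_\infty\ge 1$. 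Restricting any such FSP to the first time its sup-norm equals $1$ shows $\tilde J(\sigma)\ge J_*$ for every $\sigma$; together with the exponential tail of $\tau_B$ from the stability step, this makes the sum over $j$ of order $\mathrm{poly}(n)\,\exp(-n(J_*-o(1)))$. Hence $-\limsup_{n}\tfrac1n\log\mathbb{P}[\|Q(0)\|_\infty\ge n]\ge J_*$, which is precisely~(\ref{eqn:stationarylb}).

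\emph{Main obstacle.} I expect the crux to be the stability step: proving a \emph{quantitative}, geometric Foster--Lyapunov drift for Max-Queue, not merely positive recurrence. This single estimate supplies both the exponential tail of $\tau_B$, which is what allows the overflow excursion to be pinned to start near the origin and makes the regeneration sum converge, and the positivity and well-posedness of the variational problem on the right-hand side of~(\ref{eqn:stationarylb}). The verification should follow the throughput-optimality analysis for singleton observable subsets in \cite{gopcarsha12:subsets}, strengthened to a strictly-negative-drift statement; a secondary, routine point is the extension of Proposition~\ref{prop:lbT} to a bound that is uniform over bounded initial states and that applies to the supremum of the scaled queue process over $[0,T]$, which is handled simply by taking $\Gamma$ to be the corresponding closed level set.
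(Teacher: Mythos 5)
Your proposal is correct and follows the same overall Friedlin--Wentzell template as the paper (which itself defers to Stolyar's Theorem 8.4 and merely records the two model-specific lemmas that argument consumes), but you substitute a different ingredient for controlling long excursions. The paper supplies (i) a fluid-scaled hitting-time bound, $\limsup_n \sup_{y:\|q(y)\|_\infty\le c}\mathbb{E}_y\,\beta^{(n)}\le\Delta c$ for the first entrance time into $\{\|q\|_\infty\le\delta\}$, and (ii) the statement that the cost $K(c,\delta,T)$ of FSPs remaining above level $\delta$ throughout $[0,T]$ diverges as $T\to\infty$; item (ii) is what truncates the variational problem to bounded horizons. You instead establish a geometric Foster--Lyapunov drift for Max-Queue, deduce an exponential tail for the return time $\tau_B$ to a fixed compact set, and use that tail --- rather than the divergence of the excursion cost --- to make the sum over excursion-duration windows converge and be dominated by the bounded-horizon term. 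Both routes work; yours trades a large-deviations estimate on trajectories for a stronger ergodicity property of the chain (geometric rather than mere positive recurrence), which is indeed available here since increments are bounded and a linear Lyapunov function such as $\sum_i Q_i/\mathbb{E}[R_i(0)]$ has uniformly negative drift outside a compact set when $\lambda$ is interior to the throughput region. The remaining steps you outline (the Kac/regeneration bound, the uniform-over-bounded-initial-states version of Proposition \ref{prop:lbT} with $\Gamma$ the closed level set of the running supremum, and restriction of an FSP to its first hitting time of level $1$ so that $\tilde J(\sigma)$ dominates the right-hand side of (\ref{eqn:stationarylb})) are sound and consistent with the cited references.
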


The reader is referred to Appendix \ref{app:stationarylb} for the
proof details. 

Applying Proposition \ref{prop:lbT} with $\Gamma = \{q \in
\mathcal{C}^+_\mathcal{L}([0,T]): ||q(T)||_\infty \geq 1 \}$ gives a
finite-horizon lower bound for the rate function of longest-queue
overflow. For any FSP $(m^T,c^T,q^T)$ feasible in the RHS of
(\ref{eqn:stationarylb}), we have
\begin{align*}
 \int_0^T &\left[\sum_{i=1}^N \dot{c_i}(t) \Lambda_i^* \left(\frac{d{m_i}}{d{c_i}}(t)\right)\right] dt 
\geq \inf_{t \in \mathcal{B}} \frac{\sum_{i=1}^N \dot{c_i}(t) \Lambda_i^* \left(\frac{d{m_i}}{d{c_i}}(t)\right)}{\frac{d}{dt} ||q(t)||_\infty},
\end{align*}
with $\mathcal{B}$ denoting the (almost all) points in $[0,T]$ at
which all the relevant derivatives exist. Let us define
\begin{align*}
  J_{*} \bydef \inf_{\substack{T \geq 0,\\(m^T,c^T,q^T),\\0 \leq t \leq T}}  \frac{\sum_{i=1}^N \dot{c_i}(t) \Lambda_i^* \left(\frac{d{m_i}}{d{c_i}}(t)\right)}{\frac{d}{dt} ||q(t)||_\infty},
\end{align*}
with the infimum over all FSPs $(m^T,c^T,q^T)$ feasible for
(\ref{eqn:stationarylb}), all \emph{regular points} $t$, and \emph{all
  finite horizons $T$}. This results in the following (weaker) lower
bound on the rate function of Max-Queue's stationary queue overflow
probability:

\begin{proposition}[Lower bound for Max-Queue's Queue Overflow Rate Function]
\label{prop:lbmaxq}
\begin{align}
-&\limsup_{n \to \infty} \frac{1}{n} \log \mathbb{P}\left[||q^{(n)}(0)||_\infty \geq 1 \right] \geq J_{*}.\label{eqn:lbmaxq}
\end{align}
\end{proposition}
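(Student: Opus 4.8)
The plan is to obtain Proposition~\ref{prop:lbmaxq} as a routine \emph{weakening} of the finite-horizon variational bound in Proposition~\ref{prop:stationarylb}: I would replace the integrated large-deviations cost of a feasible FSP by the worst (smallest) \emph{instantaneous} ratio of ``cost rate'' to ``rate of increase of the longest queue'' along that FSP, and then optimize over everything. Concretely, write $h(t) \bydef \sum_{i=1}^N \dot{c_i}(t)\,\Lambda_i^*\!\left(\frac{dm_i}{dc_i}(t)\right)$ and $g(t) \bydef \frac{d}{dt}\|q(t)\|_\infty$, both defined for $t$ in the co-null set $\mathcal{B}$ of regular points. The only substantive step is the pointwise-to-integral inequality
\[
\int_0^T h(t)\,dt \;\ge\; \inf_{t \in \mathcal{B}}\frac{h(t)}{g(t)},
\]
which is exactly the inequality already quoted in the text for FSPs feasible in (\ref{eqn:stationarylb}); once this is in hand, taking the infimum over all such $(T,m^T,c^T,q^T)$ on both sides, invoking Proposition~\ref{prop:stationarylb}, and recognizing the definition of $J_*$ yields the claim immediately.

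To prove the pointwise-to-integral inequality I would argue as follows. First, $h(t)\ge 0$ for every $t\in\mathcal B$, since each $\dot c_i\ge 0$ (the $c_i$ are non-decreasing) and each Cram\'{e}r rate function $\Lambda_i^*$ is non-negative (because $\Lambda_i(0)=0$); hence $\int_0^T h\,dt\ge 0$, and the inequality is trivial if the right-hand infimum, call it $r$, satisfies $r\le 0$. Assume therefore $r>0$. Next, since $\|q(\cdot)\|_\infty$ is a finite maximum of Lipschitz functions it is itself Lipschitz, hence absolutely continuous, so the fundamental theorem of calculus gives $\int_0^T g(t)\,dt=\|q(T)\|_\infty-\|q(0)\|_\infty$; the FSP constraints $q(0)=0$ and $\|q(T)\|_\infty\ge 1$ then force $\int_0^T g(t)\,dt\ge 1$. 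Finally, on the set $E\bydef\{t\in\mathcal B:g(t)>0\}$ we have $h(t)\ge r\,g(t)$ by definition of $r$, so, using $h\ge 0$ on all of $[0,T]$ and $\int_{\{g\le 0\}}g\le 0$,
\[
\int_0^T h(t)\,dt \;\ge\; \int_E h(t)\,dt \;\ge\; r\int_E g(t)\,dt \;\ge\; r\int_0^T g(t)\,dt \;\ge\; r,
\]
which is the desired bound.

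Then I would assemble the pieces: for every finite $T$ and every FSP $(m^T,c^T,q^T)$ feasible in (\ref{eqn:stationarylb}), the displayed inequality shows $\int_0^T h\,dt\ge \inf_{t\in\mathcal B} h(t)/g(t)\ge J_*$, since $J_*$ is by definition the infimum of exactly this ratio over all feasible horizons, FSPs and regular points. Taking the infimum over $T$ and over feasible FSPs, the right-hand side of (\ref{eqn:stationarylb}) is $\ge J_*$, and Proposition~\ref{prop:stationarylb} then delivers $-\limsup_{n\to\infty}\frac{1}{n}\log\mathbb{P}[\|q^{(n)}(0)\|_\infty\ge 1]\ge J_*$.

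I expect the only genuine care to be in the measure-theoretic bookkeeping underlying the pointwise step: that $\mathcal B$ is indeed co-null (it is the finite intersection of the a.e.-differentiability sets of the Lipschitz coordinates $c_i,m_i,q_i$, intersected with the pullbacks along $c_i$ of the $dc_i$-a.e.\ sets on which the Radon--Nikodym derivatives are defined); that $\frac{d}{dt}\|q(t)\|_\infty$ equals $\max\{\dot q_i(t):q_i(t)=\|q(t)\|_\infty\}$ for a.e.\ $t$, so $g$ is well-defined a.e.; and that the change of variables identifying $\int_0^T \dot c_i(t)\,\Lambda_i^*\!\left(\tfrac{dm_i}{dc_i}(t)\right)dt$ with $\int \Lambda_i^*\,dc_i$ is legitimate. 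Together with the sign handling that discards the region $\{g\le 0\}$, these are the points that must be stated precisely; none is deep, and everything downstream is a pure weakening, so no new obstacle arises beyond what Propositions~\ref{prop:lbT} and~\ref{prop:stationarylb} already establish.
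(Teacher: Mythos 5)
Your proposal is correct and takes essentially the same route as the paper: the paper obtains Proposition~\ref{prop:lbmaxq} precisely by combining Proposition~\ref{prop:stationarylb} with the cost-per-unit-drift weakening $\int_0^T h(t)\,dt \ge \inf_{t\in\mathcal{B}} h(t)/g(t)$ stated in the text just before the proposition, and then recognizing the definition of $J_*$. Your explicit sign-handling (discarding $\{g\le 0\}$) and the use of $\int_0^T g\,dt = \|q(T)\|_\infty - \|q(0)\|_\infty \ge 1$ merely spell out details the paper leaves implicit.
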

Proposition \ref{prop:lbmaxq} is thus a ``cost per unit max-queue
drift'' lower bound on the decay rate of the queue overflow
probability under Max-Queue.

\subsection{Universal Large Deviations Upper Bound}
We next derive a \emph{uniform} upper bound for the stationary buffer
overflow probability decay rate, over all singleton-CSI scheduling
algorithms. A popular approach followed in recent work
\cite{stol:ldexp,vjvlin:ldlyapunov,sadiqgdv:pseudolog} to do this is
by estimating the cost of ``straight-line'' joint channel state sample
paths that universally cause buffer overflow. However, when only a
\emph{dynamically selected portion of the channel state} is visible to
the scheduling algorithm, the cost (\ref{eqn:infT}) of such
straight-line paths depends explicitly on the algorithm's sampling
behavior, so the standard approach fails.

For every $i$, let $\phi_i \geq 0$ denote a ``twisted'' mean rate for
channel $i$, and consider the quantity $\frac{\sum_{i}
  c_i'\Lambda_i^*(\phi_i)}{[\max_{i} \left(\lambda_i -
    c_i'\phi_i\right)]^+ }$. Here, we assume that $\sum_i c_i' = 1$,
and that the fraction is $\infty$ whenever the denominator is
$0$. Suppose a scheduling policy samples each channel $i$ with
frequency $c_i'$. Then, (a) the numerator of the above expression
corresponds to the ``instantaneous large deviations cost'' of
witnessing each channel $i$'s mean rate be $\phi_i$ (by
(\ref{eqn:infT})), while (b) the denominator can be interpreted as the
average rate with which the longest queue grows when each channel $i$
is sampled with a frequency $c_i'$. Maximizing the expression over all
possible user sampling/scheduling frequencies $\{c_i': \sum_i c_i' =
1, c_i' \geq 0 \}$ induced by scheduling algorithms should thus give
the highest possible large deviations cost for buffer overflow. This
intuition is formalized in the following key result:

\begin{proposition}[Universal Upper Bound on Decay Rate for any Algorithm]
\label{prop:ubany}
Let $\pi$ be a stabilizing scheduling policy\footnote{By a stabilizing
  scheduling policy $\pi$, we mean a scheduling rule that operates
  under the scheduling model described in Section \ref{sec:notation},
  and which makes the discrete time Markov chain of queue lengths
  aperiodic, irreducible and positive recurrent.} for the arrival rate
$\lambda = (\lambda_1, \ldots, \lambda_N)$, and let $\mathbb{P}^\pi$
be its associated stationary measure. For any $\phi_i \in
\mathbb{R}^+$, $i = 1, \ldots, N$,
\begin{align}
-\liminf_{n \to \infty} &\frac{1}{n} \log \mathbb{P}^\pi\left[||q^{(n)}(0)||_\infty \geq 1\right] \leq \sup_{\substack{\sum_i c_i' = 1\\c_i' \geq 0}} \frac{\sum_{i} c_i'\Lambda_i^*(\phi_i)}{[\max_{i} \left(\lambda_i - c_i'\phi_i\right)]^+}. \label{eqn:liminf}
 \end{align}
\end{proposition}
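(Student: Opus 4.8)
Write $J^{\phi}$ for the right-hand side of (\ref{eqn:liminf}). We may assume $J^{\phi}<\infty$, since otherwise there is nothing to prove. Unravelling the $\infty$-convention, $J^{\phi}<\infty$ forces $\Lambda_i^*(\phi_i)<\infty$ for every $i$ (otherwise $w=$ the $i$-th unit vector makes the ratio infinite) and $\max_i(\lambda_i-w_i\phi_i)>0$ for every $w$ in the simplex $\Delta\bydef\{w\in\mathbb{R}_+^N:\sum_i w_i=1\}$ (otherwise that $w$ makes the ratio infinite); by compactness the \emph{overload gap} $\delta\bydef\min_{w\in\Delta}\max_i(\lambda_i-w_i\phi_i)$ is then strictly positive. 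For each $i$, let $\theta_i\in\mathbb{R}$ be the finite maximizer in $\Lambda_i^*(\phi_i)=\theta_i\phi_i-\Lambda_i(\theta_i)$, and let $\tilde\mu_i$ be the corresponding exponential tilt of the law of $R_i(0)$, which has mean $\phi_i$. Since the scheduling policy $\pi$ is nonrandom, the sampled user $U(k)$ is \emph{predictable} (fixed before $R(k)$ is revealed), so
\[ L_m\;\bydef\;\prod_{k=0}^{m-1}\exp\!\big(\theta_{U(k)}R_{U(k)}(k)-\Lambda_{U(k)}(\theta_{U(k)})\big) \]
is a positive, mean-one $\mathbb{P}^{\pi}$-martingale (only the marginal of the sampled coordinate enters, so the joint law of $R(k)$ across channels is immaterial here). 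Let $\tilde{\mathbb{P}}$ be the change of measure with density $L_m$ on the history up to slot $m$; under $\tilde{\mathbb{P}}$, every time user $i$ is sampled its observed rate has conditional law $\tilde\mu_i$.

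\smallskip
\noindent Next I would control the tilted system on $[0,nT]$, fixing any $T>1/\delta$. Let $C_i^{(n)}(m)$ be the number of slots before $m$ with $U(k)=i$, and put $w_i:=C_i^{(n)}(m)/m\in\Delta$. The process $\sum_{k<m}\mathbbm{1}[U(k)=i]\,(R_i(k)-\phi_i)$ is a bounded-increment $\tilde{\mathbb{P}}$-martingale, so by Doob's maximal inequality and Azuma's bound the cumulative service drawn from queue $i$ equals $\phi_i\,C_i^{(n)}(m)+o(n)$, uniformly over $m\leq nT$, with $\tilde{\mathbb{P}}$-probability tending to one (henceforth, whp). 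Since $Q_i^{(n)}$ dominates its free, unreflected version ($[x]^+\geq x$), this gives, whp under $\tilde{\mathbb{P}}$,
\[ \|Q^{(n)}(m)\|_\infty\;\geq\;m\max_i\big(\lambda_i-w_i\phi_i\big)-o(n)\;\geq\;m\,\delta-o(n)\qquad(m\leq nT). \]
Hence the first-overflow time $\sigma_n\bydef\inf\{k:\|Q^{(n)}(k)\|_\infty\geq n\}$ satisfies $\tilde{\mathbb{P}}[\sigma_n\leq nT]\to 1$. Also, since the queue changes by $O(1)$ per slot, $\|Q^{(n)}(\sigma_n)\|_\infty=n+O(1)$ on $\{\sigma_n\leq nT\}$; applying the same inequality at $m=\sigma_n$ to a queue attaining $\max_i(\lambda_i-w_i\phi_i)$ shows that the fluid-scaled overflow time $\tau_n\bydef\sigma_n/n$ obeys $\tau_n\max_i(\lambda_i-w_i\phi_i)\leq 1+o(1)$ whp.

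\smallskip
\noindent Finally I would bound the likelihood ratio at $\sigma_n$ and change measure. Because $\theta_i\phi_i-\Lambda_i(\theta_i)=\Lambda_i^*(\phi_i)$, one has $\log L_m=\sum_i C_i^{(n)}(m)\Lambda_i^*(\phi_i)+\sum_{k<m}\theta_{U(k)}(R_{U(k)}(k)-\phi_{U(k)})$, and the last sum is again a bounded-increment $\tilde{\mathbb{P}}$-martingale, hence $o(n)$ uniformly over $m\leq nT$ whp. Evaluating at $m=\sigma_n$ (so $C_i^{(n)}(\sigma_n)=n\tau_n w_i$) and using the two displayed bounds together with $\max_i(\lambda_i-w_i\phi_i)\geq\delta>0$,
\[ \log L_{\sigma_n}\;=\;n\tau_n\sum_i w_i\Lambda_i^*(\phi_i)+o(n)\;\leq\;n\,\frac{(1+o(1))\sum_i w_i\Lambda_i^*(\phi_i)}{\max_i(\lambda_i-w_i\phi_i)}+o(n)\;\leq\;n\,J^{\phi}+o(n) \]
whp on $\{\sigma_n\leq nT\}$, the last step being just the definition of the supremum $J^{\phi}$ (the realized $w$ is feasible in $\Delta$ and has positive denominator, so the $[\cdot]^+$ is inactive). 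Let $B_n\in\mathcal{F}_{\sigma_n\wedge nT}$ be the event that $\sigma_n\leq nT$ and all the above whp estimates hold, so $\tilde{\mathbb{P}}[B_n]\to 1$. Since $L_m^{-1}$ is the density $d\mathbb{P}^{\pi}/d\tilde{\mathbb{P}}$, hence a $\tilde{\mathbb{P}}$-martingale, optional stopping at the bounded time $\sigma_n\wedge nT$ gives
\[ \mathbb{P}^{\pi}[\sigma_n\leq nT]\;\geq\;\mathbb{P}^{\pi}[B_n]\;=\;\mathbb{E}_{\tilde{\mathbb{P}}}\!\big[L_{\sigma_n}^{-1}\mathbbm{1}_{B_n}\big]\;\geq\;e^{-nJ^{\phi}-o(n)}\,\tilde{\mathbb{P}}[B_n]. \]
By stationarity and a union bound, $\mathbb{P}^{\pi}[\sigma_n\leq nT]\leq(nT+1)\,\mathbb{P}^{\pi}[\|Q^{(n)}(0)\|_\infty\geq n]$, so $\mathbb{P}^{\pi}[\|Q^{(n)}(0)\|_\infty\geq n]\geq(nT+1)^{-1}e^{-nJ^{\phi}-o(n)}$; taking $-\tfrac1n\log(\cdot)$ and $n\to\infty$ yields $-\liminf_n\tfrac1n\log\mathbb{P}^{\pi}[\|q^{(n)}(0)\|_\infty\geq1]\leq J^{\phi}$, which is (\ref{eqn:liminf}).

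\smallskip
\noindent\textbf{Main obstacle.} The crux is that $\pi$ is \emph{arbitrary}, so the sampling fractions $w=w^{(n)}$ are controlled by the policy, not by us, and need not converge; the standard device of computing the cost of a fixed \emph{straight-line} channel path does not apply. Building the martingale $L_m$ on the \emph{predictable} sampling indices is exactly what makes the estimate uniform over $\pi$: whatever the policy does, the realized fractions lie in $\Delta$, so the per-unit-drift cost $\sum_i w_i\Lambda_i^*(\phi_i)/\max_i(\lambda_i-w_i\phi_i)$ is automatically dominated by $J^{\phi}$. The second delicate ingredient is quantitative --- forcing the tilted system to overflow within a \emph{deterministic} $O(n)$ window, so that the $(nT+1)^{-1}$ prefactor is negligible on the exponential scale --- which is precisely where strict positivity of the overload gap $\delta$ (equivalently, finiteness of $J^{\phi}$) is used.
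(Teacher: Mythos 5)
Your proof is correct and follows essentially the same route as the paper's: an exponential tilt of the sampled channels' marginals built as a predictable-sampling martingale, Azuma--Hoeffding concentration to force overflow under the tilted measure within a deterministic $O(n)$ window, and a change of measure whose likelihood ratio is controlled by the realized sampling fractions and hence dominated by the supremum over the simplex. The only (harmless) differences are bookkeeping: the paper conditions on the empty initial state and evaluates the likelihood ratio at a deterministic modal overflow time $t_n$ with an explicit set of $\delta$-compatible fractions, whereas you work under the stationary law with a union bound over slots and apply optional stopping at the random first-overflow time.
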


\emph{Note:} Each choice of the twisted means $(\phi_i)_i$ above
yields such an upper bound on the decay rate. Thus, the best possible
upper bound is obtained by minimizing (\ref{eqn:liminf}) over all
choices $(\phi_i)_i$.

According to Proposition \ref{prop:ubany}, an upper bound on the
buffer overflow rate function when scheduling with partial channel
observability is the largest ``weighted-cost per unit increase of the
maximum queue,'' over all possible frequencies of sampling subsets of
channels. We emphasize that the maximization over the sampling
frequencies $c_i'$, in (\ref{eqn:liminf}), is a distinct feature that
emerges while considering partial information algorithms, as opposed
to the case where scheduling is performed with full joint CSI. 

We refer the reader to Appendix \ref{app:ubany} for the proof of
Proposition \ref{prop:ubany}.  At the heart of the proof of
Proposition \ref{prop:ubany} is a twisted measure construction where
each channel's marginal rate is $\phi_i$. Observing that the
cumulative fluid service process $m(\cdot)$ is a submartingale under
the twisted measure for any scheduling algorithm, the Doob-Meyer
decomposition \cite{dur05:probtebook} allows us to express $m(\cdot)$
as the predictable algorithm-dependent component $\phi_i c_i(\cdot)$
plus a martingale noise component $\bar{m}(\cdot)$. This shows that
with high probability, the service provided to each queue $i$ is
approximated by $\phi_i c_i(\cdot)$, i.e., we can effectively treat
each channel $i$ as having a deterministic fluid service rate of
$\phi_i$. Analyzing this deterministic fluid system for overflow and
translating the results back to the original probabilistic system
gives us the result.

\subsection{Large Deviations Optimality of the Max-Queue Policy:
  Connecting the Upper and Lower Bounds}
The final step in the proof of optimality of Max-Queue (Corollary
\ref{cor:overallmaxqueue}) is carried out by showing that the lower
bound for Max-Queue (\ref{eqn:lbmaxq}) in fact dominates the uniform
upper bound (\ref{eqn:liminf}) over all scheduling policies:
\begin{proposition}[Matching Large Deviations Bounds, Max-Queue,
  Singleton Subsets]
  \label{prop:mqopt}
  There exist nonnegative $\hat{\phi}_1,\ldots,\hat{\phi}_N$, with $\lambda
  \notin \mathcal{C}(\hat{\phi}_1,\ldots,\hat{\phi}_N)$, such that
  \begin{align*}
    \sup_{\substack{\sum_i c_i' = 1\\c_i' \geq 0}} \frac{\sum_{i}
      c_i'\Lambda_i^*(\hat{\phi}_i)}{\left[\max_{i} \left(\lambda_i -
        c_i'\hat{\phi}_i\right)\right]^+} \leq J_{*}.
  \end{align*}
\end{proposition}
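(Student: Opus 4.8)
The goal is to prove Proposition~\ref{prop:mqopt}: we need to exhibit twisted means $\hat\phi_i$ that are \emph{infeasible} for $\lambda$ (i.e.\ $\lambda \notin \mathcal C(\hat\phi_1,\dots,\hat\phi_N)$, meaning roughly that treating channel $i$ as a deterministic server of rate $\hat\phi_i$ still leaves the system unable to stabilize $\lambda$ under any allocation of sampling frequencies summing to $1$), and such that the universal upper-bound expression evaluated at these $\hat\phi_i$ is at most the Max-Queue lower bound $J_*$.

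My plan is to first understand $J_*$ structurally. The definition of $J_*$ is an infimum over FSPs, regular points, and horizons of a ratio ``instantaneous cost over max-queue drift.'' The key observation is that along a worst-case FSP the relevant derivatives $\dot c_i(t)$, $\tfrac{dm_i}{dc_i}(t)$, and $\tfrac{d}{dt}\|q(t)\|_\infty$ can, at a regular point, be taken constant (by a standard rescaling/concatenation argument showing the ratio infimum is achieved on ``linear'' segments), so $J_*$ reduces to a finite-dimensional optimization: minimize $\sum_i c_i' \Lambda_i^*(\phi_i) \big/ \big(\max_i(\lambda_i - c_i'\phi_i)\big)^+$ over $\{c_i' \ge 0,\ \sum_i c_i' = 1\}$ and $\{\phi_i \ge 0\}$, \emph{subject to the FSP being valid under Max-Queue}. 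The Max-Queue validity constraint is exactly what pins down the structure: under Max-Queue, queues that are not the current maximum cannot be served below a level that would let them drop behind, so in the relevant overflow FSP all queues grow at the \emph{same} rate and all are served; this forces $\lambda_i - c_i'\phi_i$ to be \emph{equal} across $i$ (call it the common drift $d > 0$), and forces every channel with $c_i' > 0$ actually to be sampled. Hence
\[
J_* = \inf_{\substack{c_i' \ge 0,\ \sum_i c_i' = 1\\ \phi_i \ge 0\\ \lambda_i - c_i'\phi_i \equiv d\ \forall i}} \frac{\sum_i c_i' \Lambda_i^*(\phi_i)}{d}.
\]
Let $(\hat c_i', \hat\phi_i, \hat d)$ be a minimizer of this program (existence by compactness after checking the minimizing $\phi_i$ stay bounded, since $\Lambda_i^*$ grows superlinearly).

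Next I would take these $\hat\phi_i$ as the twisted means in Proposition~\ref{prop:ubany} and show two things. (i) \textbf{Infeasibility:} $\lambda \notin \mathcal C(\hat\phi_1,\dots,\hat\phi_N)$. This holds because $\lambda_i = \hat c_i'\hat\phi_i + \hat d$ with $\hat d > 0$ and $\sum_i \hat c_i' = 1$; any stabilizing frequency allocation $(c_i')$ for the deterministic-rate-$\hat\phi_i$ system would need $c_i'\hat\phi_i \ge \lambda_i$ for all $i$, i.e.\ $c_i' \ge \hat c_i' + \hat d/\hat\phi_i$, summing to strictly more than $1$ — contradiction. (ii) \textbf{Matching the bound:} I must show $\sup_{\sum c_i' = 1,\ c_i' \ge 0} \frac{\sum_i c_i'\Lambda_i^*(\hat\phi_i)}{[\max_i(\lambda_i - c_i'\hat\phi_i)]^+} \le J_*$. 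Here the numerator is linear in $c'$ and the denominator is a max of affine functions, so the ratio is quasiconvex-ish; the supremum over the simplex is attained at a point. The crucial claim is that at the maximizing $c'$, the terms $\lambda_i - c_i'\hat\phi_i$ are again all equal (if one coordinate's term were strictly below the max, we could shift mass off that $c_i'$ onto it, decreasing the denominator's max or at least not increasing it while — by the optimality conditions defining $\hat\phi$ — not increasing the numerator too much); this ``equalization at the optimum'' is what forces the sup to coincide with the constrained infimum $J_*$. More precisely, I would argue: for any $c'$ on the simplex, $[\max_i(\lambda_i - c_i'\hat\phi_i)]^+ \ge \sum_i \mu_i(\lambda_i - c_i'\hat\phi_i)$ for any convex weights $\mu$, and by choosing $\mu$ to be the optimal dual weights from the $J_*$ program one gets $\max_i(\lambda_i - c_i'\hat\phi_i) \ge \hat d \cdot (\text{something} \ge 1)$ while $\sum_i c_i'\Lambda_i^*(\hat\phi_i) \le \sum_i \hat c_i'\Lambda_i^*(\hat\phi_i) \cdot(\text{same factor})$ via the KKT stationarity of $(\hat c', \hat\phi)$; dividing gives the claim.

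The main obstacle I anticipate is step (ii): showing the \emph{unconstrained} supremum over all sampling frequencies equals the Max-Queue-\emph{constrained} infimum. The upper bound (\ref{eqn:liminf}) ranges over all $c'$ (adversary picks sampling frequencies freely), while $J_*$ only ranges over $c'$ realizable as valid Max-Queue FSPs. The non-obvious fact is that the adversarial maximizer of (\ref{eqn:liminf}) automatically lands in the Max-Queue-feasible region — i.e.\ the worst sampling behavior for buffer overflow is precisely the equalizing one that Max-Queue itself induces. I expect this to follow from a careful KKT/Lagrangian analysis of both finite-dimensional programs, exploiting convexity of $\Lambda_i^*$ and the envelope structure of $\max_i(\cdot)$, and possibly from a minimax / saddle-point interchange between the $c'$-maximization and $\phi$-minimization. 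Getting the saddle-point conditions to line up — in particular ruling out corner solutions where some $\hat c_i' = 0$ or some channel goes unsampled — will require using the assumption that $\lambda$ is in the interior of the throughput region, which guarantees strict positivity of the relevant quantities and hence that the optima are interior and characterized cleanly by stationarity.
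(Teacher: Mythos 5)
Your architecture is the same as the paper's: reduce $J_*$ via the Max-Queue fluid dynamics to a finite-dimensional program with equalized drifts $\lambda_i - c_i'\phi_i \equiv w' > 0$, take the minimizer $\hat{\phi}$ of that program as the twisted means, verify $\lambda \notin \mathcal{C}(\hat{\phi})$ exactly as you do (your computation $\sum_i \lambda_i/\hat{\phi}_i = 1 + \hat{d}\sum_i 1/\hat{\phi}_i > 1$ is correct and matches the paper's observation that $w' > 0$ iff $\lambda \notin \mathcal{C}(\phi)$), and then show the unconstrained supremum over $c'$ is dominated by the constrained, equalized value. However, your step (ii) -- which you rightly flag as the heart of the matter -- is not actually closed, and the shortcut you sketch does not work as stated. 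Taking convex weights $\mu_i \propto 1/\hat{\phi}_i$ makes $\sum_i \mu_i(\lambda_i - c_i'\hat{\phi}_i)$ \emph{independent of $c'$} (equal to $\hat{d}$), so this route gives only the constant lower bound $\max_i(\lambda_i - c_i'\hat{\phi}_i) \geq \hat{d}$; meanwhile the numerator $\sum_i c_i'\Lambda_i^*(\hat{\phi}_i)$ is linear in $c'$ and is maximized at a vertex of the simplex, where it generally strictly exceeds $\sum_i \hat{c}_i'\Lambda_i^*(\hat{\phi}_i)$. So ``dividing'' does not give the claim: you must show the denominator grows by at least the same factor as the numerator, and that is precisely where the work lies. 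The paper's Lemma \ref{lem:fsmin} does this by (a) differentiating $f^S$ in each $\phi_i$ at the minimizer to obtain the quantitative bound $f^S(\hat{\phi}) \geq (\max_i \hat{\eta}_i'\hat{\phi}_i)\sum_j 1/\hat{\phi}_j - \sum_j \hat{\eta}_j'$ with $\hat{\eta}_i' = \Lambda_i^*(\hat{\phi}_i)/\hat{\phi}_i$, crucially using $\frac{d}{d\phi_i}\Lambda_i^* \leq 0$ on $[R_{\min,i}, \mathbb{E}R_i]$, and then (b) showing that along any line segment from $\hat{c}'$ to an arbitrary $d'$ in the simplex the ratio is a monotone quotient of affine functions whose derivative is nonpositive, which reduces exactly to the inequality in (a). Your ``same factor'' claim is the conclusion of this two-step argument, not something that follows directly from KKT stationarity plus a saddle-point interchange; without (a) the monotonicity in (b) cannot be established.

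A secondary gap: you assert that in the worst-case Max-Queue FSP ``all queues grow at the same rate and all are served,'' but the paper's Lemma \ref{lem:regular} only guarantees that non-maximal queues receive \emph{no} service; the optimal FSP may serve only a strict subset $S$ of queues. The paper therefore takes an infimum over subsets $S$ and, at the end, extends $\hat{\phi}_S$ to the remaining coordinates by setting $\hat{\phi}_i = \mathbb{E}[R_i]$ (so $\Lambda_i^*(\hat{\phi}_i) = 0$ there), which is what makes the final supremum over the full $N$-simplex comparable to the subset program. Your plan to ``rule out corner solutions'' via interiority of $\lambda$ is not needed and would not succeed in general; the extension-by-mean-rates device is the right fix. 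Finally, for existence of the minimizer, the issue is not boundedness of $\phi$ (the rates live in a finite set) but that the constraint $w' > 0$ is open; one must show $f^S \to \infty$ as $w' \to 0$ to restrict to a compact sublevel set where lower semicontinuity of $\Lambda_i^*$ applies.
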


The proof of this result involves solving the non-convex problem for
the rate function lower bound given in Proposition \ref{prop:lbmaxq},
and relating the solution to a suitable uniform upper bound of the
type prescribed by Proposition \ref{prop:ubany}. It utilizes the
convexity and lower-semicontinuity of the rate functions
$\Lambda_i^*$, and is accomplished by considering the properties of
the $(\phi_i)_i$ which minimize the upper bound
(\ref{eqn:liminf}). The full proof appears in Appendix
\ref{app:mqopt}.

\section{Analysis: General Subsets and Max-Exp}
\label{sec:general}
In this section, we extend the queue overflow optimality result for
Max-Queue to the general setting of arbitrary disjoint subsets of
observable channels and the Max-Exp scheduling algorithm. For this, we
follow the same key steps in obtaining the Max-Queue result -- (a)
prove lower bounds on the buffer overflow exponent for Max-Exp, (b)
derive universal upper bounds on the buffer overflow exponent across
all scheduling algorithms using subset channel state information, and
(c) demonstrate that the upper and lower bounds match.

However, the approach to show optimality of the Max-Exp algorithm
warrants a more sophisticated analysis as compared to that of
Max-Queue. This is primarily due to the fact that the Max-Exp
algorithm is not a \emph{scaling-invariant} scheduling algorithm,
i.e., scaling all queue-lengths by a uniform constant changes the
scheduling behavior.  Intrinsically, Max-Exp operates at the
$O(\sqrt{n})$ time-scale, i.e., when all the queue lengths are $O(n)$,
a $O(\sqrt{n})$ change in them causes a shift in Max-Exp's scheduling
behavior. In other words, examining Max-Exp's scheduling over $O(n)$
time slot intervals effectively ``washes out'' information about its
actions, resulting in crude bounds. This sets Max-Exp apart from
Max-Queue which is naturally coupled to the timescale of $O(n)$ time
slots, and prevents us from using the standard $O(n)$ fluid scaling to
analyze the fluid sample path behavior of Max-Exp.

Hence, our analysis for Max-Exp proceeds by looking at sample paths of
the system's processes over intervals of $O(\sqrt{n})$ time slots. For
Step (a) above, analogous to Proposition \ref{prop:lbT}, we establish
a ``refined'' Mogulskii-type theorem for sample-path large deviations
of predictably sampled processes over a sub-$O(n)$ timescale (a
corresponding result for the full-CSI case was first proved in
\cite{stol:ldexp}).  Next, we use the framework of \emph{Local Fluid
  Sample Paths} (LFSPs, introduced in \cite{shasto01}) to obtain a
lower bound on the decay exponent of Max-Exp's overflow
probability. LFSPs allow us to ``magnify'' the standard $O(n)$ fluid
limit processes to examine events on the $O(\sqrt{n})$ ``local fluid''
timescale, and this helps us match the lower and upper bounds for the
decay exponent to establish the optimality of Max-Exp.

\subsection{Lower Bounding Max-Exp's Decay Rate: Refined-timescale
  Large Deviations for Sampled Processes and Local FSPs}
Here, we extend the sampling-based large-deviations bound from
Proposition \ref{prop:lbT} to hold over a finer-than-$O(n)$ timescale.
The basic idea here is to lower-bound the large deviations cost from
(\ref{eqn:infT}) by linearizing sample paths over the finer
timescale. This expresses the intuitive notion that over the finer
timescale, typical large deviations of random processes occur
``locally along straight lines''.

The general approach for studying scheduling behavior on
finer-than-$O(n)$ timescales is to introduce a positive integer
function $u(n)$, such that $u(n) \to \infty$ and $u(n)/n \to 0$ as $n
\to \infty$ (see Stolyar \cite{stol:ldexp}). We take $u(n) = \lceil\sqrt{n} \rceil$, which is the relevant
timescale for the dynamics of the Max-Exp scheduling rule
(\ref{alg:maxexp}).

For our analysis of the queue overflow rate function, we will need to
use this idea, along with the following {\em variable time
  discretization} for each observable subset. For any non-decreasing,
right-continuous-with-left-limits (RCLL) scalar function $h$ on
$[0,\infty)$, and any non-decreasing continuous function $\chi:
[0,\infty) \to [0, \infty)$, let $U^n_\chi h$ denote the continuous
and piecewise-linearized (according to $\chi$) version of $h$
constructed as follows: we divide $[0,\infty)$ into the contiguous
subintervals $[0,\chi(u(n)/n)]$, $[\chi(u(n)/n), \chi(2u(n)/n)]$,
$[\chi(2u(n)/n), \chi(3u(n)/n)]$, $\ldots$, and linearize $h$ between
its endpoints in each subinterval. For $t \geq 0$, let
$\theta^{(n)}(t)$ be the largest right-endpoint of a sub-interval that
does not exceed $t$. When the functions $h$ and $\chi$ are
vector-valued of (the same) finite dimension, we employ the same
notation $U^n_\chi h$ to mean the above linearization performed for
each of the individual scalar component functions in $h$ and its
counterpart function in $\chi$. In this case, the definition of
$\theta^{(n)}(t)$ is similarly extended in a component-wise fashion.

For each observable subset $\alpha$, let $\Lambda^*_\alpha$ be the
Sanov rate function \cite{dembo} for the empirical marginal
distribution of the state of its channels $(R_i(1))_{i \in
  \alpha}$. The domain of $\Lambda^*_\alpha$ is the
$|\mathcal{R}_\alpha|$-dimensional simplex where $\mathcal{R}_\alpha$
is the set of all possible sub-states for subset $\alpha$.

{\em Sampled Trace of the Queueing System:} We define here a random
object crucial to the analysis of Max-Exp. Consider the evolution of
the $n$-th queueing system in the time slots $1, 2, \ldots, nT$, and
suppose that subset $\alpha$ is picked by the scheduling algorithm
precisely at time slots $K_{\alpha}(1), K_{\alpha}(2), \ldots,
K_{\alpha}(C_{\alpha}(nT)) \in \{1,2,\ldots,nT\}$ .  Recall that
$\mathcal{R}_{\alpha}$ is the set of all possible sub-states wrt
subset $\alpha$. For each such sub-state $r \in \mathcal{R}_{\alpha}$,
we will find it convenient to associate it with the unit vector
$\mathbf{e}_r$ which is simply the
$|\mathcal{R}_{\alpha}|$-dimensional vector with $1$ in the $r$-th
position (according to a fixed ordering) and zeros everywhere else.

For each subset $\alpha$, set \[ V^{\alpha,(n)} \equiv V^{\alpha}
\bydef (\mathbf{e}_{R_{\alpha}(K_{\alpha}(1))},
\mathbf{e}_{R_{\alpha}(K_{\alpha}(2))},
\mathbf{e}_{R_{\alpha}(K_{\alpha}(3))},
\ldots,\mathbf{e}_{R_{\alpha}(K_{\alpha}(C_{\alpha}(nT)))})\] i.e.,
the $j$-th element of $V^{\alpha}$ simply records what sub-state was
sampled when $\alpha$ was picked for the $j$-th time $K_{\alpha}(j)$.

We call $V^{(n)} \equiv V \bydef (V^{\alpha})_{\alpha \in
  \mathcal{O}}$ the {\em sampled trace} of the queueing system. The
sampled trace represents, in words, the sequence of sub-state
observations seen by the scheduling algorithm, organized according to
the subsets sampled during the operation of the scheduling
algorithm. Note also that for any {\em deterministic} scheduling
algorithm, the sampled trace completely specifies the entire sample
path of the queue lengths (in conjunction with the arrival sequence
which is assumed to be deterministic).

Corresponding to each possible sampled trace $V$, we define its
partial sums process \[ W^{\alpha,(n)}(k) \equiv W^{\alpha} =
\sum_{j=1}^k V^{\alpha}(j), \quad 1 \leq k \leq C_{\alpha}(nT) \] for
each observable subset $\alpha \in \mathcal{O}$. We then define
$W^{(n)} \equiv W \bydef (W^{\alpha})_{\alpha \in \mathcal{O}}$. Note
that each sampled trace $V$ corresponds bijectively to its
partial-sums process $W$. Also, as per convention, we use $w \equiv
w^{(n)}$ and $v \equiv v^{(n)}$ to denote the rescaled (by $n$)
versions of $W$ and $V$ respectively. 

Let us define the candidate sample-path large deviations rate function for
our queueing system as follows:
\begin{align*}
 \hat{J}_t(z,s) \bydef &\int_0^{z_{\alpha}(t)} \; \sum_{\alpha \in
  \mathcal{O}} \Lambda^*_{\alpha}\left(s'_{\alpha}(u)\right) du, \\
&s_{\alpha} \in \mathcal{A} \left([0,T] \to
  \mathbb{R}^{|\mathcal{R}_\alpha|}\right), \\
&z_{\alpha} \in \mathcal{A} \left([0,T] \to \mathcal{R} \right), \; \; t \in [0,T].
\end{align*}
Here, $\mathcal{A}$ is used to denote the set of absolutely continuous
functions.

In order to track large deviations costs over the refined $u(n)$
timescale, let us introduce the notion of a Generalized Fluid Sample
Path (GFSP) \cite{stol:ldexp}, built upon the framework of standard
FSPs.

\begin{definition}[Generalized Fluid Sample Path (GFSP)]
  Suppose that there exists an increasing subsequence $\{n\}$ of the
  sequence of positive integers such that
  \begin{enumerate}
  \item For each $n$, there is a valid realization $(f^{(n)},
    \hat{f}^{(n)}, c^{(n)}, g^{(n)}, \hat{g}^{(n)}, q^{(n)}, m^{(n)},
    w^{(n)})$.

  \item As $n \to \infty$, we have the u.o.c. convergence 
    \[ (f^{(n)}, \hat{f}^{(n)}, c^{(n)}, g^{(n)}, \hat{g}^{(n)},
    q^{(n)}, m^{(n)}, w^{(n)}) \to (f, \hat{f}, c, g, \hat{g}, q, m,
    w) \] for a set of limiting, Lipschitz continuous functions $(f,
    \hat{f}, c, g, \hat{g}, q, m, w)$, and the u.o.c. convergence
    \[ \bar{J}^{(n)} \equiv (\bar{J}_t^{(n)}, t \in [0,T]) \bydef
    \left(\hat{J}_t\left(c^{(n)},U^n_{c^{(n)}} w^{(n)}\right), t \in
      [0,T]\right) \; \to \; \bar{J} = (\bar{J}_t, t \in [0,T]) \]
    for a non-negative non-decreasing Lipschitz-continuous function
    $\bar{J}$.

    Then, the entire construction 
    \[ [\{n\}; (f^{(n)}, \hat{f}^{(n)}, c^{(n)}, g^{(n)},
    \hat{g}^{(n)}, q^{(n)}, m^{(n)}, w^{(n)}), \bar{J}^{(n)}; (f,
    \hat{f}, c, g, \hat{g}, q, m, w), \bar{J}] \] is called a
    generalized fluid sample path (GFSP). The non-decreasing function
    $\bar{J}$ will be called the {\em refined cost function} of the
    GFSP.
  \end{enumerate}
\end{definition}

We note that for any $0 \leq t_1 < t_2 < \infty$, 
\begin{equation}
 \bar{J}_{t_2} - \bar{J}_{t_1} \geq \hat{J}_{t_2}(c,w) -
\hat{J}_{t_1}(c,w), \label{eqn:convexity1}
\end{equation} as a result of convexity of the
$\Lambda^*_\alpha$, $\alpha \in \mathcal{O}$, and Jensen's inequality.

The following finite-horizon result strengthens Proposition
\ref{prop:lbT}. It states that for any \emph{nonrandom} scheduling
algorithm, the sample path large deviations rate function for the
queue length process is lower-bounded by the minimum \emph{refined
  cost} over valid GFSPs.

\begin{proposition}[Refined-time-scale Lower Bound on Large Deviation
  Rate Function]
  \label{prop:lbTref}
  Let $\Gamma$ be a closed set of trajectories in
  $\mathcal{C}^+_\mathcal{L}([0,T])$. Then, under a nonrandom
  scheduling policy,
  \begin{align}
    -\limsup_{n \to \infty} \frac{1}{n}  \log&
    \mathbb{P}_0^{n,T}\left[q^{(n)} \in \Gamma\right]  \geq \nonumber \\
    & \inf  \left\{\bar{J}_0 : \exists  \mbox{GFSP $\psi$ on} \;
      [0,T], \bar{J} \in \psi, q \in \psi, q \in \Gamma
    \right\}. \label{eqn:lbTref}
\end{align}
\end{proposition}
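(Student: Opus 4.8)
The plan is to combine the sampling-based large deviations bound of Proposition \ref{prop:lbT} with a refinement of the underlying Mogulskii-type estimate down to the $u(n) = \lceil \sqrt{n}\rceil$ timescale, following the general strategy of Stolyar \cite{stol:ldexp} but adapted to the predictably-sampled setting. First I would recall that, for a nonrandom scheduling policy, the entire pre-limit sample path of the queueing system is a deterministic function of the sampled trace $V^{(n)} = (V^\alpha)_{\alpha\in\mathcal{O}}$ (in conjunction with the deterministic arrivals), as observed right before the statement of the proposition. Hence a large deviation of $q^{(n)}$ into the closed set $\Gamma$ forces the rescaled partial-sums process $w^{(n)}$ of the sampled trace to lie in a corresponding closed set. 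The key probabilistic input is that, \emph{conditioned on the sequence of sampled subsets}, the observations within each subset $\alpha$ are i.i.d.\ draws from the sub-state marginal of $\alpha$ (this is exactly the exchangeability/predictable-sampling argument underlying Proposition \ref{prop:lbT}, since only one subset's coordinates are used at each slot and we may take the channel processes independent across time). Therefore, for each $\alpha$, the empirical-distribution process of the $C_\alpha(nT)$ observations obeys a Sanov-type sample-path large deviation principle with rate function $\int \Lambda^*_\alpha(s'_\alpha(u))\,du$, and these bounds hold jointly across $\alpha$ by independence.

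The refinement step is the heart of the argument. I would fix the variable time-discretization $U^n_{c^{(n)}}$ keyed to the \emph{realized} subset-selection counts $c^{(n)}$: within the $j$-th block of $u(n)$ picks of subset $\alpha$, the observations are i.i.d., so a block-wise Mogulskii/Cram\'er estimate gives that the increment of $w^\alpha$ over that block, rescaled, concentrates around its (twisted) mean at exponential cost $u(n)\,\Lambda^*_\alpha(\cdot)$ per block. Summing the block costs and using the definition of $\hat J_t(c^{(n)}, U^n_{c^{(n)}} w^{(n)})$, one obtains that for every $n$ in a suitable subsequence, on the event $\{q^{(n)}\in\Gamma\}$ (up to a vanishing-probability exceptional set), the quantity $\bar J_0^{(n)}$ is essentially a lower bound on $-\frac 1n\log \mathbb{P}_0^{n,T}[q^{(n)}\in\Gamma]$. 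Passing to the limit: by the Lipschitz/compactness properties already recorded for FSPs (and the fact that the $\Lambda^*_\alpha$ are good rate functions on the simplex), along the relevant subsequence the tuples $(f^{(n)},\hat f^{(n)},c^{(n)},g^{(n)},\hat g^{(n)},q^{(n)},m^{(n)},w^{(n)})$ and the cost processes $\bar J^{(n)}$ converge u.o.c.\ to a limiting tuple together with a non-decreasing Lipschitz $\bar J$ — i.e.\ to a GFSP $\psi$ — with $q\in\psi$, $q\in\Gamma$ (closedness of $\Gamma$), and $\bar J\in\psi$. Lower semicontinuity of $\hat J$ and inequality \eqref{eqn:convexity1} give $\liminf_n \bar J_0^{(n)} \geq \bar J_0$, which yields exactly the right-hand side of \eqref{eqn:lbTref} after taking the infimum over all such GFSPs.

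I would organize this as: (i) a conditional Sanov/Mogulskii lemma for the partial sums $w^\alpha$ over $u(n)$-blocks, uniform over the (random but predictable) block structure; (ii) a union-bound / change-of-measure step converting the event $\{q^{(n)}\in\Gamma\}$ into a statement about $w^{(n)}$ and paying the cost $\bar J_0^{(n)}$; (iii) a tightness-and-extraction step producing the limiting GFSP; and (iv) a lower-semicontinuity step to finish. The main obstacle I anticipate is step (ii) together with the uniformity required in step (i): because the number and lengths of the $u(n)$-blocks for each subset are themselves random and data-dependent, one cannot simply invoke an off-the-shelf Mogulskii theorem — one must show the exponential estimate holds simultaneously over all admissible block configurations (there are only polynomially many in $n$, so a union bound costs nothing on the exponential scale, but this has to be done carefully), and one must control the discretization error introduced by $U^n_{c^{(n)}}$ relative to the true $w^{(n)}$ as $u(n)/n\to 0$. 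The degenerate case where some $\dot c_\alpha(t)=0$ on a set of positive measure (a subset sampled $o(n)$ times) also needs a separate, easy argument showing it contributes zero to $\bar J$ and hence is harmless for a lower bound. Everything else — the FSP machinery, convexity of $\Lambda^*_\alpha$, the bijection between $V$ and $W$ — is already available from the earlier development and from \cite{stol:ldexp}.
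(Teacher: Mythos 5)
Your overall route is essentially the one the paper takes: reduce the event $\{q^{(n)}\in\Gamma\}$ to a statement about the rescaled partial-sums process $w^{(n)}$ of the sampled trace, apply a variable-discretization refinement of Mogulskii's theorem \`a la Stolyar \cite{stol:ldexp} at the $u(n)$ timescale, extract a convergent subsequence to build a GFSP, and close with convexity/lower-semicontinuity. You also correctly anticipate that the variable discretization $U^n_{c^{(n)}}$ requires care; the paper handles this in a footnote by observing that \cite[Theorem 7.1]{stol:ldexp} depends only on the \emph{number} of discretization intervals (which is $Tu(n)/n$ regardless of their sizes), not on their lengths, so the fixed-discretization result carries over.

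However, your stated ``key probabilistic input'' is incorrect as written, and this is the crux of the argument. You claim that, conditioned on the sequence of sampled subsets, the observations within each subset $\alpha$ are i.i.d.\ draws from the sub-state marginal. This is false: the sampled-subset sequence is itself a function of the \emph{observed} channel states under a nonrandom policy, so conditioning on future subset choices biases earlier observations. For example, with two singleton subsets and a policy that picks subset $\{2\}$ at slot $2$ if and only if $R_1(1)=1$, conditioning on the subset sequence $(\{1\},\{2\})$ forces $R_1(1)=1$ deterministically. The paper's Lemma \ref{lem:samptraceequiv} establishes the correct statement: the \emph{unconditional} law of the sampled trace $V^{(n)}$ under $\mathbb{P}_{q_0}^{nT}$ equals the product-of-marginals law $\hat{\mathbb{P}}$. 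This exact distributional identity (proved by exchangeability and independence across time, exactly as in Lemma \ref{lem:equivprob}) lets one pass to the iid measure $\hat{\mathbb{P}}$ once and apply the refined Mogulskii theorem directly, with no need for the union bound over admissible block configurations that you flag as the ``main obstacle.'' Once you replace your conditional statement with this unconditional identity, the rest of your plan (tightness, extraction of the GFSP, lower-semicontinuity via \eqref{eqn:convexity1}) goes through as intended.
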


The proof appears below. The proof uses ideas from the large
deviations of sampling (in the manner of Proposition \ref{prop:lbT}),
the crucial concept of {\em sampled traces}, and a {\em variable
  discretization}-version of a refined Mogulskii theorem first shown
by Stolyar \cite{stol:ldexp}, in order to establish the rate function
bound (\ref{eqn:lbTref}).

\begin{proof}
  For an observable subset $\alpha$, let $\tilde{\mathbb{P}}_{\alpha}$
  be the probability measure on $\{\mathbf{e}_r: r \in
  \mathcal{R}_{\alpha}\}$ such that
  $\tilde{\mathbb{P}}_{\alpha}[\mathbf{e}_r] = \prob{R_{\alpha}(1) =
    r}$ $\forall r \in \mathcal{R}_{\alpha}$.  Form the ``marginal''
  product distribution for subset $\alpha$ as
  $\hat{\mathbb{P}}_{\alpha} \bydef \tilde{\mathbb{P}}_{\alpha} \times
  \tilde{\mathbb{P}}_{\alpha} \times \cdots$ (i.e., extend
  $\tilde{\mathbb{P}}_{\alpha}$ to countably infinite sequences in an
  iid fashion), and finally take the product of these marginal
  measures, across observable subsets, to get $\hat{\mathbb{P}} \bydef
  \prod_{\alpha \in \mathcal{O}} \hat{\mathbb{P}}_{\alpha}$.  For any
  candidate sampled trace $v = (v^{\alpha})_{\alpha \in \mathcal{O}}$,
  we understand $\hat{\mathbb{P}}[v]$ as $\prod_{\alpha \in
    \mathcal{O}} \hat{\mathbb{P}_\alpha}[v^{\alpha}] = \prod_{\alpha
    \in \mathcal{O}} \prod_j
  \tilde{\mathbb{P}_\alpha}[v^{\alpha}(j)]$.

  The lemma below states that for any {\em deterministic} scheduling
  algorithm, the probability distribution of the sampled trace of the
  queueing system is {\em identical} under both the original measure
  $\mathbb{P}_{q_0}^{nT}$ and the product-of-marginals measure
  $\hat{\mathbb{P}}$ defined above. This will subsequently allow us to
  apply sample-path large-deviations results on the iid measure
  $\hat{\mathbb{P}}$ instead of the more complex, correlated measure
  ${\mathbb{P}}_{q_0}^{nT}$.

  \begin{lemma}
    \label{lem:samptraceequiv}
    For every sampled trace $v = (v^{\alpha})_{\alpha \in
      \mathcal{O}}$, $\mathbb{P}_{q_0}^{nT}[V = v] =
    \hat{\mathbb{P}}[v]$.
  \end{lemma}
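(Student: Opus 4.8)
The plan is to prove the identity $\mathbb{P}_{q_0}^{nT}[V=v] = \hat{\mathbb{P}}[v]$ by induction on the time slots $1,2,\ldots,nT$, unfolding the sampled trace one observation at a time and using the fact that under a \emph{deterministic} scheduling algorithm, the whole history — which subset is picked and which sub-state is observed at each past time slot — is a deterministic function of the sampled trace so far (together with the deterministic arrivals). First I would set up the correspondence: given a candidate trace $v=(v^\alpha)_{\alpha\in\mathcal{O}}$, the deterministic policy together with the deterministic arrival process pins down, for each time slot $k$, the subset $S(k)$ that would be picked and, reading off the appropriate next entry of $v^{S(k)}$, the sub-state that would be observed; thus $v$ determines a unique candidate sample path of queue lengths, scheduling decisions, and sampling counts $C_\alpha(\cdot)$. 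If at any time slot the prescribed subset has ``run out'' of entries in $v^{S(k)}$, or if the total counts don't match the lengths of the $v^\alpha$'s, then $v$ is not a realizable trace and both sides are zero; so we may assume $v$ is consistent.

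The core of the argument is the following telescoping. Write $\mathcal{H}_k$ for the event (under $\mathbb{P}_{q_0}^{nT}$) that the first $k$ sampled observations agree with what the trace $v$ prescribes. Because the policy is deterministic, conditioned on $\mathcal{H}_{k-1}$ the identity of the subset $S(k)$ picked at time $k$ is \emph{determined} (no randomness in Step 1 given the observed past), say $S(k)=\beta$; and the next required observation is $v^\beta(j)$ where $j$ is the running count of how many times $\beta$ has been picked in the first $k$ slots. Now the only randomness is $R_\beta(k)$, and by Assumption 2 the joint channel state $R(k)$ is independent of everything in slots $<k$ — in particular independent of $\mathcal{H}_{k-1}$ and of the (deterministic given $\mathcal{H}_{k-1}$) choice $S(k)=\beta$. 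Hence
\[
\mathbb{P}_{q_0}^{nT}[\,\mathcal{H}_k \mid \mathcal{H}_{k-1}\,] = \prob{R_\beta(k) = r} = \tilde{\mathbb{P}}_\beta[\mathbf{e}_r],
\]
where $r$ is the sub-state coded by $v^\beta(j)$. Multiplying these conditional probabilities over $k=1,\ldots,nT$ and collecting the factors by subset gives exactly $\prod_{\alpha\in\mathcal{O}}\prod_j \tilde{\mathbb{P}}_\alpha[v^\alpha(j)] = \hat{\mathbb{P}}[v]$, which is the claim. The point worth emphasizing is that, although the real channel state $R(k)$ has an arbitrary joint (possibly correlated) distribution across coordinates, at time $k$ only the coordinates in $S(k)$ are ever recorded in the trace, and those enter only through their marginal $\tilde{\mathbb{P}}_\beta$; the correlations with the \emph{unobserved} coordinates are simply integrated out and never affect the trace.

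I expect the main obstacle to be purely bookkeeping rather than conceptual: making the induction clean requires carefully formalizing ``the deterministic policy plus the trace-so-far determines $S(k)$,'' including the case of tie-breaking (which we have assumed is resolved by a fixed deterministic rule), and handling edge cases where $v$ is inconsistent with the counts $C_\alpha(nT)$ so that both sides vanish. A secondary subtlety is that the scheduling of a user \emph{within} the chosen subset in Step 2 may also be randomized in general, but for this lemma we are told the algorithm is deterministic, so Step 2 is a deterministic function of $(R_i(k))_{i\in S(k)}$ and the past — it affects future queue lengths (hence future subset choices) but introduces no new randomness beyond the $R_\beta(k)$ already accounted for. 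Once these points are pinned down, the telescoping identity above closes the proof.
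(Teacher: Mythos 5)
Your proof is correct and follows essentially the same route as the paper's argument (given explicitly for the analogous singleton-subset Lemma~\ref{lem:equivprob}): determinism of the policy pins down the sampling schedule as a function of the trace, and independence of the channel states across time slots reduces the event to a product of single-slot marginal probabilities of the observed sub-states, with the unobserved coordinates integrated out --- you merely organize this as a sequential conditioning (chain rule) rather than the paper's direct factorization of the event over the distinct time slots. One small inaccuracy: for an inconsistent candidate $v$ the right-hand side $\hat{\mathbb{P}}[v]$ is a product of positive marginals and need not vanish, so the claim that \emph{both} sides are zero there is wrong; this is harmless, since the lemma concerns only realizable sampled traces.
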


  Continuing with the proof of the proposition, let $\Gamma^{(n)}$
  (resp. $\mathsf{C}^{(n)}$) be the set of all valid rescaled sampled
  traces (resp. all valid rescaled subset-sampling trajectories
  $c^{(n)}$) in the $n$-th queueing system\footnote{Since sampled
    traces and their partial sums processes are in one-to-one
    correspondence, we take the liberty of referring to them
    interchangeably.} that, starting with initial queue lengths $q_0$,
  result in queue length sample paths belonging to $\Gamma$. We also
  let $\Gamma \mathsf{C}^{(n)} \bydef \{(w^{(n)},c^{(n)}): w^{(n)} \in
  \Gamma^{(n)}\}$ denote the set of valid (sampled trace, sample
  trajectory) pairs corresponding to sampled trace trajectories
  belonging to $\Gamma^{(n)}$. We have, due to Lemma
  \ref{lem:samptraceequiv},
  \begin{align*}
    \mathbb{P}_{q_0}^{nT}\left[w^{(n)} \in \Gamma^{(n)}\right] &=
    \sum_{w \in \Gamma^{(n)}} \mathbb{P}_{q_0}^{nT}\left[ W^{(n)} =
      w \right] \\
    &= \sum_{w \in \Gamma^{(n)}} \hat{\mathbb{P}}[w ]  \\
    &= \hat{\mathbb{P}}\left[\left\{w: w \in \Gamma^{(n)} \right\}
    \right],
  \end{align*}
  where the final step is due to the fact that the sampled trace
  uniquely specifies the queueing system's complete trajectory, and so
  sampled traces corresponding to different sample paths of the system
  must necessarily be different. Passing in this fashion to the iid
  measure $\hat{\mathbb{P}}$ allows us to use a refinement of
  Mogulskii's theorem \cite{dembo} first established by Stolyar
  \cite[Theorem 7.1]{stol:ldexp}, to estimate the large deviations
  rate function. As a consequence, we can write\footnote{\cite[Theorem
    7.1]{stol:ldexp} derives the rate function bound under a
    \emph{fixed} discretization of the time axis where the
    discretization rate is always unity, i.e., $\chi(x) = x$ $\forall
    x \geq 0$. When the discretization rate is variable and depends on
    the subset selection frequency $c^{(n)}$, it is not hard to see
    that the result of \cite[Theorem 7.1]{stol:ldexp} extends with
    $U^n h$ replaced by $U^n_{c^{(n)}} h$ -- the key property that
    affords us this extension is that the \emph{size} of each
    subinterval ($u(n)/n$ in \cite{stol:ldexp} and $\chi((k+1)u(n)/n)
    - \chi(ku(n)/n)$ here) does not matter; it is the \emph{number} of
    discretized intervals ($Tu(n)/n$ both in \cite{stol:ldexp} and
    here) that is the crucial ingredient in the bound in \cite[Theorem
    7.1]{stol:ldexp}.}
  \begin{align}
    \label{eqn:refmog}
    -\limsup_{n \to \infty} \frac{1}{n} \log
    \hat{\mathbb{P}}\left[\left\{w^{(n)}: w^{(n)} \in \Gamma^{(n)}
      \right\}\right] &\geq \liminf_{n \to \infty} \;
    \inf\left\{\hat{J}_{T}(c,U^n_c w): (w,c) \in  \Gamma \mathsf{C}^{(n)} \right\}.
  \end{align}
  Let the limit inferior on the right-hand side of (\ref{eqn:refmog})
  above be denoted by $\zeta$. It follows that we can find for each
  $n$ a $w_n \in \Gamma^{(n)}$ and $c_n = (c_{n,\alpha})_{\alpha \in
    \mathcal{O}} \in \mathbb{R}^{|\mathcal{O}|}$, such that $(w_n,c_n)
  \in \Gamma \mathsf{C}^{(n)}$ and $\hat{J}_{0}(c_n,U^n_{c_n} w_n) \to
  \zeta$. Using uniform Lipschitz continuity of the $\{w_n\}$ and
  $\{c_n\}$, we can extract a subsequence of trajectories $(w_n,
  c_n)$ which converges and forms a GFSP with refined cost $\zeta$,
  and which satisfies, by construction, the conditions on the
  right-hand side of (\ref{eqn:lbTref}). Thus, we get
  \[ \zeta \geq \inf \; \left\{\bar{J}_0 : \exists \; \mbox{GFSP
      $\psi$ on} \; [0,T], \bar{J} \in \psi, q \in \psi, q \in \Gamma
  \right\},\] completing the proof.
\end{proof}

Similar to extending the result of Proposition \ref{prop:lbT} to the
stationary queue length distribution, minimizing the RHS of
(\ref{eqn:lbTref}) across FSPs over all finite time horizons $T > 0$
yields a lower bound for the large deviations rate of the
\emph{stationary} queue overflow probability. This uses standard tools
(see, for instance, \cite{stol:ldexp,vjvlin:ldlyapunov}), and we omit
the proof for brevity.

\subsection{Extending The Lower Bound to The Stationary Queue
  Distribution}
As with the approach followed to extend the result of Proposition
\ref{prop:lbT} to the stationary measure under Max-Queue (i.e., to
Proposition \ref{prop:stationarylb}), we can use standard
Friedlin-Wentzell-type techniques to extend Proposition
\ref{prop:lbTref} to a large-deviations lower bound
\cite{vjvlin:ldlyapunov,stol:ldexp} for the stationary measure under
the Max-Exp scheduling policy. Note that this requires showing that
Max-Exp is throughput-optimal\footnote{Again, the
  throughput-optimality holds among all scheduling algorithms that
  base their decision on the current (or any finite past history of)
  system state.} -- a fact whose proof we omit for brevity, but which
results from a fairly straightforward modification of the proof of
throughput-optimality of the Max-Sum Queue algorithm (see
\cite{gopcarsha12:subsets} for details).

\begin{theorem}
\label{thm:stationarylbref}
Let $\mathbb{P}$ denote the stationary measure induced by the Max-Exp
policy. Then,
  \begin{align}
    -&\limsup_{n \to \infty} \frac{1}{n} \log \mathbb{P}\left[||q^{(n)}(0)||_\infty \geq 1\right] \nonumber\\
    \geq \; &\inf_{T \geq 0} \inf \left\{\bar{J}_t : \exists \;
      \mbox{GFSP $\psi$ on} \; [0,T], \bar{J} \in \psi, q \in \psi, t
      \in [0,T], q(0) = 0, ||q(t)||_\infty \geq 1
    \right\}. \label{eqn:stationarylbref}
\end{align}
\end{theorem}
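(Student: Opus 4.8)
The plan is to promote the finite-horizon refined-timescale lower bound of Proposition \ref{prop:lbTref} to a lower bound on the stationary overflow exponent by running the standard Friedlin--Wentzell machinery, adapted to the GFSP setting. The starting observation is that, by positive recurrence of the queue-length Markov chain under Max-Exp (which follows from throughput-optimality, cited in the excerpt), the stationary measure $\mathbb{P}$ exists and can be related to the finite-horizon conditional measures $\mathbb{P}_{q_0}^{n,T}$ via a regeneration argument: fix a small neighborhood of the origin in queue-space, use the fact that starting from any bounded initial state the system reaches that neighborhood in $O(1)$ fluid time with probability tending to $1$, and decompose a stationary sample path into an initial segment returning near $0$ followed by an excursion that reaches level $n$ in $\|\cdot\|_\infty$. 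The first segment costs nothing at the exponential scale, so the overflow exponent is governed by the cheapest excursion from (near) $0$ to the overflow set, over all finite horizons $T$.

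The key steps, in order, are: (i) show that for every $\varepsilon>0$ there is a bounded set $B$ of queue-states and a time $T_0$ such that, uniformly over starting states in $B$, the $n$-th system reaches a $\delta$-neighborhood of $0$ within fluid time $T_0$ with probability at least $e^{-n\varepsilon}$ — this is the "cost-free return" estimate and uses throughput-optimality together with a fluid-limit drift argument as in \cite{stol:ldexp,vjvlin:ldlyapunov}; (ii) invoke Proposition \ref{prop:lbTref} with $\Gamma = \{q \in \mathcal{C}^+_\mathcal{L}([0,T]): \|q(T)\|_\infty \geq 1\}$ to lower-bound, for each $T$, the exponent of a single excursion that overflows by time $T$ in terms of the infimum of refined costs $\bar J_T$ over GFSPs with $q(0)=0$ and $q(T)$ in the overflow set; (iii) stitch excursions: a stationary overflow forces at least one regeneration cycle to contain such an excursion, and a union bound over the (sub-exponentially many, by positive recurrence) cycles in any $O(n)$-length window costs nothing exponentially, giving $-\limsup \frac1n\log\mathbb{P}[\|q^{(n)}(0)\|_\infty\ge 1] \ge \inf_{T\ge 0}\inf\{\bar J_T : \text{GFSP}, q(0)=0, \|q(T)\|_\infty\ge1\}$; (iv) finally weaken $\bar J_T$ to $\bar J_t$ for the first hitting time $t\le T$ of the overflow level, which only decreases the infimand since $\bar J$ is non-decreasing, and observe that the double infimum over $T$ and over $t\in[0,T]$ collapses to the expression in \eqref{eqn:stationarylbref}.

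I would expect the main obstacle to be step (i)/(iii) — making the Friedlin--Wentzell reduction rigorous in the GFSP framework rather than for ordinary FSPs. Two wrinkles differ from the classical setup: first, the refined cost $\bar J$ is defined as a u.o.c. limit of $\hat J_t(c^{(n)}, U^n_{c^{(n)}} w^{(n)})$ along subsequences, so one must check that concatenating a cost-free return segment with a costly overflow excursion again produces a valid GFSP with refined cost equal to that of the excursion alone — this needs the additivity $\bar J_{t_2}-\bar J_{t_1}\ge \hat J_{t_2}-\hat J_{t_1}$ of \eqref{eqn:convexity1} and a gluing lemma for GFSPs along the $u(n)=\lceil\sqrt n\rceil$ discretization. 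Second, because Max-Exp is not scaling-invariant, the "return to near $0$" estimate must be argued at the fluid scale but with the in-subset Exponential-rule behavior washing out correctly in the $O(n)$ limit; this is precisely the point where throughput-optimality of Max-Exp is invoked, and where the analogous argument in \cite{stol:ldexp,shasto01} for the full-CSI Exponential rule serves as the template. Since the excerpt explicitly says this uses "standard tools" and omits the proof for brevity, I would present steps (i)–(iv) at the level of a sketch, citing \cite{stol:ldexp,vjvlin:ldlyapunov} for the Friedlin--Wentzell details and pointing to Proposition \ref{prop:stationarylb} (the Max-Queue analogue, already established by the same route) as the pattern to follow.
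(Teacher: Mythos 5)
Your proposal follows the same route the paper indicates: a Friedlin--Wentzell regeneration decomposition, supported by throughput-optimality of Max-Exp, applied to the finite-horizon GFSP bound of Proposition \ref{prop:lbTref}, exactly as in the Max-Queue analogue (Proposition \ref{prop:stationarylb}), whose supporting lemmas in Appendix \ref{app:stationarylb} --- the uniformly bounded expected return time to a neighborhood of the origin, and the divergence of excursion cost $K(c,\delta,T)\to\infty$ as $T\to\infty$ --- play the roles of your steps (i) and (iii). The paper deliberately omits this proof as standard, citing \cite{stol:ldexp,vjvlin:ldlyapunov}, so your sketch is consistent with the intended argument, including the obstacles you flag (gluing GFSPs along the $u(n)$ discretization, and the non-scaling-invariance of Max-Exp being handled at the $O(n)$ fluid scale via throughput-optimality).
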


\subsection{Straight-line Uniform LD Upper Bounds over all policies}

In this section, we establish a crucial \emph{upper} bound on decay
rate of the stationary queue-overflow probability \emph{uniformly} for
any stabilizing scheduling policy, along the lines of Proposition
\ref{prop:ubany}. This is stated and carried out in terms of
``twisted'' marginal probability distributions for the subset channel
states, and the local/subset-based throughput regions that they induce.

Recall that for an observable subset $\alpha$, $\mathcal{R}_\alpha$
denotes the (finite) set of all possible (joint) sub-states that can
be observed channels in $\alpha$. We use $\Pi_\alpha$ to denote the
$|\mathcal{R}_\alpha|$-valued simplex, i.e., the set of all
probability measures on the sub-states of $\alpha$. Any distribution
$\phi_\alpha \in \Pi_\alpha$ induces a \emph{subset throughput
  region} $V_{\phi_\alpha}$, which represents all the long-term
average service rates that can be sustained to users in $\alpha$ when
the sub-states are distributed as $\phi_\alpha$ (see also
\cite{stol:async,gopcarsha12:subsets}). The uniform large-deviations
upper-bound can now be stated for any stabilizing scheduling policy
$\pi$:
 
\begin{theorem}
  \label{thm:ubanysubset} 
  Let $\pi$ be a stabilizing scheduling policy for arrival rates
  $\lambda = (\lambda_1, \ldots, \lambda_n)$, and let $\mathbb{P}^\pi$
  be the associated stationary measure. Let distributions
  $\phi_\alpha \in \Pi_\alpha$ be fixed, for every $\alpha$, such
  that $\lambda \notin \mathcal{CH}((V_{\phi_\alpha})_\alpha)$. Then, 
\begin{align}
  -\liminf_{n \to \infty} \frac{1}{n} \log
  \mathbb{P}^\pi\left[||q^{(n)}(0)||_\infty \geq 1\right] \leq
  \sup_{\substack{\sum_\alpha c_\alpha' = 1\\c_\alpha' \geq 0}} \left[
    \frac{\sum_{\alpha}
      c_\alpha'\Lambda_\alpha^*(\phi_\alpha)}{\max_{\alpha, v_\alpha
        \in V_{\phi_\alpha}} \max_{i \in \alpha} (\lambda_i -
      c_\alpha' v_{\alpha,i})} \right]. \label{eqn:liminfsubset}
 \end{align} 
\end{theorem}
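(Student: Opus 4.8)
The plan is to mimic the proof strategy of Proposition~\ref{prop:ubany}, lifting it from singleton subsets to general disjoint subsets by replacing the per-channel twisted mean $\phi_i$ with a twisted \emph{sub-state distribution} $\phi_\alpha \in \Pi_\alpha$ for each observable subset $\alpha$. First I would fix the twisted distributions $(\phi_\alpha)_\alpha$ satisfying the instability condition $\lambda \notin \mathcal{CH}((V_{\phi_\alpha})_\alpha)$, and construct a new probability measure $\hat{\mathbb{P}}$ under which, whenever subset $\alpha$ is sampled at some time slot, its observed sub-state is drawn iid according to $\phi_\alpha$ (independently across subsets and time, as justified by the disjointness and the exchangeability observations already used in Proposition~\ref{prop:lbT} and Lemma~\ref{lem:samptraceequiv}). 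Crucially, under this construction the scheduling policy $\pi$ sees exactly the statistics it would see in a genuine system whose subset channel distributions are the $\phi_\alpha$, so $\pi$ still behaves as a legitimate (stabilizing-or-not) policy on that twisted system.

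Next I would carry out the martingale/Doob--Meyer argument as in Proposition~\ref{prop:ubany}. For each subset $\alpha$, the scaled cumulative sampled-sub-state process restricted to the times $\alpha$ is picked is, under $\hat{\mathbb{P}}$, a sum of iid bounded increments sampled at predictable (policy-determined) times; by the predictable-sampling / optional-stopping structure its compensator is $\phi_\alpha\, c_\alpha(\cdot)$, so the Doob--Meyer decomposition writes the cumulative service-relevant process as this predictable part plus a martingale noise term that is $o(n)$ with $\hat{\mathbb{P}}$-probability exponentially close to $1$. Hence with high $\hat{\mathbb{P}}$-probability the fluid-scaled system behaves like a \emph{deterministic} fluid system in which, when subset $\alpha$ is picked with long-run frequency $c_\alpha'$, each user $i \in \alpha$ can receive service at any rate in $c_\alpha' V_{\phi_\alpha}$. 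Since $\lambda \notin \mathcal{CH}((V_{\phi_\alpha})_\alpha)$, in this deterministic fluid system some queue must grow linearly regardless of the (arbitrary) sampling frequencies the policy chooses; linearizing, the time for the longest queue to reach level $1$ starting from $0$ is at most $1 / \big(\max_{\alpha, v_\alpha \in V_{\phi_\alpha}} \max_{i\in\alpha}(\lambda_i - c_\alpha' v_{\alpha,i})\big)^+$ once the worst-case frequencies $c_\alpha'$ are realized.

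Then I would combine the change-of-measure cost with this overflow-time estimate. The Radon--Nikodym derivative $d\hat{\mathbb{P}}/d\mathbb{P}^\pi$ on the sampled-trace $\sigma$-algebra, evaluated along a typical fluid path with sampling frequencies $c_\alpha'$, contributes a large-deviations cost of $\sum_\alpha c_\alpha' \Lambda^*_\alpha(\phi_\alpha)$ per unit fluid time (this is exactly the numerator integrand from $\hat{J}$). Running the system for the overflow time above and applying the standard tilting inequality $\mathbb{P}^\pi[\,\cdot\,] \geq \hat{\mathbb{P}}[\,\cdot\,]\, e^{-n(\text{cost})}$ on the high-probability event gives, after taking $-\liminf \frac1n \log$, precisely the bound $\dfrac{\sum_\alpha c_\alpha' \Lambda^*_\alpha(\phi_\alpha)}{\max_{\alpha, v_\alpha \in V_{\phi_\alpha}} \max_{i\in\alpha}(\lambda_i - c_\alpha' v_{\alpha,i})}$ for whatever frequencies $c_\alpha'$ the policy happens to induce; since these are not controlled by us, we take the supremum over all feasible $\{c_\alpha' : \sum_\alpha c_\alpha' = 1,\ c_\alpha' \geq 0\}$, yielding (\ref{eqn:liminfsubset}). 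The standard passage from the finite-horizon conditional bound to the stationary bound follows the same positive-recurrence / regeneration argument used throughout.

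The main obstacle I expect is handling the sub-state structure cleanly in the martingale step: unlike the singleton case where each channel contributes a scalar rate $\phi_i$, here the ``service'' delivered to a user depends on the policy's in-subset scheduling decision as a function of the \emph{observed} sub-state, so the predictable compensator lives in the simplex $\Pi_\alpha$ and the relevant deterministic fluid system's feasible service region is the whole throughput region $V_{\phi_\alpha}$ rather than a single point. Making rigorous that ``with high probability the empirical sub-state distribution on $\alpha$'s sampling instants is close to $\phi_\alpha$, uniformly in the policy's predictable sampling schedule, at exponential scale'' — essentially a uniform (over predictable stopping schedules) concentration / Azuma-type estimate for the multinomial sub-state counts — is the delicate point, together with verifying that the instability geometry $\lambda \notin \mathcal{CH}((V_{\phi_\alpha})_\alpha)$ really forces linear queue growth for \emph{every} choice of the frequencies $c_\alpha'$, which is where disjointness of $\mathcal{O}$ is used (it makes $\mathcal{CH}((V_{\phi_\alpha})_\alpha)$ have the right product-like structure, as remarked after Theorem~\ref{thm:overallmaxexp}).
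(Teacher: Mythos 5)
Your proposal follows essentially the same route the paper intends: the paper gives no separate proof of Theorem~\ref{thm:ubanysubset}, presenting it as the subset-level analogue of Proposition~\ref{prop:ubany}, whose Appendix~\ref{app:ubany} proof uses exactly the ingredients you describe — an exponentially tilted measure realizing $\phi_\alpha$ on the sampled sub-states, the Doob--Meyer/Azuma--Hoeffding martingale concentration of $m(\cdot)$ around its predictable compensator $\phi\, c(\cdot)$ uniformly over the policy's predictable sampling schedule, the deterministic-fluid overflow-time estimate from $\lambda \notin \mathcal{CH}((V_{\phi_\alpha})_\alpha)$, and the change-of-measure cost $\sum_\alpha c_\alpha' \Lambda_\alpha^*(\phi_\alpha)$ followed by a supremum over the uncontrolled sampling frequencies. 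You also correctly identify the two points the paper leaves implicit (uniformity of the concentration over predictable schedules, and the in-subset freedom $v_\alpha \in V_{\phi_\alpha}$ replacing the scalar rate $\phi_i$), so the proposal is a faithful reconstruction of the intended argument.
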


\subsection{Showing Max-Exp's Overflow Exponent is Optimal}
Finally, in this section, we establish that the large-deviations
buffer overflow exponent for the Max-Exp scheduling algorithm is in
fact optimal over all stabilizing scheduling rules\footnote{By
  optimal, we mean optimal among all stabilizing scheduling algorithms
  that base their decision on the current (or any finite past history
  of) system state.}. For this, we leverage the large-deviations lower
bound for the Max-Exp scheduling algorithm (Theorem
\ref{thm:stationarylbref}) and show that it is actually a uniform
upper bound over all scheduling rules as prescribed by Theorem
\ref{thm:ubanysubset}.

Our approach at the high level is comprised of the following steps:
\begin{enumerate}
\item Consider a feasible FSP $(q,\bar{J})$ on $[0,T]$ for Theorem
  \ref{thm:stationarylbref}, i.e., $q(0) = 0$, $q(t) = 1$ for some $t
  \in [0,T]$. We show, by ``magnifying'' the FSP about some $\tau \in
  [0,T]$ and taking ``local'' fluid limits, that the ``unit
  large-deviations cost'' of raising the longest queue in the
  associated Local Fluid Sample Path (LFSP)
  \cite{shasto01,stol:ldexp} at $\tau$ is close to the total
  FSP cost $\bar{J}_T$.
\item Thus, a further lower bound on the Max-Exp rate function is the
  least ``large-deviations cost per unit increase of longest queue''
  over all feasible local fluid sample paths -- call it $J_*$.
\item In the context of Theorem \ref{thm:ubanysubset}, we exhibit
  suitable twisted subset distributions $\phi_\alpha \in \Pi_\alpha$
  $\forall \alpha$ such that the RHS of (\ref{eqn:liminfsubset}) is at
  most $J_*$, proving the claimed result. \\
\end{enumerate}

\subsubsection{From Low Cost FSPs to Low Cost Local FSPs}
The variational problem on the right-hand side of
(\ref{eqn:stationarylbref}) necessitates a closer look at the
derivatives of fluid sample paths under the Max-Exp scheduling
algorithm. At the same time, since the Max-Exp rule naturally operates
at the $O(\sqrt{n})$ timescale, derivative information typically is
``washed out'' of the standard $O(n)$-scaled fluid sample paths. This
motivates us to define and use Local Fluid Sample Paths (LFSPs) with a
$O(\sqrt{n})$-type scaling, in which information about scheduling
choices and drifts can be clearly
understood with regard to the Max-Exp scheduling rule. \\

The formal LFSP construction is along the lines of that used in
\cite{shasto01,stol:ldexp}, and is as follows. Consider a
standard fluid sample path on $[0,T]$ (along with its prelimit
functions) and call it $\psi$. Let us introduce the ``recentered''
queue lengths
\[\tilde{Q}_i^{(n)}(t) \bydef Q_i^{(n)}(t) - b_i\sqrt{\bar{Q}^{(n)}(t)},  \]
where $b_i$, $i = 1, \ldots, N$ are such that for each observable
subset $\alpha$, the vector $(e^{b_i})_{i \in \alpha}$ is an outer
normal to the subset rate region $V_\alpha$ (under the natural
marginal distribution of the sub-state $R_\alpha(1)$) at some point
$v^*_\alpha \in V_\alpha$ such that $v^*_\alpha >
\lambda|_\alpha$. The fluid-scaled version of $\tilde{Q}_i^{(n)}$ is
\[\tilde{q}_i^{(n)}(t) = q_i^{(n)}(t) - \frac{b_i}{\sqrt{n}}\sqrt{\bar{q}^{(n)}(t)},  \]
so we have the uniform convergence
\[ \tilde{q}_i^{(n)} \to q_i, \] 
and
\[ \tilde{q}_*^{(n)} \bydef \max_i \tilde{q}_i^{(n)} \to q_* \bydef \max_i q_i. \]

Let $\tau \in [0,T]$ be fixed, such that $q_*(\tau) > 0$. Also, fix $S
> 0$ and set $\sigma_n \bydef
\frac{1}{\sqrt{n}}\sqrt{\bar{q}^{(n)}(\tau)}$. Suppose we pick a
sequence of time intervals $[t_1^{(n)}, t_2^{(n)}] \subseteq [0,T]$,
indexed by $n$, such that $t_2^{(n)}- t_1^{(n)} = S\sigma_n$ and
$t_1^{(n)} \to \tau$ as $n \to \infty$. Then, for each $n$ and $s \in
[0,S]$, consider the following ``centered'' and ``rescaled''
functions:
\begin{align*}
\di q_i^{(n)}(s) &\bydef \frac{1}{\sigma_n}[\tilde{q}_i^{(n)}(t_1^{(n)} + \sigma_n s) - \tilde{q}_*^{(n)}(t_1^{(n)})], \quad i = 1,\ldots, N, \\
\di q_*^{(n)}(s) &\bydef \max_i \di q_i^{(n)}(s) = \frac{1}{\sigma_n}[\tilde{q}_*^{(n)}(t_1^{(n)} + \sigma_n s) - \tilde{q}_*^{(n)}(t_1^{(n)})], \\
\di f_i^{(n)}(s) &\bydef \frac{1}{\sigma_n}[f_i^{(n)}(t_1^{(n)} + \sigma_n s) - f_i^{(n)}(t_1^{(n)})], \quad i = 1,\ldots,N, \\
\di \hat{f}_i^{(n)}(s) &\bydef \frac{1}{\sigma_n}[\hat{f}_i^{(n)}(t_1^{(n)} + \sigma_n s) - \hat{f}_i^{(n)}(t_1^{(n)})], \quad i = 1,\ldots,N, \\
\di c_\alpha^{(n)}(s) &\bydef \frac{1}{\sigma_n}[c_\alpha^{(n)}(t_1^{(n)} + \sigma_n s) - c_\alpha^{(n)}(t_1^{(n)})], \quad \alpha \in \mathcal{O}, \\
\di g_r^{\alpha,(n)}(s) &\bydef \frac{1}{\sigma_n}[g_r^{\alpha,(n)}(t_1^{(n)} + \sigma_n s) - g_r^{\alpha,(n)}(t_1^{(n)})], \quad \alpha \in \mathcal{O}, r \in \mathcal{R}_\alpha, \\
\di \hat{g}_{ri}^{\alpha,(n)}(s) &\bydef \frac{1}{\sigma_n}[\hat{g}_{ri}^{\alpha,(n)}(t_1^{(n)} + \sigma_n s) - \hat{g}_{ri}^{\alpha,(n)}(t_1^{(n)})], \quad \alpha \in \mathcal{O}, r \in \mathcal{R}_\alpha, i = 1,\ldots,N, \\
\di m^{(n)}(s) &\bydef \frac{1}{\sigma_n}[m^{(n)}(t_1^{(n)} + \sigma_n s) - m^{(n)}(t_1^{(n)})].
\end{align*}



It follows that we can choose a subsequence of $n$ along which the
following uniform convergence to Lipschitz functions holds on $[0,S]$
\cite{stol:ldexp}:
\begin{align}
&\left(\di q^{(n)}, \di q_*^{(n)}, \di f^{(n)}, \di \hat{f}^{(n)}, (\di c_\alpha^{(n)})_\alpha, (\di g_r^{\alpha,(n)})_{\alpha r}, (\di \hat{g}_{ri}^{\alpha,(n)})_{\alpha r,i}, \di m^{(n)}\right) \to \nonumber \\
& \quad \left(\di q, \di q_*, \di f, \di \hat{f}, (\di g_r^{\alpha})_{\alpha r}, (\di \hat{g}_{ri}^{\alpha})_{\alpha r,i}, \di m\right). \label{eqn:lfspdef}
\end{align}

Note that each $\di q_i$ can be either finite Lipschitz or $-\infty$;
we appropriately extend the definition of uniform convergence in the
latter case. We call the tuple on the right-hand side of
(\ref{eqn:lfspdef}) above a \emph{Local Fluid Sample Path} at (scaled)
time $\tau$.

We also have the following consequence of the (marginal) convexity of
$\hat{J}$, in close analogy with (\ref{eqn:convexity1}):
\begin{equation}
\label{eqn:convexity2}
\liminf_{n \to \infty} \frac{1}{\sigma_n} [\bar{J}^{(n)}_{t_2^{(n)}}
- \bar{J}^{(n)}_{t_1^{(n)}}] \geq \hat{J}_S\left(\di c, \frac{d \di g}{d
  \di c}\right).
\end{equation} 

The following key lemma, along the lines of Lemma 9.1 in
\cite{stol:ldexp}, is crucial to understand the local timescale
dynamics of the Max-Exp scheduling algorithm:

\begin{lemma}
  \label{lem:lfspdynamics}
  For any LFSP over an interval $[0,S]$, 
  \begin{enumerate}
  \item The following derivatives exist Lebesgue-a.e.\footnote{As
      convention, we take $\frac{d}{dt}(\infty) = 0$.} and are finite:
    \begin{align*}
      \di \dot{q}, \quad \di \dot{q}_*, & \quad \di \dot{c}_\alpha,
      \quad \di \dot{f}, \quad v \bydef \di
      \dot{\hat{f}}, \quad \di \dot{g}^\alpha_r, \quad \di \dot{\hat{g}}^\alpha_{ri}, \quad \di \dot{m}.
    \end{align*}
  \item For every $\alpha \in \mathcal{O}$ and $r \in
    \mathcal{R}_\alpha$, $\di g^\alpha_r \ll \di c_\alpha$ (wrt the
    corresponding Lebesgue-Stieltjes induced measures). Thus, there
    exists ($c_\alpha$-a.e.) a version of the Radon-Nikodym derivative
    $\phi_{\alpha r} \bydef \frac{d \di g^\alpha_r }{d \di
      c_\alpha}$.

  \item The following relations hold, whenever the relevant derivatives
    exist, for $s \in [0,S]$:
  \begin{align}
    \di \dot{f}(s)  &= \lambda, \label{eqn:property1} \\
    \di \dot{q}(s) &= \di \dot{f}(s) - v(s), \label{eqn:property2} \\
    \di q_*(s) &= \max_i \di q_i(s), \label{eqn:property3} \\
    \di \dot{q}_*(s) = \di \dot{q}_i(s) &\quad \mbox{for each $i$ such that} \; \di q_i(s) = \di q_*(s), \label{eqn:property4}\\
    v_i(s) = \sum_{r \in \mathcal{R}_\alpha} \di \dot{\hat{g}}^\alpha_{ri}(s) \mu^\alpha_{ri}& \quad \mbox{for each} \; i \in \alpha, \alpha \in \mathcal{O}, \label{eqn:property5} \\
    \sum_{i \in \alpha} \di \dot{\hat{g}}^\alpha_{ri}(s) &= \di \dot{g}^\alpha_{r}(s), \label{eqn:property6} \\
    \sum_{r \in \mathcal{R}_\alpha} \di \dot{g}^\alpha_{r}(s) &= \di \dot{c}_\alpha(s), \label{eqn:property7} \\
    \sum_{r \in \mathcal{R}_\alpha} \di \phi_{\alpha r}(s) &= 1, \label{eqn:property8} \\
    \sum_{\alpha \in \mathcal{O}} \di \dot{c_\alpha}(s) &= 1, \label{eqn:property9} \\
    \forall \alpha \in \mathcal{O} \quad v_\alpha(s) &= \di \dot{c}_\alpha(s) \times \arg \max_{\eta \in V_{\phi_\alpha}(s)} \ip{e^{\di q(s) + b}}{\eta}_\alpha, \label{eqn:property10} \\
    \sum_{i \in \beta} e^{\di q_i(s) + b_i} < \max_{\alpha \in \mathcal{O}} \sum_{i \in \alpha} e^{\di q_i(s) + b_i} \; &\Rightarrow \; \di \dot{c}_\beta(s) = 0 \quad \mbox{for each} \; \beta \in \mathcal{O}, \label{eqn:property11} \\
    \left. \frac{d}{du} \sum_{i \in \beta} e^{\di q_i(u) +
        b_i}\right|_{u=s} = &\left. \frac{d}{du} \sum_{i \in \gamma}
      e^{\di q_i(u) + b_i}\right|_{u=s} \nonumber \\
    \mbox{whenever} \; \beta, \gamma &\in \arg\max_{\alpha \in
      \mathcal{O}} \sum_{i \in \alpha} e^{\di q_i(s) + b_i},
    \quad \mbox{and} \nonumber \\
    \mbox{$s$ is a regular point of} \; \sum_{i \in \beta} e^{\di
      q_i(s) + b_i}&, \sum_{i \in \gamma} e^{\di q_i(s) + b_i} \;
    \mbox{and} \; \max_{\alpha \in \{\beta,\gamma\}} \sum_{i \in
      \alpha} e^{\di q_i(s) + b_i}. \label{eqn:property12}
  \end{align}
  \end{enumerate}
\end{lemma}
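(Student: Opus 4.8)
This lemma is the analogue, for the two-stage sample-and-schedule Max-Exp rule (Algorithm~\ref{alg:maxexp}) operating on the $O(\sqrt n)$ local timescale, of the fluid characterization in Lemma~9.1 of~\cite{stol:ldexp}. The plan is to derive all twelve relations from three ingredients: (i) uniform Lipschitz continuity of the prelimit processes (inherited from their bounded per-slot increments), which makes every limit function in (\ref{eqn:lfspdef}) Lipschitz and hence differentiable Lebesgue-a.e.; (ii) exact per-slot \emph{bookkeeping identities} satisfied by the prelimit system; and (iii) the defining maximizations of Steps~1 and~2 of Max-Exp, rewritten on the local scale. Part~1 and the a.e.\ existence and finiteness of the derivatives listed in Part~3 follow immediately from (i): each of $\di q^{(n)}, \di f^{(n)}, \di\hat f^{(n)}, \di c^{(n)}, \di g^{(n)}, \di\hat g^{(n)}, \di m^{(n)}$ is uniformly Lipschitz on $[0,S]$ with a common constant, so its limit is Lipschitz, hence differentiable a.e.; the only nonstandard point is that a coordinate $\di q_i$ may equal $-\infty$ on an interval, which is handled by the stated convention $\frac{d}{ds}(\pm\infty)=0$.

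\textbf{Bookkeeping and elementary relations.} I would write each relation's exact prelimit counterpart, pass to the local limit and differentiate a.e. The arrival assumption $F_i^{(n)}(k)=\lambda_i k$ gives $\di f_i(s)=\lambda_i s$, hence (\ref{eqn:property1}). The reflected recursion $Q_i(k{+}1)=[Q_i(k)+A_i(k)-D_i(k)R_i(k)]^+$ has no active reflection at the local scale around a $\tau$ with $q_*(\tau)>0$ whenever $\di q_i(s)$ is finite (the queue is then $\Theta(n)$ and positive), so $\di\dot q_i(s)=\di\dot f_i(s)-v_i(s)$ (the recentering correction $b_i\sqrt{\bar Q^{(n)}}$ contributes a derivative of order $1/\sqrt n$ and drops out), giving (\ref{eqn:property2}). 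The identity $\hat F_i^{(n)}(k)=\sum_{\alpha\ni i}\sum_{r}\mu^\alpha_{ri}\hat G_{ri}^{\alpha,(n)}(k)$, with $\mu^\alpha_{ri}$ the service delivered to $i$ when $\alpha$ is sampled, $r$ observed and $i$ scheduled, gives (\ref{eqn:property5}); the nested counting identities $\sum_{i\in\alpha}\hat G_{ri}^{\alpha,(n)}=G_r^{\alpha,(n)}$, $\sum_{r\in\mathcal R_\alpha}G_r^{\alpha,(n)}=C_\alpha^{(n)}$, and $\sum_{\alpha\in\mathcal O}C_\alpha^{(n)}(k)=k$ give (\ref{eqn:property6}), (\ref{eqn:property7}) and (\ref{eqn:property9}). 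For Part~2, note that $G_r^{\alpha,(n)}$ increments only on slots when $\alpha$ is sampled, so $G_r^{\alpha,(n)}(k_2)-G_r^{\alpha,(n)}(k_1)\le C_\alpha^{(n)}(k_2)-C_\alpha^{(n)}(k_1)$; passing to the limit gives $\di g^\alpha_r(s_2)-\di g^\alpha_r(s_1)\le\di c_\alpha(s_2)-\di c_\alpha(s_1)$, whence $\di g^\alpha_r\ll\di c_\alpha$ and the version $\phi_{\alpha r}=\frac{d \di g^\alpha_r}{d \di c_\alpha}$ exists; differentiating (\ref{eqn:property7}) and dividing by $\di\dot c_\alpha$ (on its support, and by convention elsewhere) yields (\ref{eqn:property8}). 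Relations (\ref{eqn:property3})--(\ref{eqn:property4}) are the elementary facts that a finite max of Lipschitz functions is Lipschitz and that at a common point of differentiability the derivative of the max agrees with that of each maximizing coordinate; relation (\ref{eqn:property12}) is the equally elementary observation that if $h_\beta(s)=h_\gamma(s)$, both $h_\beta,h_\gamma$ are differentiable at $s$, and $\max(h_\beta,h_\gamma)$ is differentiable at $s$, then $h_\beta'(s)=h_\gamma'(s)$ -- here $h_\alpha(s):=\sum_{i\in\alpha}e^{\di q_i(s)+b_i}$, and the regular-point hypotheses are exactly what makes this applicable; no Max-Exp input is needed.

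\textbf{The Max-Exp relations (\ref{eqn:property10})--(\ref{eqn:property11}).} These are the substantive part. The key computation is the local-scale form of the Exponential weights. Writing $Q_i^{(n)}=\tilde Q_i^{(n)}+b_i\sqrt{\bar Q^{(n)}}$ with $Q_i^{(n)},\bar Q^{(n)}=\Theta(n)$ near $t_1^{(n)}$, and using $\tilde Q_i^{(n)}(t_1^{(n)}+\sigma_n s)=n\,\tilde q_*^{(n)}(t_1^{(n)})+\sqrt{n\,\bar q(\tau)}\,\di q_i^{(n)}(s)+o(\sqrt n)$ together with $\sqrt{\bar Q^{(n)}(t_1^{(n)}+\sigma_n s)}=\sqrt{n\,\bar q(\tau)}\,(1+o(1))$, one obtains
\begin{equation*}
\frac{Q_i^{(n)}(t_1^{(n)}+\sigma_n s)}{1+\sqrt{\bar Q^{(n)}(t_1^{(n)}+\sigma_n s)}}=\rho_n(s)+\di q_i^{(n)}(s)+b_i+o(1),
\end{equation*}
where $\rho_n(s)$ depends neither on $i$ nor on the subset. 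Hence $\sum_{i\in\alpha}\exp\bigl(Q_i^{(n)}/(1+\sqrt{\bar Q^{(n)}})\bigr)=e^{\rho_n(s)+o(1)}\bigl(h_\alpha(s)+o(1)\bigr)$, so Step~1 asymptotically selects a subset in $\arg\max_\alpha h_\alpha(s)$; by continuity of the $h_\alpha$, a subset $\beta$ with $h_\beta(s)<\max_\alpha h_\alpha(s)$ is not sampled on a whole neighborhood of local times mapping to $s$, so $\di c_\beta$ is flat there and $\di\dot c_\beta(s)=0$, which is (\ref{eqn:property11}). For (\ref{eqn:property10}), fix $s$ with $\di\dot c_\alpha(s)>0$; among the $\asymp S\sqrt n\,\sqrt{\bar q(\tau)}$ slots in $[t_1^{(n)},t_1^{(n)}+S\sigma_n]$ at which $\alpha$ is sampled -- a number tending to infinity -- the empirical frequency of sub-state $r$ converges, by construction of the LFSP, to $\di\dot g^\alpha_r(s)/\di\dot c_\alpha(s)=\phi_{\alpha r}(s)$; and, by the same weight reduction, Step~2 (the Exponential rule~\cite{shasto01}) schedules, upon observing sub-state $r$, a user in $\arg\max_{i\in\alpha}r_i\,e^{\di q_i(s)+b_i}$ at rate $\mu^\alpha_{ri}$. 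Averaging the resulting per-slot service vector over $r\sim\phi_\alpha(s)$ and multiplying by the sampling rate $\di\dot c_\alpha(s)$, and using the standard Max-Weight fact that an in-subset exponential-type rule attains the supporting point of the subset throughput region -- i.e.\ the averaged-service vector it produces is $\arg\max_{\eta\in V_{\phi_\alpha}(s)}\ip{e^{\di q(s)+b}}{\eta}_\alpha$ -- combined with (\ref{eqn:property5}), gives (\ref{eqn:property10}).

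\textbf{Main obstacle.} The routine part is the bookkeeping; the delicate part is (\ref{eqn:property10}). Two issues need care. First, the subset-selection and scheduling decisions depend on the \emph{exact} prelimit values $\di q_i^{(n)}(s)+b_i$ and on the \emph{exact} sub-states, not merely on their limits, so one must pass to the limit \emph{inside} the $\arg\max$es; this is handled by restricting to regular points, by working throughout with the a.e.-defined derivatives $\di\dot g^\alpha_r,\di\dot c_\alpha$, and by a measurable-selection / further-subsequence argument to accommodate set-valued $\arg\max$es (ties in Steps~1 and~2). Second, one must establish concentration of the within-window empirical sub-state frequencies: the window $[t_1^{(n)},t_1^{(n)}+S\sigma_n]$ shrinks, yet still contains a growing number of slots on which $\alpha$ is sampled, and -- crucially -- by Lemma~\ref{lem:samptraceequiv} the sub-states observed along the sampled trace of $\alpha$ are i.i.d., so their empirical averages over such windows are controlled; reconciling this with having conditioned on a (possibly atypical) fluid sample path is precisely what forces the limiting frequencies to equal the LFSP-prescribed $\phi_{\alpha r}(s)$ rather than the nominal marginals. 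Finally, the calibration $\sigma_n=\tfrac1{\sqrt n}\sqrt{\bar q^{(n)}(\tau)}$ and the recentering by $b_i\sqrt{\bar Q^{(n)}}$ are exactly what make the common term $\rho_n(s)$ of strictly larger order than the $\Theta(1)$ fluctuations carrying the information $\di q_i(s)+b_i$ -- which is why $O(\sqrt n)$, not $O(n)$, is the right timescale for Max-Exp.
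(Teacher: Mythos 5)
Your proposal is correct and follows essentially the same route as the paper's proof: derivative existence from Lipschitz continuity of the prelimits, the bookkeeping identities for (\ref{eqn:property1})--(\ref{eqn:property9}), absolute continuity $\di g^\alpha_r \ll \di c_\alpha$ from the prelimit domination, the differentiability-of-the-max argument for (\ref{eqn:property12}), and the reduction of the Exponential weights to $e^{\di q_i(s)+b_i}$ on the local timescale for (\ref{eqn:property10})--(\ref{eqn:property11}). The only difference is one of detail: the paper's proof is a short sketch that cites Shakkottai--Stolyar \cite{shasto01} for the key weight-convergence and argmax properties, whereas you spell out that computation (the common $\rho_n(s)$ term, the $o(1)$ corrections from the recentering, and the within-window concentration of sub-state frequencies) explicitly.
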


\begin{proof}
  The first assertion of the lemma follows due to the absolute
  continuity of the LFSP functions being considered, which, in turn,
  is a consequence of the corresponding Lipschitz-continuous prelimit
  functions.

  The second assertion of the lemma is due to the fact that $\di
  g_r^{\alpha,(n)} \ll \di c_\alpha^{(n)}$ for the prelimit functions
  -- a queue belonging to a subset cannot be scheduled without first
  choosing the subset.

  As regards the third assertion, properties
  (\ref{eqn:property1})-(\ref{eqn:property9}) follow due to the
  corresponding properties of their prelimit LFSP functions, together
  with the (Lebesgue-a.e.) derivatives. 

  Property (\ref{eqn:property10}) is a key property of the Exponential
  scheduling rule, and has been established previously in the work of
  Shakkottai and Stolyar \cite{shasto01}. It is a consequence of the
  ratios of the $\exp(\cdot)$ terms (for different queues $i$) in the
  definition of the intra-subset Exponential rule (i.e., Step (2) in
  Algorithm \ref{alg:maxexp}) converging to the ratios $e^{\di q_i(s)
    + b_i}$ on the LFSP time scale.
  
  Property (\ref{eqn:property11}) follows from Max-Exp's subset
  selection criterion (i.e., Step (1) in Algorithm \ref{alg:maxexp})
  applied to the prelimit LFSP functions, along with the convergence
  of the $\exp(\cdot)$ terms to the ratios $e^{\di q_i(s) + b_i}$ as
  noted above.

  To show (\ref{eqn:property12}), we argue by contradiction. If $s$ is
  a regular point of the subset functions $L_{\beta}(\cdot) \bydef
  \sum_{i \in \beta} e^{\di q_i(\cdot) + b_i}$ and $L_{\gamma}(\cdot)
  \bydef \sum_{i \in \gamma} e^{\di q_i(\cdot) + b_i}$, and
  $L_{\beta}(s) = L_{\gamma}(s)$ holds, but
  \[ \left. \frac{d}{du} L_{\beta}(s) \right|_{u=s} \neq
  \left. \frac{d}{du} L_{\gamma}(s) \right|_{u=s}, \] then a simple
  argument shows that the function
  $\max(L_{\beta}(\cdot),L_{\gamma}(\cdot))$ cannot be differentiable
  at $s$, yielding a contradiction.  \\
\end{proof}

With this framework of LFSPs set up, we can resume the main
development from Theorem \ref{thm:stationarylbref}. Consider a
feasible GFSP $\psi$ on $[0,T]$ for the right-hand side of
(\ref{eqn:stationarylbref}) (i.e., for which $q(0) = 0$ and
$||q(t)||_\infty = 1$ for some $t \in [0,T]$), and whose refined cost
is $\bar{J}_t$.  Fix an arbitrary $\epsilon > 0$. Then, there must
exist a time point $\tau \in (0,t)$ such that $q_*(\tau) > 0$,
$q'_*(\tau) > 0$, $\bar{J}'_\tau > 0$, and
\[ \frac{\bar{J}'_\tau}{q'_*(\tau)} < \bar{J}_t + \epsilon.\]

Continuing using a technique similar to that in \cite[Section
11]{stol:ldexp}, we can show that for an arbitrary $S > 0$ and
sufficiently large $n$, we can find intervals $[t_1^{(n)},t_2^{(n)}]$
and $[t_1,t_2]$ such that $\tau \in [t_1^{(n)},t_2^{(n)}] \subset
[t_1,t_2]$, and with

\begin{align}
  &t_2^{(n)} - t_1^{(n)} = \frac{S}{\sqrt{n}}
  \sqrt{\bar{q}^{(n)}\left(t_1^{(n)}\right)}, \nonumber \\
  &\tilde{q}^{(n)}_*(t_2^{(n)}) - \tilde{q}^{(n)}_*(t_1^{(n)}) > 0, \label{eqn:maxqincrease} \\
  &\frac{\bar{J}^{(n)}_{t_2^{(n)}} - \bar{J}^{(n)}_{t_1^{(n)}} }{
    \tilde{q}^{(n)}_*(t_2^{(n)}) - \tilde{q}^{(n)}_*(t_1^{(n)}) } <
  \bar{J}_t + 3\epsilon. \label{eqn:ratio}
\end{align}

We can choose a subsequence of $\{n\}$ above so that, for some $\tau_1
\in [t_1,t_2]$, we have the left endpoints $t_1^{(n)} \to \tau_1$ (so
that $t_2^{(n)} \to \tau_1$ as well). Then, let us choose a further
subsequence such that 
\[\left(\di q^{(n)}, \di q_*^{(n)}, \di f^{(n)},
  \di \hat{f}^{(n)}, (\di c_\alpha^{(n)})_\alpha, (\di
  g_r^{\alpha,(n)})_{\alpha r}, (\di
  \hat{g}_{ri}^{\alpha,(n)})_{\alpha r,i}, \di m^{(n)}\right)\]
converges to an LFSP $\left(\di q, \di q_*, \di f, \di \hat{f}, (\di
  g_r^{\alpha})_{\alpha r}, (\di \hat{g}_{ri}^{\alpha})_{\alpha r,i},
  \di m\right)$ on the {\em local} time interval $[0,S]$.

We claim that there must exist $\epsilon_1 > 0$ such that 
\begin{equation*}
  \hat{J}_S(\di c, \phi) - \hat{J}_0(\di c, \phi) \geq \epsilon_1 S 
\end{equation*}
(recall that $\phi = \frac{d \di g}{d \di c}$ as defined in Lemma
\ref{lem:lfspdynamics}). If not, then $\hat{J}_S(\di c, \phi) =
\hat{J}_0(\di c, \phi)$, which means that all observed channel state
distributions over subsets are {\em exactly} typical. Since, by
hypothesis, the arrival rate vector $\lambda$ lies in the interior of
the throughput region, this contradicts the fact that the longest
queue in the system does not decrease (\ref{eqn:maxqincrease}).

The inequality above, together with the lower bound
(\ref{eqn:convexity2}) and the relation (\ref{eqn:ratio}), gives us that
there exists $\epsilon_2 > 0$ such that
\begin{equation}
  \di q_*(S) - \di q_*(0) \geq \epsilon_2 S. \label{eqn:lfspmaxqinc}
\end{equation}
We thus arrive at the inequality
\begin{equation}
  \frac{\hat{J}_S(\di c, \phi) - \hat{J}_0(\di c, \phi)}{\di q_*(S) - \di q_*(0)} \leq \bar{J}_t + 3\epsilon. \label{eqn:lfspcost}
\end{equation}
In other words, we are able to approximate the cost of FSPs
arbitrarily well with the ``unit cost'' of raising $\di q_*$ in
suitably constructed LFSPs. \\

\subsubsection{A Relaxed Lower Bound on Rate Function in Terms of LFSP
  Costs}
We use the techniques of the previous section to further lower-bound
the queue overflow exponent of the Max-Exp rule. For a general LFSP,
we introduce the following ``potential function'' of its queue state:
\[\Psi(\di q) \bydef \max_{\alpha \in \mathcal{O}} \Psi_\alpha(\di q) \equiv \max_{\alpha \in \mathcal{O}} \sum_{i \in \alpha} e^{\di q_i + b_i}, \]
together with its logarithm
\[ \Phi(\di q) \bydef \log \Psi(\di q) = \max_{\alpha} \log \Psi_\alpha(\di q).\]

\emph{Fact:} The function $\Phi(\di q)$ uniformly approximates $\di
q_* \equiv ||\di q||_\infty$, in the sense that $||\Phi(\di q) - \di
q_*|| \leq \Delta$ for some fixed $\Delta > 0$. \\

Now, consider an FSP feasible for the infimum\footnote{If the infimum
  is not attainable, it suffices to consider an FSP $\epsilon'$-close
  to the infimum, with $\epsilon' > 0$.} (\ref{eqn:stationarylbref})
in Theorem \ref{thm:stationarylbref}. By combining the above fact with
the conclusions of the previous section (i.e. properties
(\ref{eqn:lfspmaxqinc}) and (\ref{eqn:lfspcost})), we have that for an
arbitrarily small $\epsilon > 0$, an LFSP can be constructed on
$[0,S]$, with $S > 0$ suitably large, so that the following properties
hold with $\epsilon_2 > 0$:
\begin{align}
  \Phi(\di q(S)) - \Phi(\di q(0)) &\geq (\epsilon_2/2) S, \label{eqn:lfspcost3}\\
  \frac{\hat{J}_S(\di c, \phi) - \hat{J}_0(\di c, \phi)}{\Phi(\di q(S)) -
    \Phi(\di q(0))} &\leq \bar{J}_t + 2\epsilon. \label{eqn:lfspcost4}
\end{align}
In the sequel, we will concentrate on the LHS of (\ref{eqn:lfspcost4})
-- modulo an arbitrarily small $\epsilon > 0$, it is a lower bound on
the original FSP cost $\bar{J}_t$. We have
\begin{align} 
\frac{\hat{J}_S(\di c, \phi) - \hat{J}_0(\di c, \phi)}{\Phi(\di q(S)) -
    \Phi(\di q(0))} &= \frac{\int_0^S \frac{d}{ds} \hat{J}_s(\di c, \phi) ds }{\int_0^S \frac{d}{ds} \Phi(\di q(s))ds} \nonumber \\
  &\geq \inf_{s \in [0,S]} \frac{\frac{d}{ds} \hat{J}_s(\di c, \phi)}{\frac{d}{ds} \Phi(\di q(s))} \nonumber \\
  &= \inf_{s \in [0,S]} \frac{\sum_{\alpha}
  \di \dot{c}_\alpha(s) \; \Lambda_\alpha^* \left( \phi_\alpha(s) \right)}{\frac{d}{ds} \Phi(\di q(s))} \quad \mbox{(Lemma \ref{lem:lfspdynamics})}. \nonumber 
\end{align}
As a consequence of the above inequality\footnote{We have abused
  notation to indicate that the infimum above is, in fact, over the
  (Lebesgue-a.e.) \emph{regular} points $s \in [0,S]$.}, we can record
the following result:
\begin{proposition}
  \label{prop:lfsplb}
  If $\mathbb{P}$ denotes the stationary measure induced by the
  Max-Exp policy, then
  \begin{align}
    -&\limsup_{n \to \infty} \frac{1}{n} \log \mathbb{P}\left[||q^{(n)}(0)||_\infty \geq 1\right] \geq \; \inf_{s \in [0,S]} \frac{\sum_{\alpha}
  \di \dot{c}_\alpha(s) \; \Lambda_\alpha^* \left( \phi_\alpha(s) \right)}{\frac{d}{ds} \Phi(\di q(s))} \label{eqn:lblfsp},
  \end{align}  
  for any valid Local Fluid Sample Path (LFSP) as specified by
  (\ref{eqn:lfspdef}).
\end{proposition}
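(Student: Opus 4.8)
The plan is to obtain (\ref{eqn:lblfsp}) by composing three ingredients that are already in place: the finite-horizon-to-stationary lower bound of Theorem~\ref{thm:stationarylbref}, the passage from low-cost GFSPs to low-cost LFSPs carried out in the previous subsection, and the ``cost per unit drift'' computation in the display immediately preceding the proposition. First I would apply Theorem~\ref{thm:stationarylbref} to bound the stationary overflow exponent below by $\inf_{T\ge 0}\inf\{\bar{J}_t : \psi\ \text{a GFSP on}\ [0,T],\ q(0)=0,\ ||q(t)||_\infty\ge 1\ \text{for some}\ t\in[0,T]\}$. Fixing $\epsilon>0$, I would pick a feasible GFSP $\psi$ whose refined cost $\bar{J}_t$ is within $\epsilon$ of this infimum (working $\epsilon$-close suffices if the infimum is not attained).

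Next I would invoke the construction of the previous subsection: there is a magnification time $\tau\in(0,t)$ with $q_*(\tau)>0$, $q'_*(\tau)>0$, $\bar{J}'_\tau>0$ and $\bar{J}'_\tau/q'_*(\tau)<\bar{J}_t+\epsilon$, and a local horizon $S>0$, such that blowing up $\psi$ about $\tau$ at the $O(\sqrt{n})$ scale and passing to a subsequence produces a valid LFSP $(\di q,\di q_*,\di f,\di \hat{f},(\di g_r^\alpha),(\di \hat{g}_{ri}^\alpha),\di m)$ on $[0,S]$ for which (\ref{eqn:lfspcost3}) and (\ref{eqn:lfspcost4}) hold; the \emph{Fact} that $\Phi(\di q)$ approximates $\di q_*=||\di q||_\infty$ uniformly within a fixed $\Delta$ is what lets us replace the longest queue by the potential $\Phi$ in the denominator. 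I would then write $\Phi(\di q(S))-\Phi(\di q(0))=\int_0^S \frac{d}{ds}\Phi(\di q(s))\,ds$ and $\hat{J}_S(\di c,\phi)-\hat{J}_0(\di c,\phi)=\int_0^S \frac{d}{ds}\hat{J}_s(\di c,\phi)\,ds$, use parts~(1)--(2) of Lemma~\ref{lem:lfspdynamics} (existence and finiteness of the LFSP derivatives, and $\di g_r^\alpha\ll\di c_\alpha$ with a version of the Radon--Nikodym derivative $\phi_{\alpha r}$) to identify $\frac{d}{ds}\hat{J}_s(\di c,\phi)=\sum_\alpha \di \dot{c}_\alpha(s)\,\Lambda^*_\alpha(\phi_\alpha(s))$ at a.e.\ regular point, and bound the quotient of the two integrals below by the essential infimum over regular $s\in[0,S]$ of the quotient of the integrands. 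Combined with (\ref{eqn:lfspcost4}) this yields $\inf_{s}\frac{\sum_\alpha \di \dot{c}_\alpha(s)\,\Lambda^*_\alpha(\phi_\alpha(s))}{\frac{d}{ds}\Phi(\di q(s))}\le\bar{J}_t+2\epsilon$, and since the overflow exponent is $\ge\bar{J}_t$, letting $\epsilon\downarrow 0$ and recording the inequality along the LFSP so constructed gives (\ref{eqn:lblfsp}). (In the subsequent optimality argument the right-hand side is then further minimized over all valid LFSPs meeting the relevant feasibility constraints.)

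The one genuinely non-routine point -- everything else being bookkeeping with absolutely continuous functions and the fundamental theorem of calculus -- is ensuring the denominator $\frac{d}{ds}\Phi(\di q(s))$ is strictly positive on a set of positive measure, so that the quotient-infimum is not a vacuous $0/0$. This is exactly where (\ref{eqn:lfspcost3}) (equivalently (\ref{eqn:lfspmaxqinc})) is needed: the strict increase $\Phi(\di q(S))-\Phi(\di q(0))\ge(\epsilon_2/2)S>0$ guarantees it, and that strict increase was itself obtained from the refined cost's being bounded away from zero along the LFSP together with (\ref{eqn:convexity2}) and the ratio estimate (\ref{eqn:ratio}). The positivity of the refined cost, in turn, uses that $\lambda$ lies in the interior of the Max-Exp throughput region -- if $\hat{J}_S(\di c,\phi)=\hat{J}_0(\di c,\phi)$, the observed sub-state distributions over subsets would be \emph{exactly} typical, contradicting the fact that along the prelimit the longest queue does not decrease (cf.\ (\ref{eqn:maxqincrease})). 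A secondary technicality is that some LFSP coordinates $\di q_i$ may be $-\infty$; these are absorbed by the convention $\frac{d}{dt}(\infty)=0$ and by the observation that such coordinates contribute nothing to $\Psi_\alpha=\sum_{i\in\alpha}e^{\di q_i+b_i}$ for the subset(s) attaining the outer maximum, so differentiability of $s\mapsto\Phi(\di q(s))$ at regular points and property (\ref{eqn:property12}) go through unchanged.
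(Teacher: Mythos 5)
Your proposal follows the paper's own route exactly: you assemble Theorem~\ref{thm:stationarylbref}, the magnification construction of the preceding subsection yielding an LFSP with properties (\ref{eqn:lfspcost3})--(\ref{eqn:lfspcost4}), the potential-function approximation $\Phi \approx \di q_*$, the fundamental-theorem/quotient-of-integrals bound, and Lemma~\ref{lem:lfspdynamics} to identify the integrand, and you correctly flag that (\ref{eqn:lfspmaxqinc})/(\ref{eqn:lfspcost3}) is what keeps the denominator strictly positive (tracing it back to $\lambda$ lying in the interior of the throughput region). The only microscopic imprecision is asserting \emph{exponent} $\ge \bar{J}_t$ rather than $\ge \bar{J}_t - \epsilon'$ for the $\epsilon'$-near-optimal GFSP chosen, but this is absorbed when letting $\epsilon \downarrow 0$ and does not affect the conclusion.
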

Letting $J_*$ denote the infimum on the RHS of (\ref{eqn:lblfsp})
over \emph{all} valid LFSPs, a further lower bound on the buffer
overflow exponent of Max-Exp is thus $J_*$.  \\
  
\subsubsection{Connecting the relaxed Lower Bound to the uniform Upper
  Bound}
The crucial final step in establishing the large-deviations optimality
of the Max-Exp algorithm is to show that the lower bound on its decay
exponent $J_*$ is, in fact, a uniform upper bound on the decay
exponent of any stabilizing scheduling policy, on the lines of Theorem
\ref{thm:ubanysubset}. The proof may be found in Appendix
\ref{app:mexpopt}, and uses the disjointness of the collection of
observable subsets $\mathcal{O}$ in a key way. 

\begin{theorem}[Optimality of Max-Exp]
  \label{thm:mexpopt}
  Let $\pi$ be any stabilizing scheduling policy (i.e., a stabilizing
  policy that bases its decision on the current (or any finite past
  history of) system state) for arrival rates $\lambda = (\lambda_1,
  \ldots, \lambda_n)$, and let $\mathbb{P}^\pi$ be the associated
  stationary measure. Then,
  \[ -\liminf_{n \to \infty} \frac{1}{n} \log
  \mathbb{P}^\pi\left[||q^{(n)}(0)||_\infty \geq 1\right] \leq J_*,\]
  i.e., Max-Exp has the optimal large-deviations exponent (equal to
  $J_*$) over all stabilizing scheduling policies with subset-based
  partial channel state information.
\end{theorem}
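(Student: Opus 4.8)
The plan is to combine the relaxed lower bound $J_*$ for Max-Exp from Proposition~\ref{prop:lfsplb} --- the infimum of the right-hand side of (\ref{eqn:lblfsp}) over all valid Max-Exp LFSPs and all regular points $s$ --- with the universal upper bound of Theorem~\ref{thm:ubanysubset}. Abbreviating the right-hand side of (\ref{eqn:liminfsubset}) by $g(c',\phi) \bydef \frac{\sum_{\alpha}c_\alpha'\Lambda_\alpha^*(\phi_\alpha)}{\max_{\alpha}\max_{v_\alpha\in V_{\phi_\alpha}}\max_{i\in\alpha}(\lambda_i - c_\alpha'v_{\alpha,i})}$ for $c'$ in the simplex $\Delta = \{c'\ge 0:\sum_\alpha c_\alpha' = 1\}$ and for tuples $\phi=(\phi_\alpha)_{\alpha\in\mathcal O}$ with $\lambda\notin\mathcal{CH}((V_{\phi_\alpha})_\alpha)$, Theorem~\ref{thm:ubanysubset} gives $-\liminf_n\frac1n\log\mathbb P^\pi[\,\cdot\,]\le\sup_{c'\in\Delta}g(c',\phi)$ for every stabilizing $\pi$ and every admissible $\phi$. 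Hence it suffices to prove $\inf_{\phi}\sup_{c'\in\Delta}g(c',\phi)\le J_*$; and since $J_*$ is itself an infimum over LFSPs and regular points, it is in fact enough to exhibit, for each valid Max-Exp LFSP and each regular $s$, an admissible $\phi$ (allowed to depend on the LFSP and on $s$) for which $\sup_{c'\in\Delta}g(c',\phi)$ does not exceed the value of the ratio in (\ref{eqn:lblfsp}) at $s$. This is the local-timescale analogue of how Proposition~\ref{prop:mqopt} closes the singleton case in Appendix~\ref{app:mqopt}.

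The first step is to make the ratio in (\ref{eqn:lblfsp}) explicit along a fixed valid Max-Exp LFSP by invoking Lemma~\ref{lem:lfspdynamics}. At a.e.\ regular $s$ we have $\di\dot{q}(s)=\lambda-v(s)$ with $v_\alpha(s)=\di\dot{c}_\alpha(s)\,\arg\max_{\eta\in V_{\phi_\alpha(s)}}\ip{e^{\di q(s)+b}}{\eta}_\alpha$ and $\phi_\alpha(s)=(\phi_{\alpha r}(s))_{r\in\mathcal R_\alpha}$ a probability vector (by (\ref{eqn:property1}), (\ref{eqn:property2}), (\ref{eqn:property8}), (\ref{eqn:property10})); only the subsets $\beta$ attaining $\Psi(\di q(s))=\max_\alpha\sum_{i\in\alpha}e^{\di q_i(s)+b_i}$ are sampled, i.e.\ $\di\dot{c}_\beta(s)=0$ for the others (by (\ref{eqn:property11})); and the potentials $\Psi_\beta(\di q(\cdot))$ of all attaining subsets share a common derivative at $s$ (by (\ref{eqn:property12})). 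Writing $p_i(s)=e^{\di q_i(s)+b_i}/\Psi(\di q(s))$ --- a probability vector when restricted to any attaining $\beta$ --- we then obtain, for every attaining $\beta$, $\frac{d}{ds}\Phi(\di q(s))=\ip{p(s)}{\lambda}_\beta-\di\dot{c}_\beta(s)\max_{\eta\in V_{\phi_\beta(s)}}\ip{p(s)}{\eta}_\beta$, which is strictly positive along the relevant portion of the LFSP because the longest queue is forced to increase there (this uses $\lambda$ interior to the throughput region, cf.\ (\ref{eqn:maxqincrease})). So the ratio in (\ref{eqn:lblfsp}) at $s$ equals $\frac{\sum_\beta\di\dot{c}_\beta(s)\,\Lambda_\beta^*(\phi_\beta(s))}{\,\ip{p(s)}{\lambda}_\beta-\di\dot{c}_\beta(s)\max_{\eta\in V_{\phi_\beta(s)}}\ip{p(s)}{\eta}_\beta\,}$, the sum running over attaining $\beta$ and the denominator being independent of which attaining $\beta$ is used.

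A workable choice of admissible $\phi$ is: $\phi_\beta = \phi_\beta(s)$ for each attaining $\beta$, and $\phi_\alpha = $ the natural sub-state distribution of $\alpha$ (so $\Lambda_\alpha^*(\phi_\alpha)=0$, $V_{\phi_\alpha}=V_\alpha$) for all other $\alpha$. Admissibility, i.e.\ $\lambda\notin\mathcal{CH}((V_{\phi_\alpha})_\alpha)$, holds automatically: by disjointness of $\mathcal O$, $\mathcal{CH}((V_{\phi_\alpha})_\alpha)=\{\sum_\alpha\kappa_\alpha u_\alpha:\kappa\in\Delta,\ u_\alpha\in V_{\phi_\alpha}\}$, so $\lambda$ lies in it only if $\lambda|_\alpha/\kappa_\alpha\in V_{\phi_\alpha}$ for all $\alpha$ and some $\kappa\in\Delta$; but the strict positivity above separates $\lambda|_\beta/\di\dot{c}_\beta(s)$ from $V_{\phi_\beta(s)}$ by the hyperplane with normal $p(s)|_\beta$ for each attaining $\beta$, and each $V_{\phi_\alpha}$ is down-closed, so a short case check on $\kappa$ (either some attaining $\gamma$ has $\kappa_\gamma\le\di\dot{c}_\gamma(s)$, or all attaining weights meet $\di\dot{c}(s)$ and hence equal it) rules out every $\kappa$. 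For the value bound, the elementary inequality is that for any $c'\in\Delta$ there is an attaining $\gamma$ with $c'_\gamma\le\di\dot{c}_\gamma(s)$ (since $\sum_{\text{attaining}}c'_\beta\le1=\sum_{\text{attaining}}\di\dot{c}_\beta(s)$), and for that $\gamma$ the $\gamma$-term of the denominator of $g(c',\phi)$ satisfies $\max_{v\in V_{\phi_\gamma(s)}}\max_{i\in\gamma}(\lambda_i-c'_\gamma v_i)\ge\ip{p(s)}{\lambda}_\gamma-c'_\gamma\max_{\eta\in V_{\phi_\gamma(s)}}\ip{p(s)}{\eta}_\gamma\ge\ip{p(s)}{\lambda}_\gamma-\di\dot{c}_\gamma(s)\max_\eta\ip{p(s)}{\eta}_\gamma$ (a $p(s)|_\gamma$-average is at most the max, and $c'_\gamma\le\di\dot{c}_\gamma(s)$, $v_i\ge0$, $\max_\eta\ip{p(s)}{\eta}_\gamma\ge0$). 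Together with $\max_{v\in V_\alpha}\max_{i\in\alpha}(\lambda_i-c'_\alpha v_i)\ge\max_{i\in\alpha}\lambda_i$ for the natural-distribution subsets, this makes the denominator of $g(c',\phi)$ at least that of the ratio above; controlling the numerator, and in particular handling the several-attaining-subset case where the balancing identity (\ref{eqn:property12}) must be used to relate the $\phi_\beta(s)$ and the sampling weights, then yields $g(c',\phi)\le(\text{ratio at }s)$ for all $c'\in\Delta$. Taking $\sup$ over $c'$, then $\inf$ over $s$ and over valid LFSPs, gives $\inf_\phi\sup_{c'}g(c',\phi)\le J_*$, which is the theorem.

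I expect the main obstacle to be precisely the case where $\Psi(\di q(\cdot))$ is attained by several observable subsets on a positive-measure set: then the numerator of the ratio is a genuine convex combination $\sum_\beta\di\dot{c}_\beta(s)\Lambda_\beta^*(\phi_\beta(s))$ of distinct rate functions rather than a single term, the crude ``average $\le$ max'' estimates alone no longer close the gap, and one must use the Exponential-rule balancing identity (\ref{eqn:property12}) (the Max-Exp analogue of ``all longest queues grow at a common rate''), the convexity and lower-semicontinuity of the $\Lambda_\alpha^*$, and the down-closed/product structure of the disjoint-subset throughput regions, to choose the twisted distributions and the balancing weights jointly and obtain the right non-convex variational bound. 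This generalizes the singleton computation of Appendix~\ref{app:mqopt} and is where the bulk of the work sits; it is carried out in Appendix~\ref{app:mexpopt}, together with the $O(\sqrt{n})$ local-fluid plumbing (selection of the intervals $[t_1^{(n)},t_2^{(n)}]$, extraction of LFSP limits, and the Friedlin--Wentzell passage from finite horizons to the stationary measure).
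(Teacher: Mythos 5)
Your overall architecture is right (combine Proposition~\ref{prop:lfsplb} with Theorem~\ref{thm:ubanysubset} by exhibiting, for each LFSP and regular point $s$, an admissible twist $\phi$ whose worst-case upper bound does not exceed the LFSP ratio at $s$), but the concrete twist you commit to --- $\phi_\beta = \phi_\beta(s)$, the LFSP's own realized empirical distributions --- does not work, and the mechanism that would repair it is exactly the piece your proposal leaves as a placeholder. The inequality $\sup_{c'}g(c',\phi)\le f(\phi)$, where $f(\phi)$ is the balanced-drift value, holds only when $\phi$ satisfies the first-order optimality conditions of the reduced variational problem (this is the entire content of part 2 of Lemma~\ref{lem:fsmin}, whose proof runs through the stationarity identities (\ref{eqn:partialfS})--(\ref{eqn:ndlb2})); for a generic realized $\phi(s)$ it is false. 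Concretely, already in the singleton setting: take two Bernoulli channels with means $0.99$ and $0.6$, arrivals $\lambda=(0.2,0.4)$, and an LFSP at a regular point where both queues are maximal, with sampling frequencies $c'=(2/7,5/7)$ and empirical means $\phi(s)=(0.05,0.3)$. The drifts balance at $w'\approx 0.186$ and the LFSP ratio is $\approx 7.13$, but $g((1,0),\phi(s)) = \Lambda_1^*(0.05)/\max(0.15,\,0.4) \approx 4.18/0.4 \approx 10.4$. Your pigeonhole/``average $\le$ max'' estimates do control the denominator here (indeed $0.4\ge w'$), but an adversarial $c'$ concentrates sampling on the heavily twisted channel and inflates the \emph{numerator} of $g$ far above the LFSP numerator; nothing in your argument prevents this. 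The fix is the paper's: do not use $\phi(s)$, but first lower-bound the LFSP ratio by the infimum of the reduced problem (the analogue of $f^S$ over $\mathcal{D}_S$) and then feed the \emph{minimizer} $\hat\phi$ of that problem into Theorem~\ref{thm:ubanysubset}; its stationarity conditions are precisely what defeat every reweighting $d'$.

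Two further ingredients of the actual proof are absent from your sketch. First, for non-singleton subsets the reduction to the one-dimensional problem is not immediate: the paper contracts each subset's Sanov rate function $\Lambda^*_\alpha$ through the linear functional $\xi_\alpha(\phi_\alpha)=\max_{\eta\in V_{\phi_\alpha}}\ip{e^{\di q(s)+b}}{\eta}_\alpha$ to obtain induced good rate functions $\tilde\Lambda^*_\alpha$ on $\mathbb{R}^+$, identifies each subset with a hypothetical queue of arrival rate $\rho_\alpha$ and twisted service rate $\nu'_\alpha$, and only then applies the Lemma~\ref{lem:fsmin} machinery; lower-semicontinuity is then used to pull the optimizing $\hat\nu'_\alpha$ back to distributions $\hat\phi'_\alpha$. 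Second, converting between the exponentially weighted denominator $\dot\Phi(\di q(s))$ and the unweighted $\max_{i\in\alpha}(\lambda_i-c'_\alpha v_{\alpha i})$ of (\ref{eqn:liminfsubset}) requires the balanced-drift representation of Lemma 12.2 in \cite{stol:ldexp} (the chain (\ref{eqn:rhsdenom})--(\ref{eqn:ubconnection1})); a single ``$p(s)$-average is at most the max'' step gives only one of the needed comparisons. Your final paragraph correctly names these as the hard points, but naming them and deferring to the appendix is not a proof of them, and the one concrete choice you do make is the one the appendix is designed to avoid.
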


\section{Conclusion}
For scheduling with only partial wireless Channel State Information
(CSI), we developed the Max-Exp and Max-Queue scheduling algorithms
yielding optimal queue overflow tails. This work shows that
structurally simple scheduling algorithms which use partial CSI can
guarantee high performance. Moreover, to control queue backlogs in
such cases, no additional statistical or extraneous information is
explicitly required by the scheduling algorithms.

We hope that this work lays the keystone for further investigations of
the performance of wireless scheduling under different types of
partial information structures. Future directions for research include
studying scheduling with information from general user subsets,
temporally varying constraints on available CSI, and performance under
delayed CSI with time-correlated channels.



\appendix

\section{Proof of Proposition \ref{prop:lbT}}
\label{app:lbT}
In the $n$th system, consider the joint channel states for the first
$nT$ time slots, i.e., $\left(R^{(n)}(1),\right.$ $\left.R^{(n)}(k),
  \ldots, R^{(n)}(nT)\right)$, with each $R^{(n)}(k) \in \mathcal{R}^N
\subset \mathbb{R}^N$. Since our sampling/scheduling rule is
deterministic, the exact time slots in $\{1,\ldots,nT\}$ at which user
$i$ is sampled depend entirely on these joint channel states. To avoid
heavy notation, we will suppress the superscript $(n)$ as all
quantities we deal with refer to the $n$th queueing system. Let $V =
(V_1, \ldots, V_N)$ be the (random) \emph{sampled trace} for the
system upto time $nT$. By this, we mean that each $V_i$ is a vector
with elements from $\mathcal{R}$ that represents all the successively
observed/sampled rates for user $i$, i.e. $V_i = \left(R_i(K_{i_1}),
  R_i(K_{i_2}), \ldots \right)$ where user $i$ is chosen precisely at
time slots $K_{i_1}, K_{i_2}, \ldots $ In other words, $V_i$ is the
ordered row of channel state values sampled by the scheduling policy,
so the sum of the lengths of the $V_i$ is exactly $nT$. In the sequel,
we frequently identify each $V_i$ bijectively with its corresponding
partial sums process $W_i \equiv W(V_i)$.
  
We have the following lemma, due to the crucial fact that {\em for any
  deterministic sampling rule, the sampled trace uniquely specifies at
  what times each user was sampled and its sampled channel states at
  those instants}. By a \emph{valid} sampled trace, we mean a (finite)
sampled trace occurring with nonzero probability. For a valid sampled
trace $w$ in the $n$-th system, let $E(w)$ be the set of all extended
combinations of $w$, i.e. the set of all $(e_1, \ldots, e_N)$ where
each $e_i$ is a vector in $\mathcal{R}^{nT}$ such that $w_i$ is a
prefix of $e_i$.

\begin{lemma}
\label{lem:equivprob}
Let $Z_{ij}$, $i = 1,\ldots, N$, $j = 1, 2, \ldots, nT$ be independent
random variables with $Z_{ij} \sim R_i(0)$ for all $i$ and $j$. Let
$\hat{\mathbb{P}}^{(nT)}$ be the probability measure induced by
$(Z_{ij})_{i,j}$. If $w$ is a valid trace in the $n$-th system, then
for any $n q_0 \in (\mathbb{Z}^+)^N$,
\[\mathbb{P}_{q_0}^{n,T}[W^{(n)} = w] = \hat{\mathbb{P}}^{(nT)}\left[E(w) \right]. \]
\end{lemma}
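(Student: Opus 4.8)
The plan is to show that, for a fixed valid trace, the event $\{W^{(n)} = w\}$ in the original (across-users correlated, across-time i.i.d.) probability space is, as a subset of $\Omega$, \emph{exactly} an intersection of $nT$ constraints, each pinning down a single coordinate of $R^{(n)}(k)$ in a single time slot; the probability of such an intersection then factorizes over time, and regrouping the factors by user yields $\hat{\mathbb{P}}^{(nT)}[E(w)]$.

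First I would fix the valid trace $w = (w_1,\dots,w_N)$ (identified, as in the text, with the trace $V$ it encodes) and, using only that the sampling/scheduling rule is deterministic and \emph{predictable} (its slot-$k$ choice depends on the observed past, i.e.\ on the previously sampled users and their observed rates — equivalently on the induced queue lengths, since arrivals are deterministic), define two \emph{deterministic} sequences $(\hat u(k))_{k=1}^{nT}$ and $(\hat r(k))_{k=1}^{nT}$ recursively: $\hat u(1)$ is the (constant) user sampled at slot $1$ given $Q^{(n)}(0)=nq_0$; having fixed $\hat u(1),\dots,\hat u(k-1)$, set $\hat r(j) := w_{\hat u(j)}(n_j)$ with $n_j := |\{\,\ell \le j : \hat u(\ell) = \hat u(j)\,\}|$ for $j<k$, and let $\hat u(k)$ be the user the rule picks when fed the history $\big((\hat u(1),\hat r(1)),\dots,(\hat u(k-1),\hat r(k-1))\big)$. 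Validity of $w$ guarantees this recursion runs cleanly for all $k \le nT$ and assigns the entries of each $w_i$ bijectively, in order, to the slots $k$ with $\hat u(k)=i$.

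Next I would prove, by induction on $k$, the set identity
\[
\{W^{(n)} = w\} \;=\; \bigcap_{k=1}^{nT} \bigl\{\, R^{(n)}_{\hat u(k)}(k) = \hat r(k) \,\bigr\}.
\]
For $\subseteq$: on $\{W^{(n)}=w\}$ the first $k-1$ observed pairs are exactly $(\hat u(j),\hat r(j))_{j<k}$, so the rule samples $\hat u(k)$ at slot $k$ and, because the realized trace equals $w$, the rate observed there is the corresponding entry of $w_{\hat u(k)}$, i.e.\ $\hat r(k)$. For $\supseteq$: if $R^{(n)}_{\hat u(k)}(k)=\hat r(k)$ for every $k$, then inductively the realized history is $(\hat u(1),\hat r(1)),\dots,(\hat u(nT),\hat r(nT))$, which reassembled in time order per user is precisely $w$. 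Given the identity, since $R^{(n)}(1),\dots,R^{(n)}(nT)$ are i.i.d.\ across slots and each event $\{R^{(n)}_{\hat u(k)}(k)=\hat r(k)\}$ restricts only coordinate $\hat u(k)$ of $R^{(n)}(k)$, independence across slots gives $\mathbb{P}_{q_0}^{n,T}[W^{(n)}=w] = \prod_{k=1}^{nT}\mathbb{P}[R_{\hat u(k)}(0)=\hat r(k)]$ — the joint law of $R(0)$ is irrelevant, only the marginals enter. Regrouping by user turns this into $\prod_{i=1}^N \prod_{j=1}^{|w_i|}\mathbb{P}[R_i(0)=w_i(j)]$, which equals $\hat{\mathbb{P}}^{(nT)}[E(w)]$ since the $Z_{ij}$ are independent with $Z_{ij}\sim R_i(0)$ and $E(w)$ leaves $Z_{i,\,|w_i|+1},\dots,Z_{i,nT}$ unconstrained.

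The main obstacle I expect is the bookkeeping in the induction: making fully precise that ``deterministic $+$ predictable'' forces $\{W^{(n)}=w\}$ to be exactly the displayed intersection (neither larger nor smaller), and handling that the rule's decisions may depend on queue lengths rather than on raw rates — dispatched by observing that queue lengths are deterministic functions of the observed history when arrivals are deterministic. The conceptual heart, which makes the replacement of $\mathbb{P}_{q_0}^{n,T}$ by the product-of-marginals measure $\hat{\mathbb{P}}^{(nT)}$ legitimate, is simply that only one coordinate per slot is ever inspected, so the cross-user correlations within $\mathcal{R}$ never come into play.
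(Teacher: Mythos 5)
Your proof is correct and follows essentially the same route as the paper's: establish that $\{W^{(n)}=w\}$ coincides with a deterministic intersection of single-coordinate, single-slot constraints, then factorize over slots and regroup by user. You make the key step (the paper's equality $(a)$, which it asserts by appealing to the determinism of the rule) fully explicit via the recursive construction of $(\hat u(k),\hat r(k))$ and a clean two-sided induction, and you avoid the paper's slightly misleading appeal to ``exchangeability'' in step $(c)$ -- as you note, only the i.i.d.-across-slots structure and the fact that only one coordinate per slot is inspected are needed.
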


\begin{proof}
  Let $w = (w_1,\ldots,w_N)$ with $\sum_{i=1}^N |w_i| = nT$, and let
  $v = (v_1,\ldots,v_N)$ be the corresponding sampled trace for $w$,
  i.e., each $w_i$ is the vector of partial sums for the vector
  $v_i$. Associated to $w$ and $v$ are the time slots $k_{i_1},
  k_{i_2}, \ldots$ when user $i$ is sampled, for all $i$. Furthermore,
  a key fact is that all the time slots $k_{i_1}, k_{i_2}, \ldots$
  when user $i$ is sampled, for all $i$, are completely specified by
  $v$ due to the sampling rule being nonrandom.

  Recall, from our notation, that the random variable $S(k)$ records
  which user is sampled at time slot $k$. We have
  \begin{align*}
    \mathbb{P}_{q_0}^{n,T}[W^{(n)} = w] &= \mathbb{P}_{q_0}^{n,T}[V^{(n)} = v] \\
    &\stackrel{(a)}{=} \mathbb{P}_{q_0}^{n,T}[V^{(n)} = v, \forall i \; S(k_{i_1}) = i, S(k_{i_2}) = i, \ldots] \\
    &= \mathbb{P}_{q_0}^{n,T}\left[\forall i \; R_i(k_{i_1}) = v_{i1}, R_i(k_{i_2}) = v_{i2}, \ldots \right] \\
    &\stackrel{(b)}{=} \prod_{i,j} \mathbb{P}_{q_0}^{n,T}\left[ R_i(k_{i_j}) = v_{ij} \right] \\
    &\stackrel{(c)}{=} \hat{\mathbb{P}}^{(nT)}\left[E(w) \right],
  \end{align*}
  which completes the proof. Here, $(a)$ is by using the key fact in
  the preceding paragraph; $(b)$ is because channel states are
  independent across time and the fact that the ${k_{i_j}}_{i,j}$ are
  all distinct and partition $\{1,2,\ldots,nT\}$; $(c)$ is due to
  exchangeability of the (independent across time) channel state
  process.
\end{proof}
Proceeding with the proof of the proposition, we have
\begin{align*}
\mathbb{P}_{q_0}^{n,T}\left[q^{(n)} \in \Gamma\right] &\leq \mathbb{P}_{q_0}^{n,T}\left[w^{(n)} \in \Gamma^{(n)} \right], 
\end{align*}
where $\Gamma^{(n)}$ is the set of all valid sampled traces $w^{(n)}$
that result in queue length paths $q^{(n)} \in \Gamma$ under the
scheduling algorithm. For an arbitrary integer $\hat{n}$, we
can write
\begin{align*}
\mathbb{P}_{q_0}^{n,T}\left[w^{(n)} \in \Gamma^{(n)} \right] &= \sum_{w \in \Gamma^{(n)}} \mathbb{P}_{q_0}^{n,T} \left[w^{(n)} = w \right] \\
&= \sum_{w \in \Gamma^{(n)}} \hat{\mathbb{P}}^{(nT)} \left[E(w)\right] \quad \mbox{(by Lemma \ref{lem:equivprob})}\\
&=  \hat{\mathbb{P}}^{(nT)} \left[\bigcup_{w \in \Gamma^{(n)}} E(w)\right] \quad \mbox{(unique prefixes $\Rightarrow$ disjointness)}\\
&\leq \hat{\mathbb{P}}^{(nT)} \left[\bigcup_{n'=\hat{n}}^\infty E\left(\Gamma^{(n')}\right) \right].
\end{align*}
\begin{align}
\therefore \;  -\limsup_{n \to \infty} n^{-1} \log  \mathbb{P}_{q_0}^{n,T}\left[w^{(n)} \in \Gamma^{(n)} \right] &\geq -\limsup_{n \to \infty} n^{-1} \log \hat{\mathbb{P}}^{(nT)} \left[\bigcup_{n'=\hat{n}}^\infty E\left(\Gamma^{(n')}\right) \right] \nonumber \\
&\geq \inf \left\{ \int_0^T \sum_{i=1}^N \Lambda_i^*(\dot{w}_i(z)) dz: w \in \overline{\bigcup_{n'=\hat{n}}^\infty E\left(\Gamma^{(n')}\right)} \right\}  \nonumber \\
&\quad \mbox{(by Mogulskii's theorem \cite{dembo})} \nonumber \\
\Rightarrow -\limsup_{n \to \infty} n^{-1} \log  \mathbb{P}_{q_0}^{n,T}\left[w^{(n)} \in \Gamma^{(n)} \right] &\geq \lim_{\hat{n} \to \infty} \inf \left\{ \int_0^T \sum_{i=1}^N \Lambda_i^*(\dot{w}_i(z)) dz: w \in \overline{\bigcup_{n'=\hat{n}}^\infty E\left(\Gamma^{(n')}\right)} \right\}. \label{eqn:lbw}
\end{align}
Let the right hand side of (\ref{eqn:lbw}) be denoted by $\zeta$. For
every $\hat{n} = 1, 2, \ldots$, we can choose $w_{\hat{n}}$ such that
\begin{align*}
  &w_{\hat{n}} \in \overline{\bigcup_{n'=\hat{n}}^\infty E\left(\Gamma^{(n')}\right)}, \quad \mbox{and}\\  
  &\lim_{\hat{n} \to \infty} \int_0^T \sum_{i=1}^N \Lambda_i^*(\dot{w}_{\hat{n},i}(z)) dz  = \zeta.
\end{align*} 
Since the $w_{\hat{n}}$ are all uniformly Lipschitz continuous and
bounded, by the Arzel\`{a}-Ascoli theorem, the sequence
$\left(w_{\hat{n}}\right)_{\hat{n}}$ contains a subsequence converging
uniformly over the time interval $[0,T]$. Without loss of generality,
let the subsequence be $\{\hat{n}\}$ itself, and let $\lim_{\hat{n}
  \to \infty} w_{\hat{n}} = w$. The map $f \mapsto \int_0^T
\sum_{i=1}^N \Lambda_i^*(\dot{f}(z)) dz$ is lower-semicontinuous
\cite{dembo}, thus
\[ \int_0^T \sum_{i=1}^N \Lambda_i^*(\dot{w}(z)) dz \leq \lim_{\hat{n}
  \to \infty} \int_0^T \sum_{i=1}^N
\Lambda_i^*(\dot{w}_{\hat{n},i}(z)) dz = \zeta.\] We can pick, for
each $\hat{n}$, an $\hat{m}_{\hat{n}}$ and a $w_{\hat{m}_{\hat{n}}}
\in \bigcup_{n'=\hat{n}}^\infty E\left(\Gamma^{(n')}\right)$ such that
$||w_{\hat{m}_{\hat{n}}} - w_{\hat{n}}||_\infty < 1/\hat{n}$. Since
$w_{\hat{m}_{\hat{n}}} \in \bigcup_{n'=\hat{n}}^\infty
E\left(\Gamma^{(n')}\right)$, let $w_{\hat{m}_{\hat{n}}} \in
E\left(\Gamma^{(\hat{m}_{\hat{n}}')}\right)$ for some
$\hat{m}_{\hat{n}}' \geq \hat{n}$. It follows that there exists a
corresponding valid queue length path $q_{\hat{m}_{\hat{n}}}$ such
that $w_{\hat{m}_{\hat{n}}}$ induces $q_{\hat{m}_{\hat{n}}}$, and
moreover, $q_{\hat{m}_{\hat{n}}} \in \Gamma$. We can pick a
subsequence of $\{\hat{m}_{\hat{n}}\}_{\hat{n}}$ (let it be
$\hat{m}_{\hat{n}}$ without loss of generality) along which the
sequence $q_{\hat{m}_{\hat{n}}}$ converges to a $q \in
\overline{\Gamma} = \Gamma$. We now have $\lim_{\hat{n} \to \infty}
w_{\hat{m}_{\hat{n}}} = w$ and $\lim_{\hat{n} \to \infty}
q_{\hat{m}_{\hat{n}}} = q$, thus $(q,w)$ is a valid fluid sample path
with $q \in {\Gamma}$. This yields
\begin{align*}
-\limsup_{n \to \infty} n^{-1} \log  &\mathbb{P}_{q_0}^{n,T}\left[w^{(n)} \in \Gamma^{(n)} \right] \geq \\
\inf &\left\{ \int_0^T \sum_{i=1}^N \Lambda_i^*(\dot{w}_i(z)) dz: (w,q) \mbox{ an FSP}, q \in {\Gamma} \right\}
\end{align*}
\begin{align}
\Rightarrow -\limsup_{n \to \infty} \frac{1}{n} \log &\mathbb{P}_{q_0}^{n,T}\left[q^{(n)} \in \Gamma\right] \geq \nonumber \\
\inf &\left\{ \int_0^T \sum_{i=1}^N \Lambda_i^*(\dot{w}_i(z)) dz: (w,q) \mbox{ an FSP}, q \in {\Gamma} \right\}. \label{eqn:infwq}
\end{align}

\emph{Note:} By $(w,q)$ being an FSP, in addition to there existing
prelimit sequences $w^{(n)} \to w$ and $q^{(n)} \to q$ (uniformly over
$[0,T]$), we mean that there exist points $z_i \in [0,T]$ for all $i =
1,\ldots,N$ such that $z_i^{(n)} \to t_i$ as $n \to \infty$, where
$z_i^{(n)}$ is the (scaled by $1/n$) index in the sampled trace
$w_i^{(n)}$ beyond which user $i$ is never sampled (i.e. it is the
last index at which user $i$ is sampled by the scheduling algorithm if
user $i$'s samples are stacked successively and contiguously).

From the observation preceding Lemma \ref{lem:equivprob}, each sampled
trace $w^{(n)}$ completely specifies the exact instants at which each
user was scheduled/sampled and the channel states observed at those
instants, i.e. $w^{(n)}$ completely specifies the pair $(m^{(n)},
c^{(n)})$ in $[0,T]$. The next lemma relates the large deviations
``costs'' of the fluid limits of sampled traces to those of the fluid
limits of their associated $(m^{(n)}, c^{(n)})$ processes.

\begin{lemma}
\label{lem:wtomc}
Let $(w,q)$ be a fluid sample path with $w^{(n)} \to w$ and $q^{(n)}
\to q$. For each integer $n \geq 1$, let $(m^{(n)}, c^{(n)})$ be the
scaled sampled rate and selection processes, which are completely
specified by $w^{(n)}$. Then, for every subsequential limit $(m,c)$ of
$(m^{(n)}, c^{(n)})_n$ (in the $||\cdot||_\infty$ topology on
$[0,T]$),
\begin{align*}
\sum_{i=1}^N \int_{0}^{z_i} \Lambda_i^*(\dot{w}_i(z)) dz =
\int_{0}^T \left[\sum_{i=1}^N \dot{c_i}(t) \Lambda_i^*
  \left(\frac{\dot{m_i}(t)}{\dot{c_i}(t)}\right)\right] dt.
\end{align*}
\end{lemma}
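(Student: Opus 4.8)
The plan is to read the identity as a change of variables (a time change) between the ``sample-count clock'' in which $w_i$ lives and the ``real-time clock'' in which $m_i$ and $c_i$ live. First I would record the deterministic relation between the prelimit processes: if user $i$ has been sampled $C_i^{(n)}(k)$ times by slot $k$, then by definition the cumulative sampled rate for $i$ is $M_i^{(n)}(k) = W_i^{(n)}(C_i^{(n)}(k))$, so after rescaling $m_i^{(n)} = w_i^{(n)} \circ c_i^{(n)}$ up to an $O(1/n)$ interpolation error that is uniform on $[0,T]$. Passing to the limit along the subsequence on which $(m^{(n)},c^{(n)}) \to (m,c)$ (using $w^{(n)} \to w$ and the uniform continuity of $w_i$) yields the limit identity $m_i(t) = w_i(c_i(t))$ on $[0,T]$, together with $c_i(0)=0$ and $c_i(T)=z_i$ --- the latter because $z_i^{(n)}$ is exactly the scaled total sampling count $c_i^{(n)}(T)$. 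I would then note the regularity inherited from the prelimits: $c_i$ is non-decreasing and $1$-Lipschitz (hence absolutely continuous), $w_i$ and $m_i$ are Lipschitz, and $\dot w_i$ takes values in the compact domain of $\Lambda_i^*$ for a.e.\ argument.

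Second, I would invoke the elementary fact that a continuous non-decreasing $c_i : [0,T] \to [0,z_i]$ with $c_i(0)=0$, $c_i(T)=z_i$ carries its Lebesgue--Stieltjes measure $dc_i$ to Lebesgue measure on $[0,z_i]$; since $c_i$ is absolutely continuous, $dc_i = \dot c_i(t)\,dt$, so for every non-negative Borel $g$ on $[0,z_i]$,
\[ \int_0^{z_i} g(z)\,dz \;=\; \int_0^T g(c_i(t))\,\dot c_i(t)\,dt . \]
Applying this with $g = \Lambda_i^* \circ \dot w_i$ (set to $0$ on the Lebesgue-null set where $\dot w_i$ fails to exist or leaves $\mathrm{dom}\,\Lambda_i^*$) rewrites the left side of the claim as $\sum_{i=1}^N \int_0^T \Lambda_i^*(\dot w_i(c_i(t)))\,\dot c_i(t)\,dt$.

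Third --- and this is the only step requiring care --- I would identify the integrand with $\dot c_i(t)\,\Lambda_i^*(\dot m_i(t)/\dot c_i(t))$, the perspective of $\Lambda_i^*$, under the usual convention $0\cdot\Lambda_i^*(0/0)=0$. On $\{\dot c_i > 0\}$, the same push-forward fact (a Lebesgue-null set in $[0,z_i]$ has preimage on which $\dot c_i = 0$ a.e.) shows that, for a.e.\ such $t$, $c_i(t)$ avoids the bad set, so $\dot w_i(c_i(t))$ exists and lies in $\mathrm{dom}\,\Lambda_i^*$; the chain rule applied to $m_i = w_i \circ c_i$ then gives $\dot m_i(t) = \dot w_i(c_i(t))\,\dot c_i(t)$, so the two integrands coincide. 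On $\{\dot c_i = 0\}$, the Lipschitz bound $|m_i(t+h)-m_i(t)| = |w_i(c_i(t+h))-w_i(c_i(t))| \le L_{w_i}\,|c_i(t+h)-c_i(t)|$ forces $\dot m_i(t)=0$, so the left integrand is (finite)$\cdot 0 = 0$ and, since $\Lambda_i^*$ has bounded domain, the perspective of $\Lambda_i^*$ at $(0,0)$ is $0$ --- the integrands agree again. Summing the $N$ terms and exchanging the finite sum with the integral gives the stated identity. I expect the main obstacle to be precisely this last step: making the substitution rigorous when $\dot w_i$ is only defined almost everywhere and reconciling the two conventions on the degenerate set $\{\dot c_i = 0\}$; the Luzin-$N$ / push-forward property of absolutely continuous monotone functions is the tool that makes it go through cleanly.
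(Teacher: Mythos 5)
Your proposal is correct and follows essentially the same route as the paper's proof: establish the prelimit identity $m_i^{(n)} = w_i^{(n)} \circ c_i^{(n)} + O(1/n)$, pass to the limit to get $m_i = w_i \circ c_i$, and then use the change-of-variables/push-forward property of the monotone absolutely continuous $c_i$ to identify $\frac{dm_i}{dc_i}$ with $\dot{w}_i \circ c_i$ ($dc_i$-a.e.) and transport the integral of $\Lambda_i^*\circ\dot w_i$ between the two clocks. Your explicit treatment of the degenerate set $\{\dot c_i = 0\}$ via the perspective-function convention is a slightly more careful rendering of what the paper handles implicitly by working $dc_i$-almost everywhere.
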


\begin{proof}
Assume without loss of generality that $m^{(n)} \to m$ and $c^{(n)}
\to c$ uniformly in $[0,T]$. Let $0 \leq t_1 \leq t_2 \leq T$. For
all $n$, by the definition of the sampled traces $w_i^{(n)}$, we have
\begin{align}
w_i^{(n)}\left(c_i^{(n)}(t_2)\right) - w_i^{(n)}\left(c_i^{(n)}(t_1)\right) &= m_i^{(n)}(t_2)
- m_i^{(n)}(t_1) + O(1/n). \label{eqn:prelimitmeas}
\end{align}
By the (uniform) convergence hypotheses, for $j \in \{1,2\}$,
$c_i^{(n)}(t_j) \to c_i(t_j)$, thus
$w_i^{(n)}\left(c_i^{(n)}(t_j)\right) \to w_i(c_i(t_j))$. Letting $n
\to \infty$ in (\ref{eqn:prelimitmeas}), 
\begin{align}
w_i\left(c_i(t_2)\right) - w_i\left(c_i(t_1)\right) &= m_i(t_2)
- m_i(t_1). \label{eqn:limitmeas}
\end{align}
Since $c_i$ and $m_i$ are nondecreasing Lipschitz-continuous
functions, they induces Stieltjes measures $dc_i$ and $dm_i$
respectively on $[0,T]$ with $dm_i << dc_i$. In a similar fashion,
$w_i$ induces a Stieltjes measure $dw_i << dz$ on $[0,z_i]$ where
$dz$ denotes Lebesgue measure. Let $dw_i/dz$ be the Radon-Nikodym
derivative of $dw_i$ with respect to Lebesgue measure, and consider
\begin{align*}
\int_{t_1}^{t_2} \frac{dw_i}{dz} \circ c_i (t) dc_i(t) &= \int_{c_i(t_1)}^{c_i(t_2)} \frac{dw_i}{dz} (z) \left( dc_i \circ c^{-1}\right) \quad \mbox{(change of variables formula)} \\
&= \int_{c_i(t_1)}^{c_i(t_2)} \frac{dw_i}{dz} (z) dz \quad \mbox{($dc_i \circ c^{-1} \equiv$ Lebesgue$[0,z_i]$)} \\
&= w_i\left(c_i(t_2)\right) - w_i\left(c_i(t_1)\right) \\
&= m_i(t_2) - m_i(t_1) \quad \mbox{(thanks to (\ref{eqn:limitmeas}))} \\
&= \int_{t_1}^{t_2} dm_i(t) \\
\Rightarrow \quad \frac{dw_i}{dz} \circ c_i(\cdot)  &= \frac{dm_i}{dc_i}(\cdot) \quad dc_i\mbox{-a.e. on } [0,T]. 
\end{align*}
With this, we can finally compute
\begin{align*}
&\int_0^T \dot{c_i}(t) \Lambda_i^*\left(\frac{\dot{m_i}(t)}{\dot{c_i}(t)}\right) = \int_0^T \left(\Lambda^* \circ \frac{dm_i}{dc_i}\right) (t) dc_i(t) \\
&= \int_0^T \left(\Lambda^* \circ \frac{dw_i}{dz} \circ c_i \right) (t) dc_i(t) = \int_{c_i(0)}^{c_i(T)} \left(\Lambda^* \circ \frac{dw_i}{dz}\right) (z) \left( dc_i \circ c^{-1}\right) \\
&= \int_{0}^{z_i} \left(\Lambda^* \circ \frac{dw_i}{dz}\right) (z) dz = \int_{0}^{z_i} \Lambda^*\left(\dot{w}_i(z)\right) dz.
\end{align*}
This proves the lemma.
\end{proof}
Applying the result of Lemma \ref{lem:wtomc} to (\ref{eqn:infwq})
concludes the proof of Proposition \ref{prop:lbT}.

\section{Proof of Proposition \ref{prop:stationarylb}}
\label{app:stationarylb}
With reference to the proof of a similar result \cite[Theorem
8.4]{stol:ldexp}, we can establish the following properties in a
completely analogous fashion to complete the proof of the proposition
(the proofs are omitted to avoid repetition):

\begin{lemma}
Let $\delta > 0$ and $c > 0$ be given, and let the stopping time
$\beta^{(n)} \bydef \inf \{t \geq 0: ||q^{(n)}(t)||_\infty \leq \delta
\}$. Then, there exists $\Delta > 0$ such that
\[\limsup_{n \to \infty} \sup_{y: ||q(y)||_\infty \leq c} \mathbb{E}_y \beta^{(n)} \leq \Delta c. \]
\end{lemma}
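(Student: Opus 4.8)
The plan is to follow the fluid–stability route of \cite[Theorem 8.4]{stol:ldexp}: first establish that every Max-Queue fluid sample path drains its longest queue to a small neighbourhood of the origin in time linear in its starting level, and then lift this to the claimed expected-hitting-time bound via a one-step (Foster--Lyapunov) drift inequality for a coarsely time-sampled version of the queue-length chain.

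\textbf{Step 1 (linear fluid drain).} Write $\bar\mu_i \bydef \mathbb{E}[R_i(0)]$, and note that for singleton subsets the interiority of the Max-Queue throughput region (equivalently, throughput-optimality plus a positive margin, cf.\ \cite{gopcarsha12:subsets}) amounts to $\sum_i \lambda_i/\bar\mu_i = 1 - 2\epsilon'$ for some $\epsilon' > 0$. I would then argue that at every regular point $t$ of a valid Max-Queue FSP with $||q(t)||_\infty > 0$ one has $\frac{d}{dt}||q(t)||_\infty \le -\gamma$, where $\gamma \bydef 2\epsilon'\bar\mu_{\min}/N > 0$. Indeed, at a regular point every maximal coordinate $i\in B(t) \bydef \arg\max_j q_j(t)$ has the same derivative $\rho \bydef \frac{d}{dt}||q(t)||_\infty$; in a valid FSP the observed channel states are typical, so the service rate to $i$ equals $\dot c_i\bar\mu_i$, and since Max-Queue picks the (nonempty) longest queue in every slot we have $\dot c_i = 0$ for $i\notin B(t)$ and $\sum_{i\in B(t)}\dot c_i = 1$. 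Hence $\rho = \lambda_i - \dot c_i\bar\mu_i$ for all $i\in B(t)$, i.e.\ $\dot c_i = (\lambda_i-\rho)/\bar\mu_i$, and summing over $B(t)$ gives $\rho = \big(\sum_{i\in B(t)}\lambda_i/\bar\mu_i - 1\big)\big/\sum_{i\in B(t)}\bar\mu_i^{-1} \le -2\epsilon'\bar\mu_{\min}/N = -\gamma$. Consequently $||q(t)||_\infty \le (||q(0)||_\infty - \gamma t)^+$, so any Max-Queue FSP with $||q(0)||_\infty \le c'$ reaches $\{||q||_\infty \le \delta/2\}$ by the fixed time $T_0 \bydef c'/\gamma + 1$. (Equivalently, fluid stability combined with the scale-invariance of the Max-Queue fluid model yields the same linear drain.)

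\textbf{Steps 2--3 (one-step drift and Foster--Lyapunov lifting).} I would next invoke the functional LLN for this system (iid finite-support channel states, bounded arrivals $A_i(\cdot)\le A_{\max}$): any sequence $q^{(n)}(\cdot)$ with $q^{(n)}(0)\to y$ has, along subsequences and in probability, u.o.c.\ limits that are valid Max-Queue FSPs, and the a.s.\ bound $||q^{(n)}(t)||_\infty \le ||q^{(n)}(0)||_\infty + tA_{\max}$ makes these families uniformly bounded on compact time sets, hence the convergence also holds in $L^1$ at each fixed time. Fix $c' \bydef c + 2T_0A_{\max} + 1$ (the mild circularity between $c'$ and $T_0$ is routine to resolve, all constants being finite). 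By Step 1 every such limit started from $\delta\le||y||_\infty\le c'$ satisfies $||q(T_0)||_\infty \le \delta/2$, so $\mathbb{E}_y||q^{(n)}(T_0)||_\infty \to ||q(T_0)||_\infty \le \delta/2$; by compactness of $\{y:\delta\le||y||_\infty\le c'\}$ this is uniform, so with $\eta\bydef\delta/4$ there is $n_0$ with $\mathbb{E}_y[||q^{(n)}(T_0)||_\infty] \le ||y||_\infty - \eta$ for all $n\ge n_0$ and $\delta\le||y||_\infty\le c'$. Then, for $n\ge n_0$ and $||y||_\infty\le c$, consider the sampled chain $Y_k\bydef q^{(n)}(kT_0)$ with $Y_0=y$ and $\tau\bydef\inf\{k\ge0:||Y_k||_\infty\le\delta\}$, so $\beta^{(n)}\le T_0\tau$; the a.s.\ one-step bound $||Y_{k+1}||_\infty\le||Y_k||_\infty + T_0A_{\max}$ together with $||Y_0||_\infty\le c$ keeps the chain in $[\delta,c']$ up to time $\tau$, so the drift inequality applies there, and optional stopping for the supermartingale $\big(||Y_{k\wedge\tau}||_\infty + \eta(k\wedge\tau)\big)_k$ gives $\eta\,\mathbb{E}_y\tau \le ||y||_\infty - \delta + O(1) \le c + O(1)$. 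Hence $\mathbb{E}_y\beta^{(n)} \le T_0\,\mathbb{E}_y\tau \le T_0c/\eta + O(1)$ uniformly over $||y||_\infty\le c$; since $T_0,\eta$ depend only on $c,\delta$ (through $\gamma$) and not on $n$, taking $\limsup_n$ and absorbing constants yields the claim with a suitable $\Delta>0$ (which may be made $c$-independent if one uses the linear-drain form of Step 1).

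\textbf{Main obstacle.} The real content of the lemma is Step 1 --- the \emph{uniformly} negative drift of $||q||_\infty$ along Max-Queue fluid sample paths. This rests on the calculus fact that at regular points all maximal queues share the derivative of $||q||_\infty$ together with the ``typical channels on a valid FSP'' property, and it is the place where the interiority of $\lambda$ enters quantitatively. Steps 2 and 3 are standard and parallel \cite[Theorem 8.4]{stol:ldexp} almost verbatim; their only nuisance is the bookkeeping needed to keep the time-sampled chain inside a fixed compact set throughout the excursion so that the one-step drift inequality remains applicable.
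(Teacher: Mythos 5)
Your proposal follows exactly the route the paper itself points to: the lemma is stated in Appendix~\ref{app:stationarylb} without proof, with an explicit deferral to the argument of \cite[Theorem 8.4]{stol:ldexp}, and that argument is precisely your combination of (i) a uniformly negative drift of $\|q\|_\infty$ along typical Max-Queue fluid limits and (ii) a Foster--Lyapunov/optional-stopping lift through a time-sampled chain. Your Step 1 is the real content and is correct: at a regular point all maximal coordinates share the derivative of $\|q\|_\infty$, Max-Queue concentrates all sampling on the maximal set, the predictable-sampling martingale LLN gives $\dot m_i = \dot c_i \bar\mu_i$ on the a.s.\ fluid limits, and summing $\dot c_i = (\lambda_i-\rho)/\bar\mu_i$ over the maximal set with $\sum_i \lambda_i/\bar\mu_i < 1$ yields $\rho \le -\gamma$.

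One repair is needed in Steps 2--3 as written. You set $T_0 = c'/\gamma + 1$, so $T_0$ grows linearly in $c$ (and the circular definition $c' = c + 2T_0A_{\max}+1$, $T_0 = c'/\gamma+1$ is only solvable when $2A_{\max} < \gamma$, which will typically fail). With that choice the optional-stopping bound $\mathbb{E}_y\beta^{(n)} \le T_0\,\mathbb{E}_y\tau \le T_0 c/\eta$ is $O(c^2)$, not $\Delta c$ with $\Delta$ depending only on $\delta$ and the system parameters, which is the form actually needed downstream in the Freidlin--Wentzell argument. The fix is the one you allude to parenthetically, and it should be made explicit: take a \emph{fixed} window $T_0$ independent of $c$ (e.g.\ $T_0 = \delta/\gamma$), so that every fluid limit started from $\|y\|_\infty \ge \delta$ satisfies $\|q(T_0)\|_\infty \le \|y\|_\infty - \min(\gamma T_0,\delta)$; uniformity of the fluid approximation over the compact shell $\{\delta \le \|y\|_\infty \le c'\}$ then gives a one-step drift $-\eta$ with $\eta$ independent of $c$, and the same optional-stopping computation delivers $\mathbb{E}_y\beta^{(n)} \le T_0(c-\delta)/\eta + O(1) \le \Delta c$ with $\Delta$ depending only on $\delta$, $\gamma$, and $A_{\max}$. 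With that adjustment the argument is complete and matches the intended proof.
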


\begin{lemma}
For fixed constants $c > \delta > 0$ and $T > 0$, let 
\begin{align*}
 K(c,\delta,T) &\bydef \inf_{(m^T,c^T,q^T)} \quad \quad \int_{0}^T \left[\sum_{i=1}^N \dot{c_i}(t) \Lambda_i^* \left(\frac{\dot{m_i}(t)}{\dot{c_i}(t)}\right)\right] dt \\
&\emph{subject to} \quad (m^T,c^T,q^T) \; \emph{an FSP}, \\
&\hspace{2.2cm} ||q(0)||_\infty \leq c, ||q(t)||_\infty \geq \delta \mbox{ for all } 0 \leq t \leq T.
\end{align*}
Then, uniformly over $\delta$, $K(c,\delta,T) \to \infty$ as $T \to
\infty$.
\end{lemma}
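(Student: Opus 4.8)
The plan is to derive a lower bound on $K(c,\delta,T)$ that is linear in $T$ with a slope not depending on $\delta$, by showing that any feasible fluid sample path must pay a strictly positive instantaneous cost over a definite fraction of $[0,T]$. The reason is that the zero-cost (typical) dynamics of the Max-Queue fluid model drain the longest queue at a strictly positive rate, so keeping $||q(t)||_\infty \ge \delta > 0$ over a long horizon cannot be done for free.

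\emph{Step 1 (uniform drift lemma).} I would first establish: there exist constants $a>0$, $\epsilon_0>0$, $L<\infty$ depending only on $\lambda$ and the channel law --- in particular \emph{not} on $\delta$ or $T$ --- such that, for any valid Max-Queue FSP $(m^T,c^T,q^T)$ and Lebesgue-a.e.\ $t\in[0,T]$: (i) $\big|\tfrac{d}{dt}||q(t)||_\infty\big|\le L$; and (ii) if $||q(t)||_\infty>0$ and $\rho(t):=\sum_{i}\dot c_i(t)\,\Lambda_i^*\!\big(\tfrac{dm_i}{dc_i}(t)\big)<\epsilon_0$, then $\tfrac{d}{dt}||q(t)||_\infty\le -a$. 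Part (i) is the uniform Lipschitz continuity inherited from the prelimits, since arrivals and rates are bounded. For part (ii) I would use that at a differentiability point $\tfrac{d}{dt}||q(t)||_\infty=\dot q_j(t)=\lambda_j-\dot c_j(t)\tfrac{dm_j}{dc_j}(t)$ for \emph{every} index $j$ attaining the maximum (a right/left difference-quotient comparison shows all maximizers' derivatives coincide with that of the maximum), together with the Max-Queue fluid relations, which place the sampling frequency $\dot c(t)$ only on the currently longest queues. Since each $\Lambda_i^*$ is convex and lower semicontinuous with its unique zero at the typical mean $\overline{R}_i:=\mathbb{E}[R_i(0)]$, and the realized mean rates $\tfrac{dm_i}{dc_i}(t)$ are bounded (rates lie in the finite set $\mathcal{R}$), a small total cost $\rho(t)$ forces $\tfrac{dm_i}{dc_i}(t)$ close to $\overline{R}_i$ on every channel sampled with non-negligible frequency, while channels with small $\dot c_i(t)$ contribute negligible service; hence the realized drift $\dot q_j(t)$ is close to the typical Max-Queue drift of the longest queue. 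An explicit computation of that typical drift --- using that $\lambda$ lies in the \emph{interior} of the Max-Queue throughput region, which forces a uniform margin --- shows it is at most $-2a$ for some $a>0$, both when the maximum is attained uniquely and when it is attained by a tied set. A compactness argument over the (finite-dimensional, bounded) set of admissible instantaneous configurations then produces the single triple $(a,\epsilon_0,L)$.

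\emph{Step 2 (integration).} Fix a feasible FSP and split $[0,T]$ (up to a null set) into $G:=\{t:\rho(t)\ge\epsilon_0\}$ and $B:=[0,T]\setminus G$. On $B$ we have $||q(t)||_\infty\ge\delta>0$, so Step 1(ii) gives $\tfrac{d}{dt}||q(t)||_\infty\le -a$; on $G$, Step 1(i) gives $\tfrac{d}{dt}||q(t)||_\infty\le L$. Integrating and using $||q(0)||_\infty\le c$ and $||q(T)||_\infty\ge\delta\ge 0$,
\[ -c \;\le\; ||q(T)||_\infty-||q(0)||_\infty \;=\;\int_0^T \frac{d}{dt}||q(t)||_\infty\,dt \;\le\; L\,|G|-a\,(T-|G|), \]
so $|G|\ge (aT-c)/(L+a)$, and hence $\int_0^T\rho(t)\,dt\ge\epsilon_0|G|\ge\epsilon_0(aT-c)/(L+a)$. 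Taking the infimum over feasible FSPs,
\[ K(c,\delta,T)\;\ge\;\frac{\epsilon_0\,(aT-c)}{L+a}. \]
The right-hand side does not involve $\delta$ and tends to $\infty$ as $T\to\infty$, which is exactly the claimed uniform divergence.

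\emph{Main obstacle.} The crux is Step 1(ii): turning throughput-optimality of Max-Queue into a \emph{quantitative}, $\delta$-independent statement that a small instantaneous large-deviations cost forces a strictly negative drift of the longest queue. The delicate points are the Max-Queue fluid case analysis (ties among the longest queues, and channels sampled at vanishingly small frequency) and making the implication ``small cost $\Rightarrow$ near-typical service $\Rightarrow$ near-typical drift'' uniform over all admissible states. Since the analogue for the full-CSI Exponential rule is established by Stolyar \cite{stol:ldexp} (Theorem 8.4 and the surrounding lemmas), I would follow that template, substituting the Max-Queue fluid relations and the frequency-weighted rate function $\sum_i\dot c_i\Lambda_i^*(\cdot)$ of Proposition \ref{prop:lbT}; the remainder is routine.
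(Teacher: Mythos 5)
Your proposal is correct and follows essentially the same route the paper intends: the paper itself omits the proof of this lemma, citing Theorem 8.4 of Stolyar \cite{stol:ldexp} as the template, and your drift-plus-integration argument is precisely the standard Friedlin--Wentzell adaptation of that template to the Max-Queue fluid model with the frequency-weighted rate function of Proposition \ref{prop:lbT}. The one implicit interpretive point you handle correctly is that ``an FSP'' in the lemma statement must mean a \emph{valid} FSP under Max-Queue (otherwise a path that never serves a growing queue yields $K=0$); with that reading, your Step~1 negative-drift claim is exactly what Lemma \ref{lem:regular} plus interior stability deliver, and Step~2 gives the $\delta$-free linear-in-$T$ lower bound $K(c,\delta,T)\ge \epsilon_0(aT-c)/(L+a)$ as required.
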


\section{Proof of Proposition \ref{prop:ubany}}
\label{app:ubany}
Denote by $\mathcal{C}(\phi_1,\ldots,\phi_N)$ the convex hull of the
points $(0,\ldots,0)$, $(\phi_1,\ldots,0)$, $(0,\phi_2,\ldots,0)$,
$\ldots$ , $(0,\ldots,\phi_N)$. The right-hand side of
(\ref{eqn:liminf}) is trivially $\infty$ if either (a) $\lambda \in
\mathcal{C}(\phi_1,\ldots,\phi_N)$, or (b) any of the $\phi_i$ is
not in the effective domain of its corresponding $\Lambda_i^*$; we
exclude such $\phi_i$ and $\lambda$ in the remainder of the proof.

For each $n = 1, 2, \ldots$, let $t_n \geq 0$ be a nonrandom time, to
be specified later (to avoid complications, we assume $nt_n$ is an
integer). Consider
\begin{align*}
\mathbb{P}^\pi\left[||q^{(n)}(0)||_\infty \geq 1 \right] &= \mathbb{P}^\pi\left[||q^{(n)}(t_n)||_\infty \geq 1 \right] \\
&\geq \mathbb{P}^\pi\left[||q^{(n)}(t_n)||_\infty \geq 1 \mid ||q^{(n)}(0)||_\infty = 0 \right] \mathbb{P}^\pi\left[||q^{(n)}(0)||_\infty = 0 \right] \\
&= \pi((0,0,\ldots,0)) \; \mathbb{P}^\pi_0\left[||q^{(n)}(t_n)||_\infty \geq 1 \right].
\end{align*}
Here, $\pi(\cdot)$ is used to denote the stationary distribution that
the policy $\pi$ induces, and $\mathbb{P}^\pi_0$ represents the
stationary distribution conditioned on the starting state being the
origin (all zeroes). 

The non-negativity of queues forces the relation $U^{(n)}(k) \bydef
\lambda n - M^{(n)}(k) \leq Q^{(n)}(k)$, where $U^{(n)}(k) =
\sum_{l=1}^k (\lambda - R^{(n)}(l)\delta_{S(l)})$ represents the
``unreflected queue lengths'' in the $n$-th queueing system at time
$k$. By suitably rescaling in time and space, we can continue this
chain of inequalities as
\begin{align}
\label{eqn:chaincontinue}
\mathbb{P}^\pi\left[||q^{(n)}(0)||_\infty \geq 1 \right] &\geq \pi((0,0,\ldots,0)) \; \mathbb{P}^\pi_0\left[\max_i u^{(n)}_i(t_n) \geq 1 \right].
\end{align}
For each $i = 1,\ldots, N$, since $\phi_i$ is in the effective domain
of its Cram\'{e}r rate function $\Lambda_i^*$, it follows that there
exists $\eta_i' \in \mathbb{R}$ such that $\Lambda^*_i(\phi_i) =
\eta_i' \phi_i - \Lambda_i(\eta_i')$. Define for each $i$ an
exponentially tilted measure $\hat{\mathbb{P}}_i$ (with respect to the
marginal measure $\mathbb{P}_i$ of the $i$-th channel state $R_i(0)$)
on $\mathbb{R}$ as follows:
\[ \hat{\mathbb{P}}_i (dx) \bydef \exp[{\eta_i' x  - \Lambda_i(\eta_i')}] \; \mathbb{P}_i(dx) = \exp[{\eta_i' (x - \phi_i) + \Lambda^*_i(\phi_i)}] \; \mathbb{P}_i(dx). \]
A standard computation under the tilted measure yields
$\hat{\mathbb{E}}_i[R_i(0)] = \phi_i$. As with the approach followed
in \cite{shimkin:extremal}, let $\hat{\mathbb{P}}^\pi_0$ be the
measure defined similarly to $\mathbb{P}^\pi_0$ except that the
twisted measures $\{\hat{\mathbb{P}}_i\}$ replace $\{{\mathbb{P}}_i\}$
as the conditional marginal distributions of the sampled channel
states/rates, with $\{\hat{\mathbb{E}}_i\}$ being the corresponding
expectations. 

Let us define 
\begin{align*}
t_{\min}^{-1} &\bydef \max_i \lambda_i,  \\
t_{\max}^{-1} &\bydef \min_{\mu \in \mathcal{C}(\phi_1,\ldots,\phi_N)} \max_i (\lambda_i - \mu_i). 
\end{align*}
Since by hypothesis the arrival rate $\lambda$ is outside the closed
set $\mathcal{C}(\phi_1,\ldots,\phi_N)$, it follows that $0 <
t_{\min} \leq t_{\max} < \infty$. The times $t_{\min}$ and $t_{\max}$
represent the earliest and latest time that the maximum queue length
can take to overflow to level 1 in a system of queues with ``fluid''
inputs at rates $\lambda_i$ that can be drained with instantaneous
rates in the convex hull $\mathcal{C}(\phi_1,\ldots,\phi_N)$. 

The remainder of the proof is organized into four steps:
\begin{enumerate}
\item \label{enum:step1} Showing that for $n$ large enough, under the
  twisted measure $\hat{\mathbb{P}}$, the service $m_i^{(n)}(t)$
  provided to the queue $i$ is approximated with high probability by
  $\phi_i c_i(t)$, i.e. we can treat the channel as being
  deterministic with a service rate of $\phi_i$,
\item Under the conditions of the previous step, overflow of the
  unreflected max-queue $d^{(n)}(\cdot)$ is inevitable by time roughly
  $t_{\max}$, so with a significant probability the first hitting time
  of $d^{(n)}(\cdot)$ to level 1 is at most $t_{\max}$. Thus, we can
  find a time not exceeding $t_{\max}$ at which overflow occurs with a
  significant probability (i.e. not decaying to 0 exponentially in
  $n$)
\item Overflow occurring at the time in the previous step, under the
  conditions of step \ref{enum:step1}, forces the scheduling ``choice
  fractions'' $c^{(n)}(t)/t$ to be ``consistent'' with overflow of
  $d^{(n)}(\cdot)$ occurring at that time
\item Using all the steps to develop the right-hand side of
  (\ref{eqn:chaincontinue}) and derive the stated result.
\end{enumerate}

\subsubsection{Step 1 of 4}
Let us record the following definition. For each $i = 1, \ldots, N$,
we can write
\begin{align*}
  m_i^{(n)}(t) \equiv m_i(t) = \frac{1}{n} M_i(nt) &= \frac{1}{n} \sum_{l=0}^{nt} R_i(l) X_i(l) = \frac{\overline{M}_i(nt)}{n}  + \frac{\phi_i}{n} \sum_{l=0}^{nt}  X_i(l) \\
  &= \frac{\overline{M}_i(nt)}{n}  + \frac{\phi_i}{n} C_i(nt), \\
  &= \overline{m}_i(t) + \phi_ic_i(t),
\end{align*}
where $X_i(l)$ is the indicator of the event that user $i$ was
scheduled at time slot $l$, and $\overline{M}_i(k) \bydef
\sum_{l=0}^{k} (R_i(l) - \phi_i) X_i(l)$ is the (unscaled)
``centered'' service provided to queue $i$ upto time slot $k$.
\begin{lemma}
  \label{lem:approx}
  Let times $t_1$ and $t_2$, such that $0 < t_1 \leq t_2$, and $\delta
  > 0$ be fixed. Then,
  \[ \lim_{n \to \infty} \hat{\mathbb{P}}_0^{\pi} \left[ |\overline{m}^{(n)}_i(t)| < \delta t \quad \forall t \in \left[t_1, t_2 \right]\right] = 1. \]
\end{lemma}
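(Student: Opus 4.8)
The idea is to recognize $\overline{M}_i(\cdot)$, the unscaled centered service process defined just above the lemma, as a martingale with bounded increments under the twisted measure $\hat{\mathbb{P}}_0^\pi$, and then invoke a standard exponential maximal inequality to control its running maximum; since $\delta t \ge \delta t_1 > 0$ uniformly on $[t_1,t_2]$, this immediately gives the uniform bound on $|\overline{m}_i^{(n)}(t)|$.

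\textbf{Martingale identification.} Recall $\overline{M}_i(k) = \sum_{l=0}^k (R_i(l)-\phi_i)X_i(l)$, where $X_i(l)\in\{0,1\}$ is the indicator that user $i$ is sampled (equivalently, scheduled, in the singleton setting) at slot $l$. Let $(\mathcal{G}_l)_{l\ge 0}$ be the filtration generated by all scheduling decisions and observed channel states through slot $l$. The decisive structural fact is \emph{predictability}: the choice of singleton at slot $l$ is made before $R(l)$ is revealed, so $X_i(l)$ is measurable with respect to $\sigma(\mathcal{G}_{l-1},S(l))$, which is independent of $R(l)$; moreover, by construction of the twist, conditionally on $\{X_i(l)=1\}$ the rate $R_i(l)$ is distributed as $\hat{\mathbb{P}}_i$, so $\hat{\mathbb{E}}_i[R_i(l)]=\phi_i$. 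Hence $\hat{\mathbb{E}}[(R_i(l)-\phi_i)X_i(l)\mid\mathcal{G}_{l-1}]=0$, and $\overline{M}_i(\cdot)$ is a $(\mathcal{G}_l)$-martingale under $\hat{\mathbb{P}}_0^\pi$. Since $\mathcal{R}$ is finite and $\phi_i$ fixed, the increments are bounded: $|(R_i(l)-\phi_i)X_i(l)|\le B_i$ for some finite $B_i$.

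\textbf{Concentration and conclusion.} By Hoeffding's lemma applied conditionally, $\hat{\mathbb{E}}[e^{\theta(R_i(l)-\phi_i)X_i(l)}\mid\mathcal{G}_{l-1}]\le e^{\theta^2 B_i^2/2}$ for all $\theta$, so $e^{\theta\overline{M}_i(k)}$ is a nonnegative submartingale and Doob's maximal inequality gives $\hat{\mathbb{P}}_0^\pi[\max_{k\le m}\overline{M}_i(k)\ge a]\le e^{-\theta a+m\theta^2 B_i^2/2}$ for any $a>0$; optimizing over $\theta$ and applying the same bound to $-\overline{M}_i$ yields $\hat{\mathbb{P}}_0^\pi[\max_{k\le m}|\overline{M}_i(k)|\ge a]\le 2e^{-a^2/(2mB_i^2)}$. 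The piecewise-linear interpolant $\overline{m}_i^{(n)}$ attains its supremum over $[t_1,t_2]$ at a grid point, so $\sup_{t\in[t_1,t_2]}|\overline{m}_i^{(n)}(t)|\le \tfrac{1}{n}\max_{0\le k\le\lceil nt_2\rceil}|\overline{M}_i(k)|$; combining this with $\delta t\ge\delta t_1$ on $[t_1,t_2]$,
\[
\hat{\mathbb{P}}_0^\pi\!\left[\exists\,t\in[t_1,t_2]:\ |\overline{m}_i^{(n)}(t)|\ge\delta t\right]
\le \hat{\mathbb{P}}_0^\pi\!\left[\max_{0\le k\le\lceil nt_2\rceil}|\overline{M}_i(k)|\ge n\delta t_1\right]
\le 2\exp\!\left(-\frac{n^2\delta^2 t_1^2}{2\lceil nt_2\rceil B_i^2}\right),
\]
which tends to $0$ (exponentially fast) as $n\to\infty$; taking complements gives the lemma.

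\textbf{Main obstacle.} The only step requiring genuine care is the verification of the martingale property under $\hat{\mathbb{P}}_0^\pi$: one must confirm that the twist acts solely on the \emph{sampled} channel each slot, that $X_i(l)$ is truly predictable (decided before $R(l)$ is observed, even for randomized policies), and that independence-across-time of the channel process is untouched by the twist. Once the increments are seen to be bounded martingale differences, the rest is the textbook Azuma--Doob estimate together with the elementary observation about piecewise-linear interpolants.
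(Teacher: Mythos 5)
Your proof is correct and follows essentially the same route as the paper: identify $\overline{M}_i(\cdot)$ as a bounded-difference martingale under the twisted measure $\hat{\mathbb{P}}_0^\pi$ (the paper asserts this with less detail than your predictability argument) and apply exponential concentration. The only cosmetic difference is that the paper uses the pointwise Azuma--Hoeffding bound together with a union bound over the $O(n)$ slots in $[nt_1,nt_2]$, whereas you use a Doob maximal inequality; both give the same exponential decay and either suffices.
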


\begin{proof}
  Observe that for each $i$, $\{\overline{M}_i(k)\}_k$ is a martingale
  (with respect to the measure $\hat{\mathbb{P}^\pi_0}$) null at 0 and
  with differences bounded by $D \bydef (R_{\max} + \max_i \phi_i)$,
  where $R_{\max}$ is the maximum channel rate across all channels in
  the system. An application of the Azuma-Hoeffding martingale
  inequality \cite{mitzenmacherupfal05:probcomp} thus gives
  \begin{equation}
    \label{eqn:azuma}
    \hat{\mathbb{P}}_0^{\pi} \left[\left|\frac{\overline{M}_i(k)}{k}\right| \geq \gamma\right] \leq 2e^{-\frac{k \gamma^2}{2 D^2}}
  \end{equation}
  for all $k = 1, 2, \ldots$. Hence, a union bound gives 
  \begin{align*}
    1 - &\hat{\mathbb{P}}_0^{\pi} \left[ |\overline{m}_i(t)| < \delta t \quad \forall t \in \left[t_1, t_2\right]\right] = \hat{\mathbb{P}}_0^{\pi} \left[ \exists t \in \left[t_1, t_2\right] \; |\overline{m}_i(t)| \geq \delta t\right] \\
    &\leq \sum_{k = nt_1}^{nt_2} \hat{\mathbb{P}}_0^{\pi} \left[ \left|\frac{\overline{M}_i(k)}{k}\right| \geq \delta \right] \\
    &\leq \sum_{k = nt_1}^{nt_2} 2e^{-\frac{k \delta^2}{2 D^2}} \\
    &\leq 2n(t_2-t_1) e^{-\frac{nt_1 \delta^2}{2 D^2}} \; \stackrel{n \to \infty}{\longrightarrow} 0 \quad \mbox{($\because$ $t_1 > 0$)}  , 
  \end{align*}
  which is the stated result. 
\end{proof}

\subsubsection{Step 2 of 4}
Let us fix $\delta > 0$ small enough, and let $\epsilon > 0$ be such
that 
\[ (t_{\max} + \epsilon)^{-1} = \min_{\mu \in \mathcal{C}(\phi_1,\ldots,\phi_N)} \max_i (\lambda_i - \delta - \mu_i). \]
Additionally, fix a time $t_0 > 0$ small enough, and let $A \equiv A_n$ denote
the event whose (twisted) probability is estimated in Lemma
\ref{lem:approx}, i.e.
\[ A_n \equiv A_n(\delta) \equiv A_n(\delta,t_0,t_{\max}) \bydef \left\{ |\overline{m}^{(n)}_i(t)| < \delta t \quad \forall t \in \left[t_0, t_{\max}\right] \right\}. \]

Denote the (unreflected and fluid-scaled) maximum queue length process
by $d(\cdot) \equiv d^{(n)}(\cdot) \bydef \max_i u^{(n)}_i(\cdot)$. It
follows that in the event $A_n$, $d(\cdot)$ must overflow (i.e.  hit
level 1) at least once by time $(t_{\max} + \epsilon)$. In other
words, if we let
\[ \tau \equiv \tau_n \bydef \inf \left\{t = 0, \frac{1}{n},
  \frac{2}{n}, \ldots: d(t) \geq 1 \right\}, \]

then $A_n \subseteq \{\tau_n \leq t_{\max} + \epsilon\}$. For each $n
= 1, 2, \ldots$, define the (deterministic) time
\[ t_n \bydef \arg \max_{t = 0, \frac{1}{n}, \ldots, t_{\max} + \epsilon} \hat{\mathbb{P}}_0^{\pi}\left[\tau_n = t \right]  \]
with ties broken in an arbitrary fashion. Observe that $t_n$ does not depend upon $\delta$, $t_0$ or $A_n$.
Also, note that
\begin{align*}
  \hat{\mathbb{P}}_0^{\pi}\left[A_n\right] &\leq \hat{\mathbb{P}}_0^{\pi}\left[\tau_n \leq t_{\max} + \epsilon \right] \leq \sum_{t=0, \frac{1}{n}, \ldots, t_{\max} + \epsilon} \hat{\mathbb{P}}_0^{\pi}\left[\tau_n = t \right] \\
  &\leq n(t_{\max} + \epsilon) \left(\max_{t = 0, \frac{1}{n}, \ldots, t_{\max} + \epsilon} \hat{\mathbb{P}}_0^{\pi}\left[\tau_n = t \right]\right)\\
  &\leq n(t_{\max} + \epsilon) \hat{\mathbb{P}}_0^{\pi}\left[\tau_n = t_n \right]\\
  \Rightarrow \; \hat{\mathbb{P}}_0^{\pi}\left[\tau_n = t_n \right] &\geq \frac{\hat{\mathbb{P}}_0^{\pi}\left[A_n\right]}{n(t_{\max} + \epsilon)}
\end{align*}
Since the rate of change of $d^{(n)}(\cdot)$ is bounded by $D$, we can
write
\begin{align}
  \label{eqn:step2result}
  \Rightarrow \hat{\mathbb{P}}^\pi_0\left[d(t_n) \in \left[1, 1 + \frac{D}{n}\right] \right] &\geq \hat{\mathbb{P}}_0^{\pi}\left[\tau_n = t_n \right] \geq \frac{\hat{\mathbb{P}}_0^{\pi}\left[A_n\right]}{n(t_{\max} + \epsilon)}.
\end{align}

\subsubsection{Step 3 of 4}
This step involves showing that when the queues overflow at time $t_n$
then the scheduling choice fractions $c^{(n)}(t_n)/t_n$ at that time
are very likely to be the ones that cause ``straight-line'' overflow
at time $t_n$ from the all-empty queue state.

Recall that $\delta > 0$ is a sufficiently small number. We denote by
$\Gamma_n$ the set of \emph{$\delta$-compatible} scheduling fractions
for overflow at time $t_n$ as follows:
\[ \Gamma_n \equiv \Gamma_n(\delta, t_n) \bydef \left\{(f_1',\ldots,f_N'): \sum_i f_i' = 1, f_i' \geq 0, \max_i \left(\lambda_i t_n - \phi_i f_i' t_n\right) \in [1-\delta t_n, 1+\delta t_n] \right\}. \]

\begin{lemma}
  \label{lem:compatiblefrac}
  For all $n$ large enough and $\delta > 0$,
  \[\hat{\mathbb{P}}^\pi_0\left[d(t_n) \in \left[1 - \frac{D}{n}, 1 + \frac{D}{n}\right], \frac{c(t_n)}{t_n} \notin \Gamma_n \right] \leq 2N e^{- \frac{nt_0 \delta^2}{8 D^2}} . \]
\end{lemma}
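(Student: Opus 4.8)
The plan is to show that, for all large $n$, the event appearing in the lemma is contained in the complement of a simple ``martingale control'' event at the single \emph{deterministic} time $t_n$, whose probability we bound by the Azuma--Hoeffding inequality exactly as in the proof of Lemma~\ref{lem:approx}. The starting observation is the a priori bound: since $m_i^{(n)}(\cdot)\ge 0$ is nondecreasing, $d^{(n)}(t)=\max_i u_i^{(n)}(t)=\max_i(\lambda_i t-m_i^{(n)}(t))\le(\max_i\lambda_i)\,t=t/t_{\min}$, so the unreflected max-queue cannot reach level $1$ before time $t_{\min}$. Hence $\tau_n\ge t_{\min}$ surely, and therefore the (nonrandom) maximizer $t_n$ satisfies $t_n\ge t_{\min}>0$; in particular $t_n\in[t_0,t_{\max}+\epsilon]$ once $t_0\le t_{\min}$, which we are free to assume.

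Define the good event $\tilde A_n\bydef\{\,|\overline m_i^{(n)}(t_n)|<(\delta/2)\,t_n\text{ for every }i\,\}$. Because $t_n$ is nonrandom and each $\overline M_i(\cdot)$ is a $\hat{\mathbb P}_0^\pi$-martingale null at $0$ with increments bounded by $D$, the Azuma--Hoeffding bound applied at time $nt_n$ gives $\hat{\mathbb P}_0^\pi[\,|\overline M_i(nt_n)/(nt_n)|\ge\delta/2\,]\le 2e^{-nt_n\delta^2/(8D^2)}\le 2e^{-nt_0\delta^2/(8D^2)}$, using $t_n\ge t_{\min}\ge t_0$; a union bound over $i$ yields $\hat{\mathbb P}_0^\pi[\tilde A_n^c]\le 2Ne^{-nt_0\delta^2/(8D^2)}$. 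It therefore suffices to establish, for all $n$ large, the inclusion
\[ \{\, d^{(n)}(t_n)\in[1-D/n,\,1+D/n]\,\}\cap\{\, c^{(n)}(t_n)/t_n\notin\Gamma_n\,\}\ \subseteq\ \tilde A_n^c . \]

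For the inclusion, work on $\tilde A_n$ and write $f_i'\bydef c_i^{(n)}(t_n)/t_n$ and $b_i\bydef\lambda_i t_n-\phi_i c_i^{(n)}(t_n)=t_n(\lambda_i-\phi_i f_i')$. Using the decomposition $m_i^{(n)}=\overline m_i^{(n)}+\phi_i c_i^{(n)}$ from Step~1 and the defining bound of $\tilde A_n$, we get $|u_i^{(n)}(t_n)-b_i|=|\overline m_i^{(n)}(t_n)|<(\delta/2)t_n$ for every $i$; since an additive $(\delta/2)t_n$ perturbation of each coordinate perturbs the maximum by at most $(\delta/2)t_n$, this gives $|d^{(n)}(t_n)-\max_i b_i|\le(\delta/2)t_n$. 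If moreover $d^{(n)}(t_n)\in[1-D/n,1+D/n]$, then $|\max_i b_i-1|\le D/n+(\delta/2)t_n\le\delta t_n$, where the last step uses $D/n\le(\delta/2)t_{\min}\le(\delta/2)t_n$ for $n$ large --- this is exactly where the lower bound $t_n\ge t_{\min}>0$ is used. Thus $\max_i(\lambda_i t_n-\phi_i f_i' t_n)\in[1-\delta t_n,1+\delta t_n]$; combined with $f_i'\ge 0$ and $\sum_i f_i'=\sum_i c_i^{(n)}(t_n)/t_n=1+O(1/n)$ (exactly one user is scheduled per slot; the $O(1/n)$ error is absorbed into the $\delta$-slack, or one simply renormalizes the fractions), this says precisely that $c^{(n)}(t_n)/t_n\in\Gamma_n$. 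Taking contrapositives yields the inclusion, and hence the claimed bound.

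I expect the only genuinely delicate point to be the uniform-in-$n$ lower bound $t_n\ge t_{\min}>0$ on the deterministic overflow time $t_n$: it is what makes $D/n$ eventually negligible next to $\delta t_n$, and what lets Azuma--Hoeffding \emph{at the single time} $t_n$ deliver the stated exponent $\delta^2/(8D^2)$ (via the choice $\delta/2$). The remaining steps --- the $\max$-perturbation estimate and the $O(1/n)$ normalization of the scheduling fractions --- are routine bookkeeping.
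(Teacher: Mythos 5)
Your proposal is correct and follows essentially the same route as the paper: the event in question is shown to force $|\overline{m}_i(t_n)|>(\delta/2)t_n$ for some $i$ (via the contrapositive/max-perturbation argument, using $D/n\le(\delta/2)t_n$ for large $n$), and a union bound plus Azuma--Hoeffding at the single deterministic time $t_n$ gives the stated exponent. Your explicit justification of the uniform lower bound $t_n\ge t_{\min}\ge t_0$ is a point the paper leaves implicit, but it is not a different argument.
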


\begin{proof}
  For $n$ sufficiently large,
  \begin{align*}
    d(t_n) \in \left[1 - \frac{D}{n}, 1 + \frac{D}{n}\right] &\Rightarrow d(t_n) \in \left[1 - \frac{\delta}{2} t_n, 1 + \frac{\delta}{2} t_n \right].    
  \end{align*}
  Also, 
  \begin{align*}
    d(t_n) \in \left[1 - \frac{\delta}{2} t_n, 1 + \frac{\delta}{2} t_n \right], \frac{c(t_n)}{t_n} \notin \Gamma_n &\Rightarrow \exists i \; |\overline{m}_i(t_n)| > \frac{\delta}{2} t_n. 
  \end{align*}
  Thus, 
  \begin{align*}
    &\hat{\mathbb{P}}^\pi_0\left[ d(t_n) \in \left[1 - \frac{D}{n}, 1 + \frac{D}{n}\right], \frac{c(t_n)}{t_n} \notin \Gamma_n \right] \leq \hat{\mathbb{P}}^\pi_0\left[\exists i \; |\overline{m}_i(t_n)| > \frac{\delta}{2} t_n \right] \\
    &\leq \sum_i \hat{\mathbb{P}}^\pi_0\left[|\overline{m}_i(t_n)| > \frac{\delta}{2} t_n \right] \leq \sum_i 2e^{- \frac{nt_n \delta^2}{8 D^2}} \quad \mbox{(by the Azuma-Hoeffding inequality (\ref{eqn:azuma}))} \\
    &\leq 2N e^{- \frac{nt_0 \delta^2}{8 D^2}}.
  \end{align*}
\end{proof}

\subsubsection*{Step 4 of 4}
We can now finally develop the right-hand side of
(\ref{eqn:chaincontinue}) using the results from the previous steps:
\begin{align}
  &\mathbb{P}^\pi_0\left[\max_i u^{(n)}_i(t_n) \geq 1 \right]  = \mathbb{P}^\pi_0\left[d^{(n)}(t_n) \geq 1 \right] \nonumber \\
  &\geq \mathbb{P}^\pi_0\left[d^{(n)}(t_n) \geq 1, c^{(n)}(t_n)/t_n \in \Gamma_n \right] \nonumber \\
  &= \mathbb{E}^\pi_0\left[\mathbbm{1}_{\{d(t_n) \geq 1, c(t_n)/t_n \in \Gamma_n\}} \right] \nonumber \\
  &= \hat{\mathbb{E}}^\pi_0\left[\mathbbm{1}_{\{d(t_n) \geq 1, c(t_n)/t_n \in \Gamma_n\}} \prod_{l=0}^{nt_n} \exp\left[-\Lambda^*_{U(l)}(\phi_{U(l)}') - \eta_{U(l)}' \left(R_{U(l)}(l) - \phi_{U(l)}'\right) \right] \right] \nonumber \\
  &= \hat{\mathbb{E}}^\pi_0\left[\mathbbm{1}_{\{d(t_n) \geq 1, c(t_n)/t_n \in \Gamma_n\}} \exp \left[-nt_n \left(\sum_i \frac{c_i(t_n)}{t_n} \Lambda^*_i(\phi_i) \right)  \right] \exp\left[-n w(t_n) \right] \right] \nonumber \\
  &\hspace{20pt} \left(\mbox{with} \; w(t_n) \equiv w^{(n)}(t_n) \bydef \frac{1}{n}W(nt_n) \bydef  \frac{1}{n} \sum_{l=0}^{nt_n} \eta_{U(l)}' \left(R_{U(l)}(l) - \phi_{U(l)}'\right)\right) \nonumber \\
    &= \hat{\mathbb{E}}^\pi_0\left[\mathbbm{1}_{\{d(t_n) \geq 1, c(t_n)/t_n \in \Gamma_n\}} \exp \left[-nt_n \left( \sup_{f' \in \Gamma_n} \sum_i f'_i \Lambda^*_i(\phi_i) \right)  \right] \exp\left[-n w(t_n) \right] \right] \nonumber \\
    &= \exp \left[-nt_n \left( \sup_{f' \in \Gamma_n} \sum_i f'_i \Lambda^*_i(\phi_i) \right)  \right]  \hat{\mathbb{E}}^\pi_0\left[\mathbbm{1}_{\{d(t_n) \geq 1, c(t_n)/t_n \in \Gamma_n\}} e^{-n w(t_n)} \right]. \label{eqn:rhsdevelopment}
\end{align}
The second term in the product above can be bounded from below for any
$\zeta > 0$ as follows:
\begin{align}
  &\hat{\mathbb{E}}^\pi_0\left[\mathbbm{1}_{\{d(t_n) \geq 1, c(t_n)/t_n \in \Gamma_n\}} e^{-n w(t_n)} \right] \geq \hat{\mathbb{E}}^\pi_0\left[\mathbbm{1}_{\{d(t_n) \geq 1, c(t_n)/t_n \in \Gamma_n, |w(t_n)| < \zeta \}} e^{-n \zeta} \right] \nonumber \\
  &=  e^{-n \zeta} \; \hat{\mathbb{P}}^\pi_0\left[d(t_n) \geq 1, \frac{c(t_n)}{t_n} \in \Gamma_n, |w(t_n)| < \zeta \right] \nonumber \\
  &\geq  e^{-n \zeta} \; \hat{\mathbb{P}}^\pi_0\left[d(t_n) \in \left[1, 1+\frac{D}{n} \right], \frac{c(t_n)}{t_n} \in \Gamma_n, |w(t_n)| < \zeta \right]. \label{eqn:lb0}
\end{align}
We have
\begin{align}
  &\hat{\mathbb{P}}^\pi_0\left[d(t_n) \in \left[1, 1+\frac{D}{n} \right], \frac{c(t_n)}{t_n} \in \Gamma_n, |w(t_n)| < \zeta \right] \nonumber \\
  &\geq \hat{\mathbb{P}}^\pi_0\left[d(t_n) \in \left[1, 1+\frac{D}{n} \right], \frac{c(t_n)}{t_n} \in \Gamma_n \right] - \hat{\mathbb{P}}^\pi_0\left[ |w(t_n)| \geq \zeta \right] \nonumber \\
  &\geq \hat{\mathbb{P}}^\pi_0\left[d(t_n) \in \left[1, 1+\frac{D}{n} \right]\right] - \hat{\mathbb{P}}^\pi_0\left[d(t_n) \in \left[1, 1+\frac{D}{n} \right], \frac{c(t_n)}{t_n} \notin \Gamma_n \right] - \hat{\mathbb{P}}^\pi_0\left[ |w(t_n)| \geq \zeta \right]. \label{eqn:problb}
\end{align}
By definition and the properties of the twisted distribution
$\hat{\mathbb{P}}$, it can be seen that $\{W(k)\}_{k=0,1,\ldots}$ is
again a martingale null at 0 and with bounded increments (bounded by,
say, $D_2 \bydef \max_i \eta'_i (R_{\max} + \phi_i)$). Hence, the
Azuma-Hoeffding inequality applied to it yields
\begin{align*}
  \hat{\mathbb{P}}^\pi_0\left[ |w(t_n)| \geq \zeta \right] &\leq 2e^{-\frac{n \zeta^2}{2t_n D_2^2}} \leq 2e^{-\frac{n \zeta^2}{2t_{\max} D_2^2}}.
\end{align*}
Using this and the results of Steps 2 and 3, (\ref{eqn:problb}) becomes
\begin{align*}
  &\hat{\mathbb{P}}^\pi_0\left[d(t_n) \in \left[1, 1+\frac{D}{n} \right], \frac{c(t_n)}{t_n} \in \Gamma_n, |w(t_n)| < \zeta \right] \\
  &\geq \frac{\hat{\mathbb{P}}_0^{\pi}\left[A_n\right]}{n(t_{\max} + \epsilon)} - 2N e^{- \frac{nt_0 \delta^2}{8 D^2}} - 2e^{-\frac{n \zeta^2}{2t_{\max} D_2^2}} \\
  &\geq \frac{1/2}{n(t_{\max} + \epsilon)} - 2N e^{- \frac{nt_0 \delta^2}{8 D^2}} - 2e^{-\frac{n \zeta^2}{2t_{\max} D_2^2}} \quad \mbox{(for $n$ large enough, by Step 1)}. 
\end{align*}
The first term above decays as $n^{-1}$ while the second and third
terms decay exponentially in $n$, thus
\begin{align}
  -\liminf_{n \to \infty} \frac{1}{n} \log \hat{\mathbb{P}}^\pi_0\left[d(t_n) \in \left[1, 1+\frac{D}{n} \right], \frac{c(t_n)}{t_n} \in \Gamma_n, |w(t_n)| < \zeta \right] \leq 0. \label{eqn:lb1}
\end{align}
What remains is to bound the first term in the product in
(\ref{eqn:rhsdevelopment}). By definition, for every $f' \in
\Gamma_n$, we have
\begin{align}
  \max_i (\lambda_i - \phi_i f_i') &\leq \frac{1}{t_n} + \delta \nonumber \\
  \Rightarrow \;   t_n \sup_{f' \in \Gamma_n} \sum_i f'_i \Lambda^*_i(\phi_i)  &\leq \sup_{f' \in \Gamma_n} \frac{\sum_i f'_i \Lambda^*_i(\phi_i)}{\max_i (\lambda_i - \phi_i f_i') - \delta} \nonumber \\
  &\leq \sup_{\substack{\sum_i f_i' = 1\\f_i' \geq 0}} \frac{\sum_i f'_i \Lambda^*_i(\phi_i)}{\max_i (\lambda_i - \phi_i f_i') - \delta}. \label{eqn:lb2}
\end{align}
Applying the conclusions of (\ref{eqn:lb0}), (\ref{eqn:lb1}) and
(\ref{eqn:lb2}) to (\ref{eqn:rhsdevelopment}), we get
\begin{align*}
    -\liminf_{n \to \infty} \frac{1}{n} \log \mathbb{P}^\pi_0\left[\max_i u^{(n)}_i(t_n) \geq 1 \right] &\leq \zeta + \sup_{\substack{\sum_i f_i' = 1\\f_i' \geq 0}} \frac{\sum_i f'_i \Lambda^*_i(\phi_i)}{\max_i (\lambda_i - \phi_i f_i') - \delta}. 
\end{align*}
The arbitrary choice of $\zeta > 0$ and $\delta > 0$ implies that
\begin{align*}
    -\liminf_{n \to \infty} \frac{1}{n} \log \mathbb{P}^\pi_0\left[\max_i u^{(n)}_i(t_n) \geq 1 \right] &\leq \sup_{\substack{\sum_i f_i' = 1\\f_i' \geq 0}} \frac{\sum_i f'_i \Lambda^*_i(\phi_i)}{\max_i (\lambda_i - \phi_i f_i')}. 
\end{align*}
The stationary distribution $\mathbb{P}^{\pi}$ induced by the
(stabilizing) scheduling policy $\pi$ forces $\pi((0,0,\ldots,0)) >
0$, so (\ref{eqn:chaincontinue}) finally implies
\begin{align*}
    -\liminf_{n \to \infty} \frac{1}{n} \log \mathbb{P}^{\pi}\left[ ||q^{(n)}(0)||_\infty \geq 1 \right] \leq \sup_{\substack{\sum_i f_i' = 1\\f_i' \geq 0}} \frac{\sum_i f'_i \Lambda^*_i(\phi_i)}{\max_i (\lambda_i - \phi_i f_i')},
\end{align*}
which completes the proof of the proposition.

\section{Proof of Proposition \ref{prop:mqopt}}
\label{app:mqopt}

  Recall that $J_*$ is the infimum 
  \begin{align}
    \label{eqn:j*def}
      J_* \bydef \inf_{\substack{T,(m^T,c^T,q^T)\\0 \leq t \leq T}}  \frac{\sum_{i=1}^N \dot{c_i}(t) \Lambda_i^* \left(\frac{\dot{m_i}(t)}{\dot{c_i}(t)}\right)}{\frac{d}{dt} ||q(t)||_\infty}
  \end{align}
  over all feasible Fluid Sample Paths at regular points $t$. There is
  nothing to be done if the right hand side above is $\infty$, so we
  exclude this case. We have the following characterization of regular
  points under the Max-Queue scheduling algorithm.
  \begin{lemma}
    \label{lem:regular}
    Under the Max-Queue policy, let $s(t) \bydef \arg
    \max_{i=1,\ldots,N} q_i(t) \subseteq \{1,\ldots,N\}$. If $t$ is a
    regular point, then
    \begin{enumerate}
    \item $c_i'(t) = 0$ $\forall i \notin s(t)$, i.e., the
      non-maximum fluid queues do not receive service,
    \item $\frac{d}{dt}||q(t)||_\infty = \lambda_i - m_i'(t)$ $\forall
      i \in s(t)$, i.e., all the maximum fluid queues grow at the same
      rate.
    \end{enumerate}
  \end{lemma}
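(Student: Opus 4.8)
The plan is to obtain both assertions by pushing the defining greedy structure of Max-Queue from the prelimit systems down to the fluid limit, and then invoking an elementary envelope argument for maxima of finitely many functions. Throughout, fix a regular point $t$, so that $q_i$, $q_*\bydef\|q\|_\infty=\max_i q_i$, $m_i$, $\hat f_i$ and $c_i$ are all differentiable at $t$; since the non-regular points form a Lebesgue-null set, it costs nothing to also take $t$ in the interior of $[0,T]$.

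\emph{Part 1.} Let $i\notin s(t)$, i.e.\ $q_i(t)<q_*(t)$. As the queue functions are continuous (indeed Lipschitz) and there are only $N$ of them, there are $\delta>0$ and $\epsilon>0$ with $q_i(s)\le q_*(s)-\epsilon$ for all $s\in(t-\delta,t+\delta)\cap[0,T]$. By the u.o.c.\ convergence $q^{(n)}\to q$, for all large $n$ we then have $Q_i^{(n)}(k)<\max_j Q_j^{(n)}(k)$ for every integer $k$ with $k/n\in(t-\delta,t+\delta)$; hence over this block of time slots user $i$ is never a longest queue, so Max-Queue never selects it. Consequently $C_i^{(n)}$, and therefore $c_i^{(n)}$ and its u.o.c.\ limit $c_i$, is constant on $(t-\delta,t+\delta)$, giving $c_i'(t)=0$. (In the singleton setting queue $i$ is served only in slots in which the subset $\{i\}$ is chosen, so the same argument also yields $\hat f_i'(t)=m_i'(t)=0$ for $i\notin s(t)$; this is not needed for the statement but is convenient downstream.)

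\emph{Part 2.} We may assume $q_*(t)>0$, since when the maximum queue vanishes the claim plays no role in bounding $J_*$ (the relevant infimum is over regular points with $\frac{d}{dt}\|q(t)\|_\infty>0$). Fix $i\in s(t)$, so $q_i(t)=q_*(t)>0$, and hence, by continuity, $q_i>0$ on a whole neighbourhood of $t$. On that neighbourhood the reflection term is inactive in the prelimit for $n$ large (since $Q_i^{(n)}(k)>nq_*(t)/3>R_{\max}$ there), so the packets served from queue $i$ coincide with its sampled rates, i.e.\ $\hat f_i^{(n)}=m_i^{(n)}+O(1/n)$ locally, whence $\dot{\hat f}_i(t)=m_i'(t)$; combined with $f_i(t)=\lambda_i t$ (deterministic arrivals) and the fluid recursion $q_i(t)=q_i(0)+f_i(t)-\hat f_i(t)$ this gives $\dot q_i(t)=\lambda_i-m_i'(t)$. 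It remains to identify $\frac{d}{dt}\|q(t)\|_\infty=\dot q_*(t)$ with this value: the function $q_i-q_*$ is everywhere $\le 0$ and equals $0$ at $t$, so $t$ is an interior maximum of the differentiable function $q_i-q_*$, forcing $\dot q_i(t)=\dot q_*(t)$. Hence $\frac{d}{dt}\|q(t)\|_\infty=\lambda_i-m_i'(t)$ for every $i\in s(t)$, as claimed.

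Part 1 is routine. The points that the write-up should spell out carefully, and where essentially all the (mild) work lies, are in Part 2: (i) justifying that, locally around $t$, the reflection in the prelimit queue recursion never binds, so that served packets equal sampled rates and $\hat f_i'(t)=m_i'(t)$; and (ii) the envelope step identifying $\dot q_*(t)$ with the common value $\dot q_i(t)$, $i\in s(t)$, which rests on $q_*$ being a pointwise maximum of finitely many functions all differentiable at the regular point $t$. I do not anticipate any genuine obstacle beyond this bookkeeping.
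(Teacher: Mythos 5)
The paper states Lemma \ref{lem:regular} without any proof (it is asserted in Appendix \ref{app:mqopt} and immediately used), so there is no authorial argument to compare against; your write-up supplies the standard fluid-limit proof and it is correct. Part 1 --- transferring the strict gap $q_i<q_*$ on a neighbourhood of $t$ to the prelimit via u.o.c.\ convergence, so that Max-Queue never selects user $i$ there and $c_i^{(n)}$ is locally constant --- and the two steps you isolate in Part 2 (the reflection term is inactive once $Q_i^{(n)}>R_{\max}$ locally, so $\dot{\hat f}_i(t)=m_i'(t)$ and $\dot q_i(t)=\lambda_i-m_i'(t)$; and the envelope identity $\dot q_i(t)=\dot q_*(t)$ from $q_i-q_*\le 0$ having an interior maximum at the regular point $t$) are exactly what is needed. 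Your caveat about $q_*(t)=0$ is a genuine, if minor, point: at such a regular point reflection forces $\dot q_*(t)=0$ while $\lambda_i-m_i'(t)$ need not vanish, so Part 2 as literally stated requires $q_*(t)>0$; this restriction is harmless because the lemma is only invoked to lower-bound the ratio in (\ref{eqn:j*def}) at regular points where the denominator $\frac{d}{dt}\|q(t)\|_\infty$ is strictly positive, which forces $q_*(t)>0$ on the relevant set.
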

  Thus, by Lemma \ref{lem:regular},
  \begin{align} 
    J_* &\geq \inf_{S \subseteq \{1,\ldots,N\}} \frac{\sum_{i \in S}
      c'_i \Lambda^*_i \left(\phi_i\right)}{w'}, \label{eqn:j*lb}
  \end{align}
  for all non-negative $\{c_i'\}_{i \in S}$, $\{\phi_i\}_{i \in S}$
  satisfying $\sum_{i \in S} c'_i = 1$, and $w' = \lambda_i -
  c_i'\phi_i$ $\forall i \in S$. Note that the denominator $w'$ is
  strictly positive if and only if $\lambda \notin
  \mathcal{C}(\phi_1,\ldots,\phi_N)$, and that each $\phi_i$ can be
  restricted to be at most $\mathbb{E}[R_i]$ (since if $\phi_i >
  \mathbb{E}[R_i]$, reducing $\phi_i$ to
  $\mathbb{E}[R_i]$ only gives a lesser fraction above). 

  For a subset $S \subseteq \{1,\ldots,N\}$, let
  \[\mathcal{D}_S \bydef \left\{(\phi_i)_{i \in S}: R_{\min,i} \leq \phi_i \leq \mathbb{E}[R_i], \; \exists c_i' \geq 0 \; \mbox{with} \; \sum_{i \in S} c_i' = 1, \; \forall i,j \in S \; \lambda_i - c_i' \phi_i = \lambda_j - c_j'\phi_j = w' > 0 \right\}. \]
  It follows that for each such tuple $\phi \in \mathcal{D}_S$,
  there is a \emph{unique} corresponding tuple $c'$ and hence a
  \emph{unique} $w'$
. Thus, if we
  define a map $f^S:\mathcal{D}_S \to \mathbb{R}^+$ by
  \begin{align}
    \label{eqn:fsdef}
    f^S(\phi) \bydef \frac{\sum_{i \in S} c'_i \Lambda^*_i \left(\phi_i\right)}{w'},
  \end{align}
    then (\ref{eqn:j*lb}) is just
  \begin{align} 
    J_* \geq \min_S \inf_{\phi_S \in \mathcal{D}_S} f^S(\phi_S). \label{eqn:j*lb2}
  \end{align}
  
  The next lemma contains the key result needed to prove Proposition \ref{prop:mqopt}:
  \begin{lemma}
    \label{lem:fsmin}
    Let $S \subseteq \{1,\ldots,N\}$ be such that $\mathcal{D}_S \neq \emptyset$. Then,
    \begin{enumerate}
    \item $f^S$ attains its infimum over $\mathcal{D}_S$ at a point
      $\hat{\phi}_S \in \mathcal{D}_S$.
    \item For every $\{c_i'\}_{i \in S}$ with $c_i' \geq
      0$, $\sum_{i \in S} c_i' = 1$, we have
      \[f^S(\hat{\phi}_S) \geq \frac{\sum_{i \in S} c_i' \Lambda^*_i(\hat{\phi}_i)}{\max_{i \in S} (\lambda_i - c_i' \hat{\phi}_i)}. \]
    \end{enumerate}
  \end{lemma}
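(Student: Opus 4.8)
The plan is to treat the two assertions separately: the first is a compactness argument, the second carries the substance and I will extract it from the first–order optimality of $\hat\phi_S$.

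\textbf{Part 1: existence of a minimizer.} Since each $\phi_S\in\mathcal D_S$ determines $(c',w')$ uniquely, I would regard $f^S$ as a function of $\phi_S$ alone: with $w'(\phi_S)$ the unique root of $\sum_{i\in S}(\lambda_i-w')/\phi_i=1$ and $c_i'(\phi_S)=(\lambda_i-w'(\phi_S))/\phi_i$, one has $\mathcal D_S=\{\phi_S:\ R_{\min,i}\le\phi_i\le\mathbb E[R_i]\ \forall i,\ 0<w'(\phi_S)\le\min_{i\in S}\lambda_i\}$ and $f^S(\phi_S)=\big(\sum_{i\in S}c_i'(\phi_S)\Lambda_i^*(\phi_i)\big)/w'(\phi_S)$. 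The set $\mathcal D_S$ sits inside the compact box $B_S:=\prod_{i\in S}[R_{\min,i},\mathbb E[R_i]]$, on which $w'(\cdot)$ and the $c_i'(\cdot)$ are continuous and each $\Lambda_i^*$ (convex, finite and lower semicontinuous, hence continuous on $[R_{\min,i},\mathbb E[R_i]]$) is continuous; so $f^S$ is continuous on $\mathcal D_S$. Taking a minimizing sequence $\phi_S^{(k)}$ (the case $\inf f^S=\infty$ being vacuous) and a subsequential limit $\phi_S^\ast\in B_S$: if $w'(\phi_S^\ast)>0$ then $\phi_S^\ast\in\mathcal D_S$ and continuity forces $f^S(\phi_S^\ast)=\inf_{\mathcal D_S}f^S$; if $w'(\phi_S^\ast)=0$, i.e.\ $\sum_{i\in S}\lambda_i/\phi_i^\ast=1$, then since $\lambda$ is interior to the throughput region we have $\sum_{i\in S}\lambda_i/\mathbb E[R_i]<1$, forcing $\phi_j^\ast<\mathbb E[R_j]$ for some $j$, whence $\Lambda_j^*(\phi_j^\ast)>0$ and $c_j'(\phi_S^\ast)=\lambda_j/\phi_j^\ast>0$, so the numerator of $f^S$ stays bounded away from $0$ while $w'\to0^{+}$ and $f^S(\phi_S^{(k)})\to\infty$, contradicting minimality. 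Thus a minimizer $\hat\phi_S\in\mathcal D_S$ exists.

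\textbf{Part 2: reduction and the key stationarity relation.} Fix $\hat\phi_S$, write $a_i:=\Lambda_i^*(\hat\phi_i)$, let $\hat w',\hat c'$ be the associated quantities, and set $\beta:=f^S(\hat\phi_S)=\big(\sum_{i\in S}\hat c_i'a_i\big)/\hat w'$. Because $\hat\phi_S\in\mathcal D_S$ gives $\hat w'>0$, equivalently $\sum_{i\in S}\lambda_i/\hat\phi_i>1$, the point $\lambda|_S$ lies outside the convex hull of $0$ and the $\hat\phi_i e_i$ ($i\in S$); since that hull is downward closed this yields $\max_{i\in S}(\lambda_i-c_i'\hat\phi_i)>0$ for every $c'$ in the simplex, so all denominators in assertion~2 are positive and it suffices to prove $\sum_{i\in S}c_i'a_i\le\beta\max_{i\in S}(\lambda_i-c_i'\hat\phi_i)$ for every such $c'$. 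The crux is to distil a stationarity relation from the minimality of $\hat\phi_S$: taking (as a first pass) $\hat\phi_S$ interior with all $\hat c_i'>0$, the first–order condition $\partial_{\phi_j}f^S(\hat\phi_S)=0$, obtained by implicit differentiation of $w'(\cdot)$, reduces after dividing by $\hat c_j'$ and clearing $\hat\phi_j$ to the statement that $\Lambda_j^*(\hat\phi_j)-\hat\phi_j(\Lambda_j^*)'(\hat\phi_j)$ is one and the same constant $K$ for all $j\in S$. Writing $\hat\eta_j:=(\Lambda_j^*)'(\hat\phi_j)\le 0$ (as $\hat\phi_j\le\mathbb E[R_j]$) and using the conjugacy identity $\Lambda_j^*(\hat\phi_j)=\hat\eta_j\hat\phi_j-\Lambda_j(\hat\eta_j)$, this says the supporting lines of the $\Lambda_j^*$ at the $\hat\phi_j$ all meet the vertical axis at the common height $K=-\Lambda_j(\hat\eta_j)$, i.e.\ $a_j=K+\hat\eta_j\hat\phi_j$; feeding this back into $\sum_j\hat c_j'=1$, $\hat c_j'\hat\phi_j=\lambda_j-\hat w'$ and $\beta\hat w'=\sum_j\hat c_j'a_j$ gives, by elementary algebra, the two identities $\beta=-\sum_{j\in S}\hat\eta_j$ and $K+\sum_{j\in S}\hat\eta_j\lambda_j=0$.

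\textbf{Part 2: conclusion.} With these in hand the inequality is immediate. For an arbitrary simplex point $c'$, put $r_i:=\lambda_i-c_i'\hat\phi_i$; then, using $a_i=K+\hat\eta_i\hat\phi_i$, $\sum_i c_i'=1$, $c_i'\hat\phi_i=\lambda_i-r_i$ and $K+\sum_i\hat\eta_i\lambda_i=0$,
\[
\sum_{i\in S}c_i'a_i \;=\; K+\sum_{i\in S}\hat\eta_i(\lambda_i-r_i)\;=\;-\sum_{i\in S}\hat\eta_i r_i\;=\;\sum_{i\in S}(-\hat\eta_i)\,r_i,
\]
and since $-\hat\eta_i\ge0$ and $r_i\le\max_{j\in S}r_j$ the right side is at most $\big(\sum_{i\in S}(-\hat\eta_i)\big)\max_{j\in S}r_j=\beta\max_{i\in S}(\lambda_i-c_i'\hat\phi_i)$, which is exactly assertion~2 after dividing through.

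\textbf{Where the work is.} The step needing genuine care is the boundary analysis glossed over as ``interior''. I would first rule out, by a short derivative computation, that $\hat\phi_j$ sits at an endpoint of $[R_{\min,j},\mathbb E[R_j]]$ whenever $\hat c_j'>0$: at $\hat\phi_j=\mathbb E[R_j]$ the rate function is flat so decreasing $\phi_j$ strictly lowers $f^S$, and at $\hat\phi_j=R_{\min,j}$ the rate function has infinite left–slope so increasing $\phi_j$ strictly lowers $f^S$; hence for such $j$ the derivative is honest and the computation above applies. The coordinates $j$ with $\hat c_j'=0$ (for which necessarily $\hat w'=\lambda_j$) can, without changing $\hat w'$, $\beta$, or $f^S(\hat\phi_S)$, be re-set to $\hat\phi_j=\mathbb E[R_j]$, so $a_j=0$ for them; carrying these along, the two identities acquire extra terms involving $\Sigma_0:=\sum_{j:\hat c_j'=0}1/\mathbb E[R_j]$, and closing the argument then uses the elementary bound $\Sigma_0\big(\hat w'-\max_{j\in S}r_j\big)\le\sum_{j:\hat c_j'=0}c_j'$, valid for every simplex point $c'$ because $\max_{j\in S}r_j\ge \hat w'-c_j'\mathbb E[R_j]$ for each such $j$. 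Checking that this bookkeeping goes through cleanly is the bulk of the proof; the degenerate case of a deterministic channel is trivial and handled separately.
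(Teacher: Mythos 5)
Your Part~1 is essentially the paper's argument (compactness in the box plus showing $f^S\to\infty$ as $w'\to 0^+$, via the stability margin $\delta$), with cosmetic differences. Part~2, however, takes a genuinely different algebraic route: the paper works with the ratio $\eta_i'=\Lambda_i^*(\phi_i)/\phi_i$, derives a lower bound on $f^S(\hat\phi_S)$ from the one-sided condition $\partial_{\phi_i}f^S\geq 0$, and closes by a monotone quotient-of-affine-functions interpolation (the function $g(t)$). You instead set $\hat\eta_j=(\Lambda_j^*)'(\hat\phi_j)$, invoke Legendre conjugacy to write $a_j=K+\hat\eta_j\hat\phi_j$, and obtain the pair of identities $\beta=-\sum_j\hat\eta_j$ and $K+\sum_j\hat\eta_j\lambda_j=0$, whence the desired inequality drops out of one linear-combination step. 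When it applies, your route is cleaner algebraically; I checked the identities and the conclusion and they are correct.

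The step where your argument has a genuine gap is the reliance on the \emph{two-sided} stationarity condition $\partial_{\phi_j}f^S(\hat\phi_S)=0$. The paper deliberately uses only $\partial_{\phi_i}f^S\geq 0$, which is what minimality directly provides (you may always increase $\phi_i$ and stay in $\mathcal{D}_S$, since that decreases $w'$ and so keeps every $c_k'\geq 0$). The equality direction requires that decreasing $\phi_j$ also stays feasible, but if some $\hat c_k'=0$ (i.e.\ $\hat w'=\lambda_k$), any decrease of $\phi_j$ with $\hat c_j'>0$ pushes $w'$ above $\lambda_k$ and forces $c_k'<0$, exiting $\mathcal{D}_S$; in KKT terms, a nonzero multiplier on the active constraint $c_k'\geq 0$ leaves $\partial_{\phi_j}f^S\geq 0$ without equality. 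Your ``Where the work is'' paragraph addresses the box-constraint endpoints and the bookkeeping for the $S_0:=\{k:\hat c_k'=0\}$ coordinates, but the bookkeeping you describe (the extra $\Sigma_0$ terms, etc.) still presupposes that $a_j=K+\hat\eta_j\hat\phi_j$ holds with a single constant $K$ for all $j$ with $\hat c_j'>0$ — precisely what is not justified when $S_0\neq\emptyset$. Replacing equality with the one-sided inequality $a_j\geq K_*+\hat\eta_j\hat\phi_j$ gives a lower bound on $\sum_i c_i'a_i$, which points in the wrong direction for the statement. To close this gap you would either need to rule out $S_0\neq\emptyset$ at the minimizer, or fall back to the paper's device (one-sided derivative plus the $g(t)$ interpolation), or supply a KKT-aware version of your stationarity relation.
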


  \begin{proof}
    Without loss of generality, we will assume $S =
    \{1,\ldots,N\}$. Denote $\mu_i \bydef \mathbb{E}[R_i]$. $\lambda$
    is a stabilizable vector of arrival rates, so $\lambda \in
    \mathcal{C}(\mu_1, \ldots, \mu_N)$ (here $\mu_i$ is overloaded to
    denote the $N$-tuple with the $i$-th coordinate being $\mu_i$ and
    the remaining coordinates being 0). Hence, there exists $\delta >
    0$ such that $\sum_{i=1}^N \frac{\lambda_i}{\mu_i} = 1-\delta$.

    For any $\phi \in \mathcal{D}_S$, we have $\sum_{i=1}^N
    \frac{\lambda_i}{\phi_i} > 1$ by definition. Thus, $\sum_{i=1}^N
    \frac{\lambda_i}{\phi_i} >
    \left(\frac{1}{1-\delta}\right)\sum_{i=1}^N
    \frac{\lambda_i}{\mu_i}$, so $\phi_j < (1-\delta)\mu_j$ for at
    least one $j$. It follows from the properties of the Cram\'{e}r
    rate function for finite alphabets \cite{dembo} that for each $i$,
    $\Lambda^*_i(\cdot)$ is strictly decreasing on
    $[R_{\min,i},\mu_i]$, with $\Lambda^*_i(\mu_i) = 0$. Denote by
    $\gamma$ the positive number $\min_i
    \Lambda^*_i((1-\delta)\mu_i)$. Fix $\epsilon > 0$ small enough. If
    additionally (for $\phi \in \mathcal{D}_S$) $w' <
    \epsilon$, then
    \begin{align*} 
      f^S(\phi) &= \frac{\sum_i c_i' \Lambda^*_i(\phi_i)}{w'} \geq \frac{c_j' \Lambda^*_j(\phi_j)}{w'} = \left(\frac{\lambda_j - w'}{\phi_j w'}\right)\Lambda^*_j(\phi_j) > \left(\frac{\lambda_j - \epsilon}{\mu_j \epsilon} \right) \gamma. 
      \end{align*}
      This means that for every $B > 0$, there exists $\epsilon_B > 0$
      such that $\{\phi \in \mathcal{D}_S: w' < \epsilon_B \}
      \subseteq \{\phi \in \mathcal{D}_S: f^S(\phi) > B\}$. Thus, 
      \[\inf_{\phi \in \mathcal{D}_S} f^S(\phi) = \inf_{\substack{\phi \in \mathcal{D}_S:\\w'\geq \epsilon_B}} f^S(\phi). \]
      Observe that $\{\phi \in \mathcal{D}_S: w' \geq \epsilon_B \}$
      is a compact set, and that the lower-semicontinuity of
      $\Lambda^*_i(\cdot)$ \cite{dembo} forces $f^S$ to be
      lower-semicontinous on this compact set. It follows that $f^S$
      achieves its infimum on this set and thus on
      $\mathcal{D}_S$. This proves the first part of the lemma. 

      Turning to the second part, let $\hat{\phi}_S \in
      \mathcal{D}_S$ infimize $f^S(\cdot)$ over $\mathcal{D}_S$, with
      $R_{\min,i} \leq \hat{\phi}_i \leq \mu_i$ $\forall i \in
      S$. Fix any $i \in S$. Since $\hat{\phi}_S$ is a minimizer,
      increasing $\phi_i = \hat{\phi}_i$ by a small amount (keeping
      the other coordinates unchanged and $\phi_S$ within
      $\mathcal{D}_S$) cannot decrease $f^S(\phi_S)$, i.e.,
      $\left. \frac{\partial}{\partial \phi_i} f^S(\phi_S)
      \right|_{\hat{\phi}_S} \geq 0$. From the definition of $f^S$
      (\ref{eqn:fsdef}), we can write $\frac{\partial}{\partial
        \phi_i} f^S(\phi_S) = \frac{\partial}{\partial \phi_i}
      \frac{N}{D}$, where $N \equiv N(\phi_S) = \sum_{i \in S} c_i'
      \Lambda^*_i(\phi_i)$, and $D \equiv D(\phi_S) = w' \equiv
      w'(\phi_S)$. Thus,
      \begin{align}
        \label{eqn:partialfS}
        0 &\leq \left. \frac{\partial}{\partial \phi_i} f^S(\phi_S)
      \right|_{\hat{\phi}_S} = \frac{1}{D^2(\hat{\phi}_S)} \left. \left( D(\hat{\phi}_S) \frac{\partial}{\partial \phi_i}N({\phi}_S)- N(\hat{\phi}_S) \frac{\partial}{\partial \phi_i}D({\phi}_S) \right)\right|_{\hat{\phi}_S}.
      \end{align}
      Define, for each $i$, $\eta_i' \bydef
      \frac{\Lambda^*_i(\phi_i)}{\phi_i}$ (and $\hat{\eta}_i' \bydef
      \frac{\Lambda^*_i(\hat{\phi}_i)}{\hat{\phi}_i}$). Noticing
      that $\frac{\partial}{\partial \phi_i}D({\phi}_S) =
      \frac{\partial}{\partial \phi_i}(\lambda_j - c_j' \phi_j) = -
      \frac{\partial}{\partial \phi_i} (c_j' \phi_j)$ for all $j \in
      S$, we can write
      \begin{align*}
        \frac{\partial}{\partial \phi_i}N({\phi}_S) &= \frac{\partial}{\partial \phi_i} \sum_{j \in S} c_j' \phi_j \eta_j' = -\frac{\partial D({\phi}_S) }{\partial \phi_i} \cdot \sum_{j \in S} \eta_j' + c_i' \phi_i \cdot \frac{\partial \eta_i'}{\partial \phi_i}.
      \end{align*}
      Along with (\ref{eqn:partialfS}), this implies (evaluated at $\phi_S = \hat{\phi}_S$)
      \begin{align}
        0 &\leq -D(\phi_S) \cdot \frac{\partial D({\phi}_S) }{\partial \phi_i} \cdot \sum_{j \in S} \eta_j' + D(\phi_S) \cdot c_i' \phi_i \cdot \frac{\partial \eta_i'}{\partial \phi_i} - N({\phi}_S) \frac{\partial D({\phi}_S)}{\partial \phi_i} \nonumber \\
        &= -\frac{\partial D({\phi}_S)}{\partial \phi_i} \left[D(\phi_S) \cdot \sum_{j \in S} \eta_j' + N(\phi_S)\right] + D(\phi_S) \cdot c_i' \phi_i \cdot \frac{\partial \eta_i'}{\partial \phi_i} \nonumber \\
        &= -\frac{\partial D({\phi}_S)}{\partial \phi_i} \left[D(\phi_S) \cdot \sum_{j \in S} \eta_j' + N(\phi_S)\right] + D(\phi_S) \cdot c_i' \phi_i \cdot \frac{\phi_i \frac{\partial \Lambda^*_i(\phi_i)}{\partial \phi_i} - \Lambda^*_i(\phi_i)}{\phi^2_i} \nonumber \\
        &= -\frac{\partial D({\phi}_S)}{\partial \phi_i} \left[D(\phi_S) \cdot \sum_{j \in S} \eta_j' + N(\phi_S)\right] - D(\phi_S) c_i' \eta_i' + D(\phi_S) \cdot c_i' \cdot \underbrace{\frac{\partial \Lambda^*_i(\phi_i)}{\partial \phi_i}}_{ \leq 0} \nonumber \\
        &\leq - \underbrace{\frac{\partial D({\phi}_S)}{\partial \phi_i}}_{\leq 0} \left[D(\phi_S) \cdot \sum_{j \in S} \eta_j' + N(\phi_S)\right] - D(\phi_S) c_i' \eta_i' \nonumber \\
        \Rightarrow \; \frac{N}{D} &\geq -\frac{c_i' \eta_i'}{\left(\frac{\partial D}{\partial \phi_i}\right)} - \sum_{j \in S} \eta_j'. \label{eqn:ndlb1}
      \end{align}
      Since $D = \lambda_j - c_j' \phi_j$ and $\sum_{j \in S} c_j' =
      1$, we have 
      \[ \sum_{j \in S} \frac{\lambda_j - D}{\phi_j} = 1 \; \Rightarrow \; D = \frac{\sum_{j \in S} \frac{\lambda_j}{\phi_j} - 1}{\sum_{j \in S} \frac{1}{\phi_j}}.\]
      Using this, some calculus yields 
      \begin{align}
        -\frac{c_i'}{\left(\frac{\partial D}{\partial \phi_i}\right)} &= \phi_i \cdot \sum_{j \in S} \frac{1}{\phi_j} \nonumber \\
        \Rightarrow \; \frac{N}{D} &\geq \eta_i'\phi_i \cdot \sum_{j \in S} \frac{1}{\phi_j}  - \sum_{j \in S} \eta_j' \quad \mbox{(by (\ref{eqn:ndlb1}))} \nonumber \\
        \Rightarrow \; f^S(\hat{\phi}_S) = \frac{N}{D} &\geq \left( \max_{i \in S} \hat{\eta}_i'\hat{\phi}_i \right)\cdot \sum_{j \in S} \frac{1}{\hat{\phi}_j}  - \sum_{j \in S} \hat{\eta}_j'. \label{eqn:ndlb2}
      \end{align}
      Now consider any tuple $\{d_i'\}_{i \in S}$ with $d_i' \geq 0$
      and $\sum_{i \in S} d_i'= 1$. Let $\hat{c}'$ be the (unique)
      tuple corresponding to $\hat{\phi}_S$ such that $0 < \lambda_i -
      \hat{c}_i' \hat{\phi}_i = \lambda_j - \hat{c}_j' \hat{\phi}_j$
      $\forall i,j \in S$. Let $\delta_i' \bydef d_i' - \hat{c}_i'$
      for all $i \in S$, so that $\sum_{i \in S} \delta_i' = 0$, and
      for $t \in [0,1]$, define
      \[ g(t) \bydef \frac{\sum_{i \in S}(\hat{c}_i' + t\delta_i')
        \Lambda^*_i(\hat{\phi}_i)}{\max_{i \in S} (\lambda_i -
        (\hat{c}_i' + t\delta_i')\hat{\phi}_i)}, \] so that $g(0) =
      f^S(\hat{\phi}_S)$. To prove the second part of the lemma, we
      proceed to show that $g(0) \geq g(1)$. First, note that since
      (for $t = 0$) $\lambda_i - \hat{c}_i' \hat{\phi}_i$ is equal
      for all $i \in S$, we can assume without loss of generality that
      $1 \in S$ and that the denominator in the definition of $g(t)$
      above is equal to $\lambda_1 - \hat{c}_1' \hat{\phi}_1 - t
      \delta_1' \hat{\phi}_1 = D(\hat{\phi}_S) - t \delta_1'
      \hat{\phi}_1$, with $\delta_1' \hat{\phi}_1 \leq \delta_i'
      \hat{\phi}_i$ for each $i \in S$. This makes $g(\cdot)$ a
      quotient of affine functions on $[0,1]$, and thus monotone. It
      just remains to show that $g'(t) \leq 0$ for all $t$. 

      Consider
      \begin{align*}
        \frac{d}{dt} g(t) = \frac{d}{dt} \left(\frac{N + t \sum_{i \in S} \delta_i' \Lambda^*(\hat{\phi}_i)}{D - t\delta_1' \hat{\phi}_1}\right) &\leq 0 \\
        \Leftrightarrow \quad D \cdot \sum_{i \in S} \delta_i' \Lambda^*(\hat{\phi}_i) + N \cdot \delta_1' \hat{\phi}_1 &\leq 0 \\
        \Leftrightarrow \quad \frac{\sum_{i \in S} \delta_i' \Lambda^*(\hat{\phi}_i)}{\underbrace{-\delta_1' \hat{\phi}_1}_{> 0}} \leq \frac{N}{D} \\
        \Leftrightarrow \quad \sum_{i \in S} \left( \frac{\delta_i' \hat{\phi}_i}{-\delta_1' \hat{\phi}_1} \right) \hat{\eta}_i' \leq \frac{N}{D}.
      \end{align*}
      By (\ref{eqn:ndlb2}), we will be done if we can show that 
      \[ \left( \max_{j \in S} \hat{\eta}_j'\hat{\phi}_j \right)\cdot \sum_{j \in S} \frac{1}{\hat{\phi}_j}  - \sum_{j \in S} \hat{\eta}_j' \geq \sum_{j \in S} \left( \frac{\delta_j' \hat{\phi}_j}{-\delta_1' \hat{\phi}_1} \right) \hat{\eta}_j'.\]
      But notice that
      \begin{align*}
        \sum_{j \in S} \left( \frac{\delta_j' \hat{\phi}_j}{-\delta_1' \hat{\phi}_1} \right) \hat{\eta}_j' &+ \sum_{j \in S} \hat{\eta}_j' = \sum_{j \in S} \hat{\eta}_j' \left[ 1 + \frac{\delta_j' \hat{\phi}_j}{-\delta_1' \hat{\phi}_1}\right] \\
        &\leq \left( \max_{j \in S} \hat{\eta}_j' \hat{\phi}_j \right) \sum_{j \in S} \frac{1}{\hat{\phi}_j} \left[ 1 + \frac{\delta_j' \hat{\phi}_j}{-\delta_1' \hat{\phi}_1}\right] \\
        &= \left( \max_{j \in S} \hat{\eta}_j' \hat{\phi}_j \right) \sum_{j \in S} \frac{1}{\hat{\phi}_j} + \left( \max_{j \in S} \hat{\eta}_j' \hat{\phi}_j \right) \underbrace{\sum_{j \in S} \frac{\delta_j'}{-\delta_1' \hat{\phi}_1}}_{= 0} \\
        &= \left( \max_{j \in S} \hat{\eta}_j' \hat{\phi}_j \right) \sum_{j \in S} \frac{1}{\hat{\phi}_j}.
      \end{align*}
      This completes the proof of the lemma.
  \end{proof}

  Using this lemma, we can finish the proof of the proposition. Let $S$ be
  a subset of channels that achieves the minimum in (\ref{eqn:j*lb2});
  according to the lemma there exists $\hat{\phi}_S$ that infimizes
  $f^S$ over $\mathcal{D}_S$. Extend $\hat{\phi}_S \in
  \mathbb{R}^{|S|}$ to an $N$-tuple $\hat{\phi}' \in \mathbb{R}^{N}$
  by setting coordinates $i \notin S$ to their respective mean channel
  rates $\mathbb{E}[R_i]$. This means that $\Lambda^*_i(\hat{\phi}_i)
  = 0$ for $i \notin S$, so for any $N$-tuple $e'$ on the simplex,
  because $\sum_{i \in S} e'_i \leq 1$, the lemma gives
  \begin{align*}
    &\frac{\sum_{i \in S} e_i' \Lambda^*_i(\hat{\phi}_i)}{\max_{1 \leq i \leq N} (\lambda_i - e_i' \hat{\phi}_i)} \leq \frac{\sum_{i \in S} e_i' \Lambda^*_i(\hat{\phi}_i)}{\max_{i \in S} (\lambda_i - e_i' \hat{\phi}_i)} \leq f^S(\hat{\phi}_S) \leq J_*,
  \end{align*}
  completing the proof.

\section{Proof of Theorem \ref{thm:mexpopt}}
\label{app:mexpopt}
  Consider an LFSP, specifically the component functions $(\di q, \di
  c, \di g )$, over time $[0,S]$ under the Max-Exp scheduling
  algorithm. Fix a regular point $s \in [0,S]$. Let 
  \[ \mathcal{O}^* \bydef \arg\max_{\alpha \in \mathcal{O}}
  \Phi_\alpha(\di q(s)) \subseteq \mathcal{O} \] be the subcollection of
  ``active'' observable subsets at time $s$, i.e., the subsets picked
  by Max-Exp at $s$. The regularity of point $s$ and the dynamics of
  the Max-Exp rule (Lemma \ref{lem:lfspdynamics}) implies that the
  derivatives $\left. \frac{d}{du} \Psi_\alpha(u) \right|_{u = s} $
  across all $\alpha \in \mathcal{O}^*$, and $\left. \frac{d}{du}
    \Psi(u) \right|_{u = s}$, are equal to $w'$, say. For each such
  $\alpha$,
  \begin{align}
    w' &= \sum_{i \in \alpha} e^{\di q_i(s) + b_i}(\underbrace{\lambda_i(s)}_{= \lambda_i} - v_i(s)) \nonumber \\
    &= \ip{e^{\di q(s) + b}}{\lambda}_\alpha - \ip{e^{\di q(s) + b}}{v(s)}_\alpha \nonumber \\    
    &= \ip{e^{\di q(s) + b}}{\lambda}_\alpha - \di \dot{c}_\alpha(s) \ip{e^{\di q(s) + b}}{\frac{v(s)}{\di \dot{c}_\alpha(s)}}_\alpha \nonumber \\   
    &= \ip{e^{\di q(s) + b}}{\lambda}_\alpha - \di \dot{c}_\alpha(s) \left[ \max_{\eta_\alpha \in V_{\phi_\alpha(s)}} \ip{e^{\di q(s) + b}}{\eta_\alpha}_\alpha \right]. \label{eqn:w'}
  \end{align}
  For notational convenience, let us denote, for each $\alpha$,
  \[   \rho_\alpha \bydef \ip{e^{\di q(s) + b}}{\lambda}_\alpha, \]
  \[    \xi_\alpha \equiv \xi_\alpha\left(\phi_\alpha(s)\right) \bydef \max_{\eta_\alpha \in V_{\phi_\alpha(s)}} \ip{e^{\di q(s) + b}}{\eta_\alpha}_\alpha = \sum_{r \in \mathcal{R}_\alpha} \phi_{\alpha r} (s) \left[\max_{i \in \alpha} \mu^\alpha_{ri} \cdot e^{\di q_i(s)+b_i}  \right]. \]
  With this, (\ref{eqn:w'}) becomes
  \[ w' \equiv w'(\phi_\alpha(s)) = \rho_\alpha - \di
  \dot{c}_\alpha(s) \cdot \xi_\alpha \left(\phi_\alpha(s) \right). \]
  For fixed $\di q(s) = q$, the map $\xi_\alpha: \Pi_\alpha \to
  \mathbb{R}^+$ is linear and hence continuous. Thus, $\xi_\alpha$
  induces a good rate function $\tilde{\Lambda}^*_\alpha$ on
  $\mathbb{R}^+$ \cite{dembo}, given by
  \[\tilde{\Lambda}^*_\alpha(\nu'_\alpha) \bydef \inf \left\{ \Lambda^*_\alpha(\phi_\alpha): \phi_\alpha \in \Pi_\alpha, \; \xi_\alpha(\phi_\alpha) = \nu'_\alpha \right\}.\]
  We have, with $\mathcal{O}^* \subseteq \mathcal{O}^*$ fixed,
  \begin{align}
    &\frac{\sum_{\alpha \in \mathcal{O}^*} \di \dot{c}_\alpha(s) \Lambda^*_\alpha(\phi_\alpha(s))}{\dot{\Psi}(s)} = \frac{\sum_{\alpha \in \mathcal{O}^*} \di \dot{c}_\alpha(s) \Lambda^*_\alpha(\phi_\alpha(s))}{\rho_\alpha - \di \dot{c}_\alpha(s) \cdot \xi_\alpha(\phi_\alpha(s))} \nonumber \\
    &\geq \inf \left\{ \left.\frac{\sum_{\alpha \in \mathcal{O}^*} c'_\alpha \tilde{\Lambda}^*_\alpha(\nu'_\alpha)}{w'}\right| w' > 0, \nu'_\alpha \geq 0, c'_\alpha \geq 0, \sum_{\alpha \in \mathcal{O}^*} c'_\alpha = 1, \rho_\alpha - c'_\alpha \nu'_\alpha = w'  \; \; \forall \alpha \in \mathcal{O}^*  \right\}. \label{eqn:subsetreduction}
  \end{align}
  This exactly corresponds to infimizing the function $f^S$, given in
  (\ref{eqn:fsdef}), over the corresponding domain $\mathcal{D}_S$ for
  the case of singleton observable subsets/individual channels. The
  correspondence becomes clear when, keeping $\di q$ fixed, we
  \emph{identify each observable subset $\alpha$ with a hypothetical
    queue} having an arrival rate of $\rho_\alpha$ and a ``twisted''
  service rate of $\nu'_\alpha$. Under this correspondence, and due to
  the fact that $\tilde{\Lambda}^*$ is a (good) rate function, we can
  employ the same arguments as those in the proof of Lemma
  \ref{lem:fsmin} to get that
  \begin{enumerate}
  \item There exist $\hat{\nu}'_\alpha \geq 0$, $\alpha \in
    \mathcal{O}^*$, determining \emph{unique} $\hat{w}' > 0$ and
    $\hat{c}'_\alpha \geq 0$ feasible for (\ref{eqn:subsetreduction}),
    such that the infimum (\ref{eqn:subsetreduction}) is attained at
    $(\nu'_\alpha)_{\alpha \in \mathcal{O}^*}$.
  \item For every $(d'_\alpha)_{\alpha \in \mathcal{O}^*} \geq 0$
    with $\sum_{\alpha \in \mathcal{O}^*} d'_\alpha = 1$, we have
    \begin{align}
      \label{eqn:nuopt}
      \frac{\sum_{\alpha \in \mathcal{O}^*} \hat{c}'_\alpha \tilde{\Lambda}^*_\alpha(\hat{\nu}'_\alpha)}{\hat{w}'} &\geq \frac{\sum_{\alpha \in \mathcal{O}^*} d'_\alpha \tilde{\Lambda}^*_\alpha(\hat{\nu}'_\alpha)}{\max_{\alpha \in \mathcal{O}^*} (\rho_\alpha - d'_\alpha \hat{\nu}'_\alpha)}.
    \end{align}
  \end{enumerate}
  For each of the optimizing $\hat{\nu}'_\alpha$ above, by the
  lower-semicontinuity of $\Lambda^*_\alpha$, we can find
  $\hat{\phi}'_\alpha \in \Pi_\alpha$ such that
  $\xi_\alpha(\hat{\phi}'_\alpha) = \hat{\nu}'_\alpha$ and
  $\tilde{\Lambda}^*_\alpha(\hat{\nu}'_\alpha) =
  \Lambda^*_\alpha(\hat{\phi}'_\alpha)$. Consider an arbitrary vector
  $(d'_\alpha)_{\alpha \in \mathcal{O}^*} \geq 0$ with $\sum_{\alpha
    \in \mathcal{O}^*} d'_\alpha = 1$. Returning to our original LFSP
  $(\di q, \di c, \di g)$, from (\ref{eqn:subsetreduction}),
  (\ref{eqn:nuopt}) and the previous remark, we can write
  \begin{align}
    \frac{\sum_{\alpha \in \mathcal{O}^*} \di \dot{c}_\alpha(s) \Lambda^*_\alpha(\phi_\alpha(s))}{\dot{\Psi}(\di q(s))} &\geq \frac{\sum_{\alpha \in \mathcal{O}^*} d'_\alpha \Lambda^*_\alpha(\hat{\phi}'_\alpha)}{\max_{\alpha \in \mathcal{O}^*} (\rho_\alpha - d'_\alpha \cdot \xi_\alpha(\hat{\phi}'_\alpha))}. \label{eqn:aftersingleton}
  \end{align}
  Considering any $\alpha \in \mathcal{O}^*$, we have
  \begin{align}
    \rho_\alpha - d'_\alpha \cdot \xi_\alpha(\hat{\phi}'_\alpha) &= \ip{e^{\di q(s) + b}}{\lambda}_\alpha - d'_\alpha \cdot \max_{\eta_\alpha \in V_{\hat{\phi}'_\alpha}} \ip{e^{\di q(s) + b}}{\eta_\alpha}_\alpha \nonumber \\
    &= \ip{e^{\di q(s) + b}}{\lambda}_\alpha - \max_{\upsilon_\alpha \in d'_\alpha V_{\hat{\phi}'_\alpha}} \ip{e^{\di q(s) + b}}{\upsilon_\alpha}_\alpha \nonumber \\
    &= \min_{\upsilon_\alpha \in d'_\alpha  V_{\hat{\phi}'_\alpha}} \sum_{i \in \alpha} e^{\di q_i(s) + b_i} [\lambda_i - \upsilon_{\alpha i}]. \label{eqn:rhsdenom}
  \end{align}
  Thanks to the key Lemma 12.2 in \cite{stol:ldexp}, we have that
  there exist
    \[ l_\alpha > 0, \quad \di q^*_{\alpha i} \in [-\infty,\infty), i \in \alpha, \quad \mbox{and} \]
    \[\upsilon^*_\alpha \in \arg\max_{\upsilon_\alpha \in d'_\alpha V_{\hat{\phi}'_\alpha}} \ip{e^{\di q^*_\alpha + b}}{\upsilon_\alpha}_\alpha \] 
    such that
    \begin{align*}
      \forall i \in \alpha \quad \lambda_i - \upsilon^*_{\alpha i} &= l_\alpha, \quad \mbox{if} \; e^{\di q^*_{\alpha i}} > 0, \\
      \lambda_i - \upsilon^*_{\alpha i} &\leq l_\alpha, \quad \mbox{if} \; e^{\di q^*_{\alpha i}} = 0, \quad \mbox{and} \\
      \min_{\upsilon_\alpha \in d'_\alpha  V_{\hat{\phi}'_\alpha}} \sum_{i \in \alpha} e^{\di q_i(s) + b_i} [\lambda_i - \upsilon_{\alpha i}] &\leq \Psi_\alpha(\di q(s)) \; l_\alpha \leq  \Psi(\di q(s)) \; l_\alpha \\
      \Rightarrow \quad \max_{\alpha \in \mathcal{O}^*} \min_{\upsilon_\alpha \in d'_\alpha  V_{\hat{\phi}'_\alpha}} \sum_{i \in \alpha} e^{\di q_i(s) + b_i} [\lambda_i - \upsilon_{\alpha i}] & \leq  \Psi(\di q(s)) \cdot \max_{\alpha \in \mathcal{O}^*} l_\alpha. \\
    \end{align*}
    Using this with (\ref{eqn:aftersingleton}) and (\ref{eqn:rhsdenom}) yields
    \begin{align}
      &\frac{\sum_{\alpha \in \mathcal{O}^*} \di \dot{c}_\alpha(s) \Lambda^*_\alpha(\hat{\phi}'_\alpha(s))}{\dot{\Psi}(\di q(s))} \geq \frac{\sum_{\alpha \in \mathcal{O}^*} d'_\alpha \Lambda^*_\alpha(\hat{\phi}'_\alpha)}{\Psi(\di q(s)) \cdot \max_{\alpha \in \mathcal{O}^*} l_\alpha} \nonumber \\
      \Rightarrow \quad &\frac{\sum_{\alpha \in \mathcal{O}^*} \di \dot{c}_\alpha(s) \Lambda^*_\alpha(\hat{\phi}'_\alpha(s))}{\left[ \frac{\dot{\Psi}(\di q(s))}{\Psi(\di q(s))} \right]} \geq \frac{\sum_{\alpha \in \mathcal{O}^*} d'_\alpha \Lambda^*_\alpha(\hat{\phi}'_\alpha)}{\max_{\alpha \in \mathcal{O}^*} l_\alpha} \nonumber \\
      \Rightarrow \quad &\frac{\sum_{\alpha \in \mathcal{O}^*} \di \dot{c}_\alpha(s) \Lambda^*_\alpha(\hat{\phi}'_\alpha(s))}{\dot{\Phi}(\di q(s))} \geq \frac{\sum_{\alpha \in \mathcal{O}^*} d'_\alpha \Lambda^*_\alpha(\hat{\phi}'_\alpha)}{\max_{\alpha \in \mathcal{O}^*} l_\alpha} \nonumber \\
      &\geq \frac{\sum_{\alpha \in \mathcal{O}^*} d'_\alpha \Lambda^*_\alpha(\hat{\phi}'_\alpha)}{\max_{\alpha \in \mathcal{O}^*} \max_{i \in \alpha} (\lambda_i - \upsilon^*_{\alpha i})} \nonumber \\
      &\geq \frac{\sum_{\alpha \in \mathcal{O}^*} d'_\alpha \Lambda^*_\alpha(\hat{\phi}'_\alpha)}{\max_{\alpha \in \mathcal{O}^*} \max_{\upsilon_\alpha \in d'_\alpha V_{\hat{\phi}'_\alpha}} \max_{i \in \alpha} (\lambda_i - \upsilon_{\alpha i})} \nonumber \\
      &\geq \frac{\sum_{\alpha \in \mathcal{O}^*} d'_\alpha \Lambda^*_\alpha(\hat{\phi}'_\alpha)}{\max_{\alpha \in \mathcal{O}^*, v_\alpha \in V_{\hat{\phi}'_\alpha}} \max_{i \in \alpha} (\lambda_i - d'_\alpha v_{\alpha i})}. \label{eqn:ubconnection1}
    \end{align}
    The above relation holds for any $(d'_\alpha)_{\alpha \in
      \mathcal{O}^*} \geq 0$ with $\sum_{\alpha \in \mathcal{O}^*}
    d'_\alpha = 1$. Let $(c'_\alpha)_{\alpha \in \mathcal{O}} \geq 0$
    be such that $\sum_{\alpha \in \mathcal{O}} c'_\alpha = 1$.  For
    each $\alpha \in \mathcal{O} \setminus \mathcal{O}^*$, define
    $\hat{\phi}'_\alpha$ to be the natural probability distribution of
    sub-states in $\alpha$, so that
    $\Lambda^*_\alpha(\hat{\phi}'_\alpha) = 0$ for such $\alpha$. We
    can write,
    \begin{align}
      &\frac{\sum_{\alpha \in \mathcal{O}} c'_\alpha \Lambda^*_\alpha(\hat{\phi}'_\alpha)}{\max_{\alpha \in \mathcal{O}, v_\alpha \in V_{\hat{\phi}'_\alpha}} \max_{i \in \alpha} (\lambda_i - c'_\alpha v_{\alpha i})} = \frac{\sum_{\alpha \in \mathcal{O}^*} c'_\alpha \Lambda^*_\alpha(\hat{\phi}'_\alpha)}{\max_{\alpha \in \mathcal{O}, v_\alpha \in V_{\hat{\phi}'_\alpha}} \max_{i \in \alpha} (\lambda_i - c'_\alpha v_{\alpha i})} \nonumber \\
      &\leq \frac{\sum_{\alpha \in \mathcal{O}^*} c'_\alpha \Lambda^*_\alpha(\hat{\phi}'_\alpha)}{\max_{\alpha \in \mathcal{O}^*, v_\alpha \in V_{\hat{\phi}'_\alpha}} \max_{i \in \alpha} (\lambda_i - c'_\alpha v_{\alpha i})} \nonumber \\
      &\leq \frac{\sum_{\alpha \in \mathcal{O}^*} \tilde{c}'_\alpha \Lambda^*_\alpha(\hat{\phi}'_\alpha)}{\max_{\alpha \in \mathcal{O}^*, v_\alpha \in V_{\hat{\phi}'_\alpha}} \max_{i \in \alpha} (\lambda_i - \tilde{c}'_\alpha v_{\alpha i})}, \quad \mbox{where} \; \tilde{c}' \bydef \frac{c'}{\sum_{\alpha \in \mathcal{O}} c'_\alpha}. \label{eqn:ubconnection2}
    \end{align}
    Putting (\ref{eqn:ubconnection1}) and (\ref{eqn:ubconnection2})
    together, we have, for our original LFSP, that
    \begin{align}
      \frac{\sum_{\alpha} \di \dot{c}_\alpha(s) \Lambda^*_\alpha(\hat{\phi}'_\alpha(s))}{\dot{\Phi}(\di q(s))} &\geq \sup_{\substack{\sum_\alpha c_\alpha' = 1\\c_\alpha' \geq 0}} \left[ \frac{\sum_{\alpha} c'_\alpha \Lambda^*_\alpha(\hat{\phi}'_\alpha)}{\max_{\alpha, v_\alpha \in V_{\hat{\phi}'_\alpha}} \max_{i \in \alpha} (\lambda_i - c'_\alpha v_{\alpha i})} \right] \nonumber \\
      &\geq -\liminf_{n \to \infty} \frac{1}{n} \log \mathbb{P}^\pi\left[||q^{(n)}(0)||_\infty \geq 1\right], \label{eqn:lfsptoub}
    \end{align}
    for the stationary measure $\mathbb{P}^\pi$ of any stabilizing
    scheduling policy, by Theorem \ref{thm:ubanysubset}. Infimizing
    (\ref{eqn:lfsptoub}) over all valid LFSPs and using Proposition
    \ref{prop:lfsplb} yields
    \begin{align*}
      J_* &\geq -\liminf_{n \to \infty} \frac{1}{n} \log \mathbb{P}^\pi\left[||q^{(n)}(0)||_\infty \geq 1\right],
    \end{align*}
    which finishes the proof.

\end{document}